\documentclass[%
 reprint,
 superscriptaddress,
 amsmath,amssymb,
 aps,
 prx,  
 floatfix,
]{revtex4-2}

\usepackage{graphicx}  
\usepackage{dcolumn}   
\usepackage{bm}        
\usepackage[colorlinks=true, allcolors=blue]{hyperref} 
\usepackage{xcolor}    
\usepackage{amsthm}
\usepackage[linesnumbered,algoruled,boxed,lined]{algorithm2e}

\newtheorem{theorem}{Theorem}

\newtheorem{proposition}{Proposition}

\newtheorem{lemma}{Lemma}
\newtheorem{corollary}{Corollary}

\DeclareMathOperator{\sign}{sign}
\DeclareMathOperator{\Tr}{Tr}
\DeclareMathOperator{\Var}{Var}

\begin{document}

\title{Stratified Sampling for Quasi-Probability Decompositions}
\author{Joshua W. Dai}
\email{joshua.dai@materials.ox.ac.uk}
\affiliation{Department of Materials, University of Oxford, Parks Road, OX2 6UD, United Kingdom}
\affiliation{Mathematical Institute, University of Oxford, Woodstock Road, Oxford OX2 6GG, United Kingdom}
\author{B\'alint Koczor}
\email{balint.koczor@maths.ox.ac.uk}
\affiliation{Mathematical Institute, University of Oxford, Woodstock Road, Oxford OX2 6GG, United Kingdom}

\date{\today}
\begin{abstract}
    Quasi-probability decompositions (QPDs) have proven essential in many quantum algorithms and protocols---one replaces a ``difficult'' quantum circuit with an ensemble of ``easier'' circuit variants whose weighted outcomes reproduce any target observable. This, however, inevitably yields an increased configuration variance beyond Born-rule shot noise. We develop a broad framework for accounting for and reducing this variance and prove that stratified sampling---under ideal proportional allocation---results in an unbiased estimator with a variance that is never worse than naïve sampling (with equality only in degenerate cases).
    Furthermore, we provide a classical dynamic programme to enable stratification on arbitrary product-form QPDs. Numerical simulations of typical QPDs, such as Probabilistic Error Cancellation (PEC) and Probabilistic Angle Interpolation (PAI), demonstrate constant-factor reductions in overall variance (up to $\sim 60$--$80\%$ in an oracle model) and robust $\sim 10\%$ savings in the pessimistic single-shot regime. Our results can be applied immediately to reduce the net sampling cost of practically relevant QPDs that are commonly used in near term and early fault-tolerant algorithms without requiring additional quantum resources.
\end{abstract}
\maketitle
\section{Introduction}
\label{sec:introduction}

Classical randomisation has been used to improve the effectiveness  of many quantum algorithms and protocols,
particularly in the context of near-term and early fault-tolerant quantum devices~\cite{preskill_beyond_2025, katabarwa_early_2024, zimboras_myths_2025}. Rather than repeatedly executing a single fixed quantum process, one samples a family of circuits or channels and combines outcomes in post-processing. This paradigm appears in quasi-probability decomposition (QPD) methods for simulation \cite{pashayan_estimating_2015, bennink_unbiased_2017, howard_application_2017, seddon_quantifying_2021}, error mitigation \cite{temme_error_2017, endo_practical_2018, bravyi_mitigating_2021, strikis_learning-based_2021, takagi_universal_2023, piveteau_error_2021, cai_quantum_2023, van_den_berg_probabilistic_2023, luthra_unlocking_2025, jeon_quantum_2026}, probabilistic synthesis \cite{koczor_probabilistic_2024, koczor_sparse_2024} and circuit cutting/knitting \cite{peng_simulating_2020, mitarai_constructing_2021, lowe_fast_2023}. Similar algorithmic randomisation ideas also form the basis for techniques such as randomised compiling and twirling \cite{wallman_noise_2016, hashim_randomized_2021, cai_constructing_2019}, randomised Hamiltonian simulation (e.g.\ qDRIFT) \cite{childs_faster_2019, campbell_random_2019, chen_concentration_2021, faehrmann_randomizing_2022, kiumi_te-pai_2024}, and randomised measurement schemes such as classical shadows \cite{huang_predicting_2020, huang_efficient_2021, jnane_quantum_2024, elben_randomized_2023}. While these techniques are often associated with NISQ applications \cite{preskill_quantum_2018}, randomisation can yield substantial resource savings in early fault-tolerant applications and beyond, through ideas such as statistical phase estimation \cite{wan_randomized_2022, kshirsagar_proving_2024, gunther_phase_2025, vu_low_2024}, linear algebra solvers via randomised linear combination of unitaries (LCU) \cite{wang_qubit-efficient_2024}, and randomised QSVT \cite{wang_randomized_2025} or randomised QSP \cite{martyn_halving_2025}. 

While QPDs can reduce quantum resource requirements or mitigate noise, they do introduce an additional variance source beyond intrinsic Born-rule shot noise: \emph{configuration variance} due to randomisation over quantum circuits/channels. At fixed target precision, this extra variance translates into an overhead in the number of distinct circuit variants (and hence total executions) required. For QPD-based protocols, the dominant scaling of this overhead is well known to be exponential in circuit size through the QPD $1$-norm $\|g\|_1$ \cite{takagi_fundamental_2022, zimboras_myths_2025}. Even when this exponential factor is tolerable, reducing constant prefactors in the variance can materially extend the regime where QPD-based methods are practically useful.

In this work we develop a sampling-design viewpoint on configuration variance and import variance-reduction tools from classical statistics. We model a hybrid protocol as sampling from a joint distribution over channels and outcomes, apply the law of total variance to separate Born-rule and configuration contributions, and then use conditioning sampling (Rao--Blackwellisation) to reduce the classical term without changing the target expectation \cite{kroese_handbook_2011}. For concreteness, we specialise to product-form QPDs: they are widely used, their randomisation structure is explicit, and they provide a clean setting in which algorithmic stratification can be made fully constructive. 

Our main technical contribution is to adapt \emph{stratified sampling} \cite{neyman_two_1992, cochran_sampling_1977, lohr_sampling_2021} to product-form QPD estimators and make it algorithmically explicit.
Specifically, we:
(i) prove that proportional allocation on any statistic is unbiased and (under ideal proportional quotas) never worse than naïve Monte Carlo at fixed configuration budget, with strict improvement except in degenerate cases;
(ii) construct a universal, index-based stratification of the configuration space using a \emph{counts vector} of local QPD indices;
(iii) for counts-vector stratification, give a dynamic-programming procedure that computes all stratum weights and supports exact conditional sampling, with pre-processing time $O(d\,\nu^{d})$ and $O(\nu^d)$ memory for local width $d$ and depth $\nu$ and per-sample conditional generation cost that matches naïve up to constant-factor $d$ (or identical assuming $O(d)$ categorical draws). 

Stratification changes the classical sampling plan and requires only classical pre- and post-processing and a conditional sampler, and does not need additional quantum resources beyond a naïve implementation. A schematic is shown in Fig.~\ref{fig:schematic}. We then verified our approach on first-order Trotter TFIM circuits with probabilistic angle interpolation (PAI) and probabilistic error cancellation (PEC), observing $\sim 10\%$ variance reductions in the worst-case shot noise dominant regimes and substantially larger reductions of variance (up to $\sim 60$--$80\%$) in oracle-like regimes where shot noise is suppressed.

\paragraph*{Related work.}
Conceptually, our approach is orthogonal to work that optimises the QPD \emph{design} itself (e.g.\ to reduce $\|g\|_1$): given any fixed product-form QPD, we address the sampling-design question of how to draw configurations and aggregate their outcomes as efficiently as possible while preserving unbiasedness. We note that there is substantial work on reducing QPD overheads by designing more efficient decompositions and task-specific mitigation schemes \cite{jiang_physical_2021, guo_quantum_2022, piveteau_quasiprobability_2022, scheiber_reduced_2025, tran_locality_2023, eddins_lightcone_2024, zhao_retrieving_2024}. Our contribution is complementary: we focus on unbiased sampling design and classical pre-processing for a fixed product-form QPD. A detailed comparison and connections to previous work on classical variance-reduction techniques in the quantum context, such as \cite{shyamsundar_cv4quantum_2025, chen_faster_2025}, are given in Appendix~\ref{app:related-work}. 

\paragraph*{Outline.}
Section~\ref{sec:QPDs} develops the variance decomposition for product-form QPD estimators, introduces stratified sampling and proportional/Neyman allocation, and states the ``never worse than naïve'' guarantee. We then construct counts-vector stratification and an efficient conditional sampler. Section~\ref{sec:numerical results} presents numerical results on PAI and PEC for TFIM Trotter evolution, including single-shot and oracle regimes. Section~\ref{sec:outlook} concludes and outlines extensions, while proofs, implementation details, and extended discussion appear in the appendices.

\begin{figure*}
    \centering
    \includegraphics[width=\textwidth]{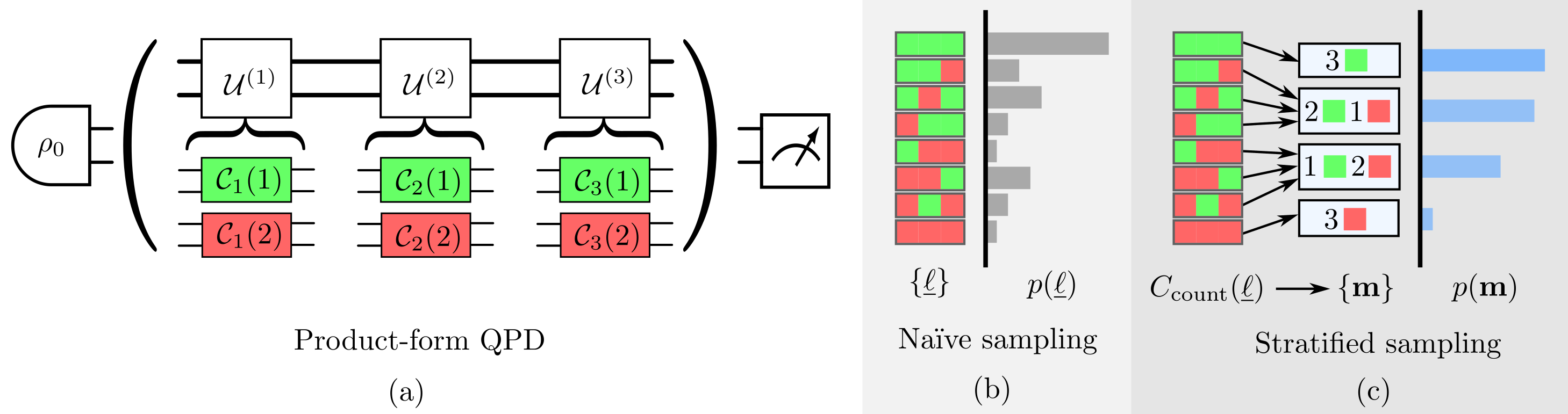}
    \caption{%
Schematic of product-form QPD sampling and counts-vector stratification on a toy three-gate circuit.
(a) Each ideal gate admits a local QPD over implementable primitives, so a circuit configuration is an index string $\underline\ell$.
(b) Naïve QPD sampling draws $\underline\ell$ directly from $p(\underline\ell)\propto |g(\underline\ell)|$.
(c) Counts-vector stratification groups configurations by the number of occurrences of each local primitive, yielding strata indexed by a counts vector $\mathbf M$ and a marginal distribution over strata obtained by aggregating the configuration masses.
}
    \label{fig:schematic}
\end{figure*}

\section{Product-form QPDs: variance decomposition and stratified sampling}
\label{sec:QPDs}

In this work we specialise to product-form quasi-probability decompositions (QPDs), encompassing probabilistic error cancellation, probabilistic synthesis, probabilistic angle interpolation (PAI), and related schemes. The input state $\rho_0$, measurement POVM $\{M_x\}$, and observable $O$ are fixed; all design randomness lies in sampling a circuit-level channel from a product ensemble of local QPDs. A more abstract treatment in terms of general channel randomisation is deferred to Appendix~\ref{app:general-channel-rand}.

\subsection{Local and circuit-level QPDs}
\label{subsec:local-circuit-qpd}
Suppose we have a target quantum process built from an ordered composition of \(\nu\) building blocks
\[
\mathcal U_{\mathrm{circ}}
= \mathcal U^{(\nu)}\circ\cdots\circ\mathcal U^{(1)},
\]
where each \(\mathcal U^{(i)}\) is a Hermiticity-preserving linear map on density operators \footnote{In many applications \(\mathcal U^{(i)}\) is CPTP (e.g.\ probabilistic synthesis or angle interpolation), but in others such as PEC it can be a formal inverse-noise map and hence not completely positive (see Appendix~\ref{app:qpd-instances}). The only operations we demand to be physical are the implementable primitives \(\mathcal C_i(\ell_i)\).}.

For each position \(i\in\{1,\dots,\nu\}\) we assume a \emph{local} quasi-probability decomposition
\begin{equation}
\mathcal U^{(i)}
= \sum_{\ell_i\in I_i}\gamma_i(\ell_i)\,\mathcal C_i(\ell_i),
\label{eq:local-QPD-main}
\end{equation}
where \(I_i=\{1,\dots,d_i\}\) indexes \emph{implementable} CPTP channels \(\mathcal C_i(\ell_i)\) and real coefficients \(\gamma_i(\ell_i)\in\mathbb R\) (possibly negative). Both \(\gamma_i\) and \(\mathcal C_i\) may vary with \(i\).

Although \(\mathcal U_{\mathrm{circ}}\) may fail to be CPTP when some \(\mathcal U^{(i)}\) are inverse maps (as in PEC), the estimator remains well-defined because each realised circuit \(\mathcal U(\underline\ell)\) is a composition of CPTP primitives.

Define the local 1-norm as $\|\boldsymbol\gamma_i\|_1 := \sum_{\ell_i\in I_i}|\gamma_i(\ell_i)|$ and the associated sampling distribution as $p_i(\ell_i) := |\gamma_i(\ell_i)|/\|\boldsymbol\gamma_i\|_1.$ Then, a single-gate random channel drawn by local index $\ell_i\sim p_i$ is a random variable $\widehat{\mathcal U}^{(i)}
:= \|\boldsymbol\gamma_i\|_1
\,\mathrm{sign}(\gamma_i(\ell_i))\,\mathcal C_i(\ell_i)$
which is unbiased in the sense that $\mathbb E[\widehat{\mathcal U}^{(i)}]=\mathcal U^{(i)}$.

Taking the product over all local decompositions yields a circuit-level QPD
\begin{equation}
\mathcal U_{\mathrm{circ}}
= \sum_{\underline\ell\in\mathbf I}
g(\underline\ell)\,\mathcal U(\underline\ell),
\label{eq:circ-QPD-main}
\end{equation}
where $\underline\ell=(\ell_1,\dots,\ell_\nu)$ is a multi-index labelling a configuration in the Cartesian product $\mathbf I := I_1\times\cdots\times I_\nu$, the coefficients factorise as $g(\underline\ell)=\prod_{i=1}^{\nu}\gamma_i(\ell_i)$, and the associated circuit variant can be expanded as $ \mathcal U(\underline\ell):= \mathcal C_\nu(\ell_\nu)\circ\cdots\circ\mathcal C_1(\ell_1).$

The circuit-level 1-norm inherits the product structure $\|g\|_1
:= \sum_{\underline\ell\in\mathbf I} |g(\underline\ell)|
=  \prod_{i=1}^{\nu}\|\boldsymbol\gamma_i\|_1$. This induces a product distribution over the categorical distribution for every configuration
\begin{equation}
p(\underline\ell)
:= \frac{|g(\underline\ell)|}{\|g\|_1}
= \prod_{i=1}^{\nu}p_i(\ell_i),
\label{eq:p-ell-main}
\end{equation}
corresponding to independently sampling each local index $\ell_i$ from $p_i$. Product-form QPDs are the standard construction in literature \cite{van_den_berg_probabilistic_2023, koczor_probabilistic_2024}. Because the number of circuit variants in \eqref{eq:circ-QPD-main} grows exponentially with the depth $\nu$, brute force enumeration of all variants becomes rapidly infeasible, thus motivating a Monte-Carlo sampling based approach.

\subsection{Per-sample estimator, configuration samples, and repetitions per configuration}
\label{subsec:per-sample-estimator}

We now embed the circuit-level QPD into an observable-estimation task and distinguish two natural resources:

\begin{itemize}
\item the number of \emph{configurations} (channels) sampled from $p(\underline\ell)$, denoted by $K$;
\item the number of \emph{Born-rule repetitions per configuration}, denoted by $R$.
\end{itemize}

A full experiment therefore uses $N = K R$ hardware executions (assuming that all circuit variants receive $R$ repeats uniformly). 

Let $\{M_x\}$ be a POVM and write the observable as $O = \sum_x O_x M_x$
with real eigenvalues $O_x\in\mathbb R$. We assume $O$ is bounded, so that $|O_x|\le \lVert O\rVert_{\infty}$ for a finite $\lVert O\rVert_{\infty}$  (for Pauli observables, $\lVert O\rVert_{\infty}=1$). The target quantity is the ideal expectation under the original circuit,
\[
\mu
:= \Tr\bigl[O\,\mathcal U_{\mathrm{circ}}(\rho_0)\bigr].
\]
For the $k$\textsuperscript{th} configuration draw, we sample a circuit variant $\underline\ell_k \sim p(\underline\ell)$ from~\eqref{eq:p-ell-main}. Conditioned on $\underline\ell_k$, we implement $\mathcal U(\underline\ell_k)$ on $\rho_0$, and measure with $\{M_x\}$ to obtain a raw outcome $O_x$. For example, for Pauli observables $O_x \in \{ \pm 1\}$. This is then scaled by the standard QPD weight $w(\underline{\ell}_k) = \|g\|_1 \sign(g(\underline{\ell_k}))$ to obtain the \emph{single-shot per configuration} value 
$\tilde{Y}_k := w(\underline{\ell}_k)\,O_x $.

Repeating $R$ times i.i.d. we obtain the outcomes $\tilde{Y}_{k, 1}, \dots \tilde{Y}_{k, R}$, which we then average to obtain the \emph{per-configuration} mean
\begin{equation}
Y_{k}^{(R)} := \frac{1}{R}\sum_{r=1}^R \tilde Y_{k, r}
\label{eq:Yk-QPD}
\end{equation}
The natural Monte Carlo estimator for $\mu$ is then the further averaging over the configurations.
\begin{equation}
\widehat{\mu}
:=\widehat{Y}_K^{(R)} =\frac{1}{K}\sum_{k=1}^{K} Y_{k}^{(R)}.
\label{eq:YK-main}
\end{equation}

Since $|O_x|\le \ \lVert O\rVert_{\infty}$ and $|w(\underline\ell_k)|=\|g\|_1$ we have
\[
|Y_{k}^{(R)}| \le \ \lVert O\rVert_{\infty}\,\|g\|_1
\quad\Longrightarrow\quad
\Var(Y_{k}^{(R)}) \le \ \lVert O\rVert_{\infty}^2\,\|g\|_1^2.
\]
The estimator $\widehat\mu$ is unbiased for all $K$ and $R$ by linearity and the definition of the QPD weights (Appendix~\ref{app:exact-enumeration}). Moreover, if $\|\boldsymbol\gamma_i\|_1\approx \alpha$ uniformly, then $\|g\|_1\approx \alpha^\nu$, yielding the usual exponential QPD overhead \cite{cai_quantum_2023}. This can be understood as coming from a deliberately introduced `sign problem' due to the QPD re-weighting needed to ensure unbiasedness. In what follows we call the i.i.d.\ configuration sampler~\eqref{eq:p-ell-main} as the ``naïve'' sampler and illustrate how one can reduce its \emph{classical} variance component (a constant prefactor at fixed $K,R$).

\subsection{Variance decomposition and naïve design}
\label{subsec:variance-decomp}
There are two sources of randomness in the final estimator $\widehat\mu$ (or $\widehat{Y}_K^{(R)}$) in \eqref{eq:YK-main}:

\begin{itemize}
\item classical randomness over configurations $\underline\ell_k$;
\item Born-rule noise at fixed configuration $\underline\ell_k$.
\end{itemize}

The key observation underlying our approach is that, for product-form QPDs, the first source arises from an artificial and highly structured classical randomization over circuit variants. Unlike Born-rule shot noise, this configuration variance is not intrinsic and can therefore be systematically reduced---independent of the QPD construction---by classical sampling design, with stratification providing a universal and provably safe primitive.

To see this, condition the baseline estimator on $\underline\ell$, then use the law of total variance~\cite[Theorem 4.4.7]{casella_statistical_2002} to obtain
\begin{equation}
\Var(Y^{(R)})=\frac{1}{R}\,\mathbb E_{\underline\ell}\!\big[\Var(\tilde Y\mid \underline\ell)\big]
+\Var_{\underline\ell}\!\big(\mathbb E[\tilde Y\mid \underline\ell]\big),
\label{eq:var-singleconfig-QPD}
\end{equation}
and with $K$ i.i.d.\ configuration draws,
\begin{equation}
\Var(\widehat{Y}_K^{(R)})=\frac{1}{K}\Var(Y^{(R)}).
\label{eq:var-KR-QPD}
\end{equation}
The first term of~\eqref{eq:var-singleconfig-QPD} is Born-rule noise, suppressed as $1/R$; the second term is from variance between different configurations.

We will compare sampling designs at the level of total variance $\Var(Y^{(R)})$, with $R$ regarded as part of the measurement model. In particular, changing the configuration-sampling scheme can only affect the corresponding \emph{configurational} contribution $\Var_{\underline\ell}(\mu_{\underline\ell})$; the Born-rule term is fixed by the choice of $R$, circuit, and measurement. 

Naïve sampling corresponds to drawing configurations i.i.d.\ from $p(\underline\ell)$, ignoring any additional structure of the underlying distribution. The next subsection introduces stratified sampling over configurations and shows how to use additional structure to guarantee variance reduction at fixed $K$ and $R$. Henceforth, we use $Y$ to denote the per-configuration average $Y^{(R)}$ under a fixed measurement model $R$.

\subsection{Stratified sampling over QPD configurations}
\label{subsec:stratified-qpd-main}

Let $S=S(\underline\ell)$ be a deterministic function of the configuration index. We promote $S$ to a random variable via the product distribution $\underline\ell\sim p(\underline\ell)$, and define the shorthand
\[
w_s := \Pr(S=s),\,
\mu_s := \mathbb E[Y|S=s],\,
\sigma_s^2 := \Var(Y| S=s).
\]
Such that $\sum_s w_s \mu_s = \mu$. We call each such possible value of $S$ a \emph{stratum}. A \emph{stratified} estimator fixes in advance how many configuration draws are allocated to each stratum. 

We now show how this reduces the variance. Let $K_s$ be the number of configuration draws with $S=s$, such that $K_s\ge 0$ and $\sum_s K_s = K.$ Then, within each stratum we form a sample mean, and then combine these in turn to create the stratified estimator
\begin{equation}
\widehat\mu_s := \frac{1}{K_s}\sum_{t=1}^{K_s}Y_{s,t}, \quad \widehat{Y}_K^{\mathrm{strat}}
:= \sum_s w_s\,\widehat\mu_s.
\label{eq:Ystrat-QPD}
\end{equation}
By construction $\mathbb E[\widehat\mu_s]=\mu_s$ and hence $\mathbb E[\widehat{Y}_K^{\mathrm{strat}}]=\mu$ (unbiasedness proofs are in Appendix~\ref{app:exact-enumeration}).

Assuming independent sampling across strata, the variance of the stratified estimator is given by
\begin{equation}
\Var(\widehat{Y}_K^{\mathrm{strat}})
= \sum_s \frac{w_s^2\,\sigma_s^2}{K_s}, 
\label{eq:var-strat-general-QPD}
\end{equation}
So we see that the total variance depends on the number of samples we assign to each stratum. There are two particularly natural ways of budgeting: proportional allocation with $K_s \propto w_s$, and Neyman allocation with $K_s \propto w_s\sigma_s$. Neyman allocation minimises~\eqref{eq:var-strat-general-QPD} over $\{K_s\}$ but requires (pilot) estimates of the within-stratum standard deviations $\sigma_s$. In this work we focus on proportional allocation, which depends only on the stratum weights $w_s$ and already enjoys a ``never-worse'' guarantee relative to naïve sampling; Neyman allocation and pilot schemes are discussed in Appendix~\ref{app:cost-pilots}. 

Under ideal proportional allocation, we have the following theorem.

\begin{theorem}[Proportional Stratification is Never Worse than Naïve Sampling]
\label{theorem:proportional variance}
Consider a QPD estimator $\widehat{Y}_K$ with a total budget of $K$ configurations. Let $w_s$ be the probability of a configuration falling into stratum $s$, and let $\mu_s$ and $\sigma_s^2$ denote the mean and variance of the estimator within that stratum, respectively. 

Under an idealized proportional allocation $K_s = K w_s$ (ignoring integrality), the variance reduction compared to naïve sampling is exactly proportional to the \textbf{between-stratum variance}:
    \begin{align}
    \Var(\widehat{Y}_K^{\mathrm{naive}}) - \Var(\widehat{Y}_K^{\mathrm{prop}}) &= \frac{1}{K} \sum_s w_s (\mu_s - \bar{\mu})^2 \\ \nonumber
    &= \frac{1}{K} \Var_s(\mu_s) \ge 0.
	\end{align}
Consequently, proportional stratification provides a strictly lower variance whenever the stratum means $\mu_s$ are not all identical.
\end{theorem}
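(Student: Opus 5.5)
The plan is to compute both variances explicitly and subtract, using the law of total variance with respect to the stratum label $S$.

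First, the naïve variance. By \eqref{eq:var-KR-QPD}, $\Var(\widehat{Y}_K^{\mathrm{naive}}) = \frac{1}{K}\Var(Y)$, where $Y$ is the per-configuration average under the fixed measurement model $R$. Since $S=S(\underline\ell)$ is a deterministic function of the configuration, it is a well-defined random variable on the same probability space. I would apply the law of total variance conditioning on $S$, just as was done for $\underline\ell$ in \eqref{eq:var-singleconfig-QPD}. This gives
\begin{equation*}
\Var(Y) = \mathbb E_S\bigl[\Var(Y\mid S)\bigr] + \Var_S\bigl(\mathbb E[Y\mid S]\bigr) = \sum_s w_s\sigma_s^2 + \sum_s w_s(\mu_s-\bar\mu)^2,
\end{equation*}
where $\bar\mu=\sum_s w_s\mu_s=\mu$. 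Here $\sigma_s^2$ is the total conditional variance of $Y$ in stratum $s$. It therefore includes both the Born-rule contribution and the within-stratum configurational contribution, so no separate treatment of shot noise is needed.

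Second, the proportional variance. Substituting $K_s = Kw_s$ into \eqref{eq:var-strat-general-QPD} gives
\begin{equation*}
\Var(\widehat{Y}_K^{\mathrm{prop}})=\sum_s \frac{w_s^2\sigma_s^2}{Kw_s} = \frac1K\sum_s w_s\sigma_s^2.
\end{equation*}
This uses the within-stratum sampling assumptions stated before \eqref{eq:var-strat-general-QPD}: draws are i.i.d. from the conditional law $p(\underline\ell\mid S=s)$ and independent across strata. Strata with $w_s=0$ receive no draws and contribute nothing to either expression, so they can be dropped.

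Subtracting the two expressions yields $\frac1K\sum_s w_s(\mu_s-\bar\mu)^2=\frac1K\Var_s(\mu_s)$. This is a weighted sum of squares with nonnegative weights, so it is $\ge 0$. It vanishes exactly when $\mu_s=\bar\mu$ for every $s$ with $w_s>0$, which gives the strictness claim.

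There is no real obstacle here. The only point needing care is bookkeeping: $\sigma_s^2$ in the stratified formula and in the total-variance decomposition must be the same quantity, namely $\Var(Y\mid S=s)$ with $Y=Y^{(R)}$. I would state this explicitly, and also note that the idealisation $K_s=Kw_s$ ignores integrality, as the theorem allows.
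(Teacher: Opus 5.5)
Your proposal is correct and follows the paper's proof essentially line for line: decompose $\Var(Y)$ by the law of total variance conditioned on $S$, use $\Var(\widehat{Y}_K^{\mathrm{naive}})=\Var(Y)/K$, substitute $K_s=Kw_s$ into \eqref{eq:var-strat-general-QPD} to get $\frac{1}{K}\sum_s w_s\sigma_s^2$, and subtract. Your extra bookkeeping slightly tightens what the paper leaves implicit: you carry the factor $1/K$ explicitly, identify $\sigma_s^2$ as $\Var(Y^{(R)}\mid S=s)$, and restrict the strictness condition to strata with $w_s>0$.
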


\begin{proof}
We use the Law of Total Variance to decompose the total variance of a single configuration sample $Y$ into ``within-stratum'' and ``between-stratum'' components:
\begin{equation}
\Var(Y) = \underbrace{\sum_s w_s \sigma_s^2}_{\text{Within}} + \underbrace{\sum_s w_s (\mu_s - \bar{\mu})^2}_{\text{Between}}.
\label{eq:total-variance-decomp}
\end{equation}
For $K$ independent naïve samples, the variance is simply $\Var(\widehat{Y}_K^{\mathrm{naive}}) = \frac{1}{K} \Var(Y)$. 

In contrast, the stratified estimator $\widehat{Y}_K^{\mathrm{prop}} = \sum_s w_s \widehat{\mu}_s$ (where $\widehat{\mu}_s$ is the sample mean of the $K_s$ shots taken in stratum $s$) has variance:
\begin{equation}
\Var(\widehat{Y}_K^{\mathrm{prop}}) = \sum_s w_s^2 \frac{\sigma_s^2}{K_s} = \sum_s w_s^2 \frac{\sigma_s^2}{K w_s} = \frac{1}{K} \sum_s w_s \sigma_s^2.
\label{eq:prop-variance-result}
\end{equation}
Subtracting \eqref{eq:prop-variance-result} from \eqref{eq:total-variance-decomp} yields the claim.
\end{proof}

\textbf{Practical Caveat.} The guarantee  in Theorem~\ref{theorem:proportional variance} is exact for ideal quotas. However, in general, $K_s = w_s K$ is not an integer, and so a rounding must be applied. In Appendix~\ref{app:allocation details} we bound the perturbation introduced by this rounding with a computable certificate, and in Appendix~\ref{app:numerical-methodology} find that this is empirically negligible in the reported numerics.

Theorem~\ref{theorem:proportional variance} can also be explicitly cast in terms of sample complexity through the following corollary.

\begin{corollary}[Sample complexity under proportional stratification]
\label{cor:sample complexity}
Fix the measurement model (i.e.\ fixed $R$) and let $\widehat{Y}_{K}^{\mathrm{naive}}$ and
$\widehat{Y}_{K}^{\mathrm{prop}}$ denote the naïve and proportional stratified estimators based on $K$
configurations. For any target standard error $\varepsilon^2>0$, define
\begin{align*}
K_{\mathrm{naive}}(\varepsilon^2)&:=\min\{K:\Var(\widehat{Y}_{K}^{\mathrm{naive}})\le \varepsilon^2\}\\
K_{\mathrm{prop}}(\varepsilon^2)&:=\min\{K:\Var(\widehat{Y}_{K}^{\mathrm{prop}})\le \varepsilon^2\}.
\end{align*}

Under ideal proportional quotas, 
\[
\frac{K_{\mathrm{prop}}(v)}{K_{\mathrm{naive}}(v)}=
\frac{\Var(Y^{\mathrm{prop}})}{\Var(Y^{\mathrm{naive}})}
=\rho\in[0,1].
\]
Equivalently, to achieve the same target error in a given estimation task, proportional stratification reduces the required configuration budget (and hence total executions $N=KR$) by a constant factor $0\le \rho\le 1$. 
\end{corollary}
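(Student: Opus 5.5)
Both variances scale exactly as $1/K$ with the same prefactor structure, so the plan is to reduce the corollary to Theorem~\ref{theorem:proportional variance} and then read off the ratio of minimal budgets. Define the single-configuration variances $V_{\mathrm{naive}} := \Var(Y)$ and $V_{\mathrm{prop}} := \sum_s w_s\sigma_s^2$. Then \eqref{eq:var-KR-QPD} gives $\Var(\widehat{Y}_K^{\mathrm{naive}}) = V_{\mathrm{naive}}/K$, and \eqref{eq:prop-variance-result} gives $\Var(\widehat{Y}_K^{\mathrm{prop}}) = V_{\mathrm{prop}}/K$ under ideal quotas. I would identify $\Var(Y^{\mathrm{naive}})$ and $\Var(Y^{\mathrm{prop}})$ in the statement with these two quantities, i.e.\ the variance per configuration budget unit.

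\textbf{Step 1: bound $\rho$.} By the total variance decomposition \eqref{eq:total-variance-decomp}, $V_{\mathrm{prop}} = V_{\mathrm{naive}} - \Var_s(\mu_s)$, which lies between $0$ and $V_{\mathrm{naive}}$. Hence $\rho = V_{\mathrm{prop}}/V_{\mathrm{naive}} \in [0,1]$. This assumes $V_{\mathrm{naive}}>0$; the degenerate case $V_{\mathrm{naive}}=0$ makes both budgets equal to $1$, and there I would set $\rho:=1$ by convention.

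\textbf{Step 2: compute the minimal budgets.} Solving $V/K \le \varepsilon^2$ gives $K \ge V/\varepsilon^2$, so
\[
K_{\mathrm{naive}} = \lceil V_{\mathrm{naive}}/\varepsilon^2\rceil, \qquad K_{\mathrm{prop}} = \lceil V_{\mathrm{prop}}/\varepsilon^2\rceil .
\]
Taking the ratio of the unrounded thresholds gives exactly $\rho$, since $\varepsilon^2$ cancels.

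\textbf{Main obstacle: integrality.} The corollary states an equality, but the ceilings break exact equality, and proportional quotas $K w_s$ are themselves non-integral. So I would state the result in the same idealised sense as the theorem, treating $K$ as a continuous budget, in which case the equality holds exactly. For integer budgets, I would note that
\[
\frac{K_{\mathrm{prop}}}{K_{\mathrm{naive}}} \to \rho \quad \text{as } \varepsilon\to 0,
\]
because each ceiling differs from its threshold by less than $1$ while the thresholds diverge. The statement about total executions then follows by multiplying both budgets by the fixed $R$, since $N = KR$.
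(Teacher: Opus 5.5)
Your proposal is correct and follows essentially the paper's own route: in the oracle-specialisation discussion of Appendix~\ref{app:sufficiency}, the paper uses $\Var(\widehat\mu)=\Var(Y)/K$ for both designs, equates each to the target variance to obtain $K_{\mathrm{prop}}=\rho\,K_{\mathrm{naive}}$, and takes $\rho\in[0,1]$ from Theorem~\ref{theorem:proportional variance}. Your handling of the ceilings (exact equality only for a continuous budget, asymptotic equality as $\varepsilon\to 0$) and of the degenerate case $\Var(Y)=0$ is extra care the paper omits rather than a different argument; the one remaining nit is that when $\Var(Y^{\mathrm{prop}})=0$ your formula gives $K_{\mathrm{prop}}=0$, which should be read as $1$ for integer budgets or as an unattained infimum in the continuous idealisation.
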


\textbf{Interpretation (explained configuration variance).}
The statistic $S(\underline\ell)$ provides a compact description of the configuration
$\underline\ell$ and therefore induces stratum-conditional means
$\mu_s:=\mathbb E[Y\mid S=s]$.
The law of total variance decomposes the single-configuration variance as
\[
\Var(Y)=\mathbb E[\Var(Y\mid S)]\;+\;\Var(\mathbb E[Y\mid S]),
\]
so the between-stratum term $\Var(\mathbb E[Y\mid S])$ is precisely the portion of variability in $Y$
that is \emph{explained} by knowing $S$.
Under ideal proportional allocation, stratification removes exactly this explained component, leaving only
the within-stratum contribution $\mathbb E[\Var(Y\mid S)]$.
Equivalently, defining the explained fraction
\[
R^2_{\mathrm{eff}}(S):=\frac{\Var(\mathbb E[Y\mid S])}{\Var(Y)}\in[0,1],
\]
the ideal variance ratio satisfies $\rho(S)=1-R^2_{\mathrm{eff}}(S)$: the improvement $1-\rho$
is exactly the fraction of single-configuration variance captured by $S$.
In the oracle limit (no shot noise), $Y$ is deterministic given $\underline\ell$ and $R^2_{\mathrm{eff}}(S)$
reduces to the usual fraction of configuration-to-configuration mean variation explained by $S$; in this case,
achieving the same precision requires only $K_{\mathrm{prop}}=\rho(S)\,K_{\mathrm{naive}}$ configurations.
With finite shots, Born noise increases $\Var(Y)$ without increasing $\Var(\mathbb E[Y\mid S])$, reducing
$R^2_{\mathrm{eff}}(S)$ and hence the attainable savings. Appendix~\ref{app:sufficiency} formalises these identities
and connects them to permutation symmetry for counts-vector strata.

\paragraph{Remark (QPD overhead unchanged).}
Stratification changes only how configurations are selected and aggregated; it does not modify the
per-shot weight magnitude. In particular, for bounded observables $|O_x|\le \ \lVert O\rVert_{\infty}$ and our
choice $|w(\underline\ell)|=\|g\|_1$, every realised per-configuration average satisfies
$|Y^{(R)}|\le \ \lVert O\rVert_{\infty}\|g\|_1$, irrespective of whether $\underline\ell$ was drawn i.i.d.\ or conditionally
within a stratum. Consequently, we have the upper bound
\[
\Var(\widehat{Y}_K^{\mathrm{strat}})\le \frac{\ \lVert O\rVert_{\infty}^2\|g\|_1^2}{K},
\]
so the fundamental $1/K$ and $\|g\|_1^2$ scaling (and its typical exponential growth in $\nu$) is unchanged; the effect of stratification is a reduction of the constant pre-factor, quantified by the variance ratio $\rho$ (it only removes the additive between-stratum contribution $\Var_s(\mu_s)$).

\subsection{Counts-vector stratification for QPDs}
\label{subsec:counts-main}

Na\"{\i}ve sampling treats the configuration index $\underline\ell$ as an atomic label. However, in many QPD ensembles different configurations contribute similarly to the observable, particularly when there are exact or approximate permutation symmetries across gate locations. We exploit this by stratifying configurations using a permutation-invariant \emph{counts vector}, which is a concrete instantiation of the stratification statistic $S(\underline\ell)$ introduced above. We emphasise that the counts-vector statistic is not unique: there is a natural family of coarser statistics obtained by coarsening $\mathbf M$ (see Appendix~\ref{app:coarsening-main}).

\paragraph{Padding to a common width.}
For notational convenience we assume a common local width $d$. This can be enforced without loss of generality by padding each local decomposition with dummy primitives of zero coefficient: that is, choose $d=\max_i d_i$ and extend all local index sets $I_i$ to $\{1,\dots,d\}$ by setting $\gamma_i(\ell)=0$ for $\ell>d_i$. This leaves the induced product law $p(\underline\ell)$ and circuit 1-norm $\|g\|_1$ unchanged.

\paragraph{Counts-vector strata.}
With this notational simplification the counts vector associated with a multi-index $\underline\ell=(\ell_1,\dots,\ell_\nu)$ records how often each local index appears:
\begin{equation}
\mathbf M = (M_1,\dots,M_d),\quad
M_k := \#\{i : \ell_i = k\},
\label{eq:counts-vector}
\end{equation}
with $\sum_{k=1}^d M_k = \nu$. Configurations that differ only by permuting gate positions share the same
$\mathbf M$, so taking $S\equiv \mathbf M$ partitions the configuration space into permutation-invariant strata.
In the mixed-width case, the padded categories simply have zero probability.

For fixed $\nu$ and $d$, the number of distinct counts vectors is
\begin{equation}
\#\{\mathbf m : \sum_k m_k=\nu\}
= \binom{\nu+d-1}{d-1}
= O(\nu^{d-1}),
\label{eq:num-counts-states}
\end{equation}
which is polynomial in $\nu$ for fixed local width $d$. The special case $d=2$ recovers the binomial-type
stratifications used for homogeneous QPDs (see \cite{chen_faster_2025} and Appendix~\ref{app:related-work}). For comparison, there are $O(d^\nu)$ possible circuit configurations in the full product distribution $p(\underline\ell)$. 

\paragraph{DP preprocessing and conditional sampling.}
Operationally, stratified sampling with $S\equiv \mathbf M$ requires (i) full knowledge of the exact stratum weights
$w_{\mathbf m}:=\Pr(\mathbf M=\mathbf m)$ and (ii) an efficient method to sample configurations
$\underline\ell$ \emph{conditionally} on $\mathbf M=\mathbf m$. In the product-form QPD setting where we have a simple product of categorical distributions
$p(\underline\ell)=\prod_{i=1}^\nu p_i(\ell_i)$, the induced counts-vector random variable $\mathbf M(\underline\ell)$
follows a Poisson---multinomial distribution (PMD) determined by the local categories $\{p_i\}$.
We can exploit this PMD structure to solve both tasks via a single dynamic programme (DP) that runs as a classical pre-computation, such that quantum run-time is unaffected. This algorithm can be seen as a generalisation of the standard DP for calculating the Poisson---Binomial distribution \cite{hong_computing_2013, chen_statistical_1997}. We also note that the PMD is a well-studied distribution in classical statistics \cite{lin_poisson_2022}.

Concretely, a forward dynamic programme computes all $w_{\mathbf m}$ with $O(d\,\nu^{d})$ arithmetic operations and caches
intermediate layers in $O(\nu^{d})$ memory; a backward pass through the cached table then generates
$\underline\ell\sim p(\underline\ell\mid \mathbf M=\mathbf m)$ in $O(\nu d)$ time per configuration, matching the runtime
cost of na\"{\i}ve sampling up to a constant factor. Full recursions and pseudocode for performing the two tasks, namely
\textsc{CountsForwardDP} and \textsc{CountsConditionalSample}, are given in Appendix~\ref{app:counts-dp}.

\paragraph{Complexity caveat and routes to scalability.}
The DP state space grows as $O(\nu^{d-1})$ and the preprocessing cost as $O(d\,\nu^d)$, so while the method is efficient for the small fixed-width regime targeted in our numerics in Section~\ref{sec:numerical results} when $d$ is small (e.g. $d=3$ for PAI and $d=4$ for single-qubit Pauli PEC, and $\nu$ up to around $10^2$), it can become impractical when the width is large (e.g.\ multi-qubit noise channels or broader local decompositions like in \cite{koczor_sparse_2024}) or when $\nu$ is extremely large. This is not a conceptual failure of stratification, but an implementation constraint of the \emph{exact} PMD/conditional-sampling backbone. In such regimes one can retain the same design viewpoint while introducing problem-dependent refinements: globally coarsening the statistic (reducing effective width $d$), using typed/blockwise counts to keep widths small within blocks, or using pruned/approximate pre-processing (potentially with controlled bias) to exploit concentration of the counts-vector distribution. We sketch these refinements in the outlook, where the goal becomes an explicit trade-off between pre-processing cost and variance (and possibly bias).

\paragraph{Generality and problem-agnosticism.}
The counts-vector construction and the associated DP are purely distributional: they depend only on the
product-form configuration law $p(\underline\ell)=\prod_{i=1}^\nu p_i(\ell_i)$ induced by the local QPD
coefficients, and do not require additional assumptions about the quantum primitives beyond the ability to
index them. In particular, the DP never inspects (or utilises) the underlying channels $\{\mathcal C_i(\ell)\}$, the surrounding circuit that they're embedded in, or the measured observable; it uses only the categorical probabilities
$\{p_i(\cdot)\}$. Consequently, the same pre-processing and conditional sampler apply unchanged to inhomogeneous product-form QPDs where the local distributions vary with $i$, such as when different locations correspond to different physical primitives (e.g.\ dephasing at one location and bit-flip at another, or different PAI axes), so long as each location is equipped with a local index set and associated probabilities. As such, we emphasise that counts-vector stratification (or statistics coarsened from it) is a universal, quantum agnostic sampling-design layer for product-form QPDs: it is a drop-in replacement for na\"{\i}ve configuration sampling. Any quantum- or observable-specific structure enters only through the conditional means $\mu_{\underline\ell}$ and hence through how informative $\mathbf M$ is for a given problem instance.

\subsection{Practical stratified sampling recipe}
\label{subsec:practical-sampling}

For a product-form QPD and a chosen stratification statistic $S(\underline\ell)$ (here the counts vector $S\equiv\mathbf M$), stratified sampling replaces i.i.d.\ draws of configurations $\underline\ell\sim p(\underline\ell)$ by the following end-to-end procedure:

\begin{enumerate}
  \item \textbf{Precomputation (once per QPD-problem):}
  run \textsc{CountsForwardDP} to obtain the stratum weights $w_{\mathbf m}=\Pr(\mathbf M=\mathbf m)$ and cache the prefix tables $\{W^{(i)}_{\mathbf m}\}$ required for backward conditional sampling. 

  \item \textbf{Integer allocation across strata:}
  given a total configuration budget $K$, choose nonnegative integers $K_{\mathbf m}$ with $\sum_{\mathbf m}K_{\mathbf m}=K$.
  To preserve exact unbiasedness at finite $K$, we use residual-aware Hamilton apportionment: any zero-allocation strata are aggregated into a residual bucket $\mathcal{D}$ with total weight $w_*$ and count $K_*$ (see Appendix~\ref{app:allocation details} for details).

  \item \textbf{Stratified configuration generation and circuit execution:}
  \begin{enumerate}
    \item For each stratum $\mathbf m$ with $K_{\mathbf m}>0$, repeat $K_{\mathbf m}$ times:
    sample $\underline\ell\sim p(\underline\ell\mid \mathbf M=\mathbf m)$ via the backward DP, execute $\mathcal U(\underline\ell)$, and compress $R$ measurement repetitions into a single scalar outcome $Y_{\mathbf m,t}$ (Sec.~\ref{subsec:per-sample-estimator}).
    \item If there are residuals ($K_*>0$) repeat $K_*$ times:
    sample a stratum label $\mathbf m\sim q(\cdot)$ on the residual set $\mathcal D$, then sample $\underline\ell\sim p(\underline\ell\mid \mathbf M=\mathbf m)$ and execute as above to obtain $Y_{*,t}$.
  \end{enumerate}

  \item \textbf{Estimator construction:}
  compute within-stratum means $\widehat\mu_{\mathbf m}=\frac{1}{K_{\mathbf m}}\sum_{t=1}^{K_{\mathbf m}}Y_{\mathbf m,t}$ (and $\widehat\mu_*$ analogously if $K_*>0$), and return
  \[
  \widehat{Y}_{K,R}^{\text{impl}}
  = \sum_{\mathbf m} w_{\mathbf m}\,\widehat\mu_{\mathbf m}
  + w_*\,\widehat\mu_* .
  \]
\end{enumerate}
\begin{algorithm}[t]
\caption{\textsc{StratifiedQPDMC} (counts-vector strata; residual-aware proportional allocation)}
\label{alg:stratified-qpdmc}
\KwIn{Local probabilities $\{p_i(k)\}_{i\le\nu,k\le d}$; total configuration budget $K$; repetitions per configuration $R$.}
\KwOut{Estimator $\widehat{Y}_{K,R}^{\mathrm{impl}}$.}

$(\{w_{\mathbf m}\},\ \mathcal T)\leftarrow \textsc{CountsForwardDP}(\{p_i\})$\tcp*[r]{Appendix~\ref{app:counts-dp}}
$(K_{\mathbf m},K_*,\mathcal D,w_*,q)\leftarrow \textsc{ResidualHamiltonAllocate}(\{w_{\mathbf m}\},K)$\tcp*[r]{Appendix~\ref{app:allocation details}}

\ForEach{counts vector $\mathbf m$ with $K_{\mathbf m}>0$}{
  \For{$t\leftarrow 1$ \KwTo $K_{\mathbf m}$}{
    $\underline\ell \leftarrow \textsc{CountsConditionalSample}(\mathbf m,\{p_i\},\mathcal T)$\tcp*[r]{Appendix~\ref{app:counts-dp}}
    Run $\mathcal U(\underline\ell)$ and compress $R$ repetitions into $Y_{\mathbf m,t}$\;
  }
  $\widehat\mu_{\mathbf m}\leftarrow \frac{1}{K_{\mathbf m}}\sum_{t=1}^{K_{\mathbf m}} Y_{\mathbf m,t}$\;
}

\If{$K_*>0$}{
  \For{$t\leftarrow 1$ \KwTo $K_*$}{
    Sample $\mathbf m \sim q(\cdot)$ on $\mathcal D$\;
    $\underline\ell \leftarrow \textsc{CountsConditionalSample}(\mathbf m,\{p_i\}, \mathcal T)$\;
    Run $\mathcal U(\underline\ell)$ and form $Y_{*,t}$\;
  }
  $\widehat\mu_*\leftarrow \frac{1}{K_*}\sum_{t=1}^{K_*} Y_{*,t}$\;
}
\Else{$\widehat\mu_*\leftarrow 0$\;}

\Return $\widehat{Y}_{K,R}^{\mathrm{impl}}=\sum_{\mathbf m: K_{\mathbf{m}} > 0} w_{\mathbf m}\widehat\mu_{\mathbf m} + w_*\,\widehat\mu_*$\;
\end{algorithm}

Algorithm~\ref{alg:stratified-qpdmc} gives pseudocode for the full routine.
Under ideal proportional quotas $K_{\mathbf m}=Kw_{\mathbf m}$, the corresponding proportional estimator satisfies the never-worse guarantee in Theorem~\ref{theorem:proportional variance}. The residual-aware implementation preserves exact unbiasedness at finite $K$ and, in our numerics, induces variance perturbations smaller than the reported bootstrap uncertainties. As such, our numerical simulations substantiate the guarantee of Theorem~\ref{theorem:proportional variance} on two specific problems and choice of statistic.

\section{Numerical Results \& Discussion}
\label{sec:numerical results}

We benchmark counts-vector stratification for product-form QPDs using first-order Trotter time evolution of the 1D transverse-field Ising model (TFIM) and two representative QPD schemes: probabilistic angle interpolation (PAI) and probabilistic error cancellation (PEC). Throughout we work at the configuration level: each sampled QPD variant (together with a specified measurement protocol) produces a single scalar outcome $Y$, possibly by compressing $R$ repeated shots via the Born rule. 

For a fixed stratification statistic $S$ (here the counts vector $\mathbf M$), we compare naïve and proportional designs through the design-variance ratio (see Corollary~\ref{cor:sample complexity})
\begin{equation}
\rho(R)
:= \frac{\Var(\widehat{Y}_{K,R}^{\mathrm{impl}})}{\Var(\widehat{Y}_{K,R}^{\mathrm{naive}})}
\in[0,1],
\label{eq:variance ratio}
\end{equation}
where $R$ is the number of Born-rule repetitions per configuration and $\widehat{Y}_{K,R}^{\mathrm{impl}}$ denotes our residual-aware proportional estimator (Algorithm~\ref{alg:stratified-qpdmc}). In the ideal proportional model $\rho(R)$ is independent of $K$; empirically, the finite-$K$ apportionment effects are negligible in our numerics with $K=8192$ (Appendix~\ref{app:numerical-methodology} and Appendix~\ref{app:var-rounding-residual}). Operationally, for fixed $R$, a ratio $\rho(R)< 1$ corresponds to a constant-factor reduction in the required configuration budget $K$ (and hence total executions $N=KR$ assuming each configuration gets $R$ shots) to reach a target precision. We emphasise that stratification \emph{cannot} alter the asymptotic performance.

\subsection{Measurement models and interpretation}
\label{subsec:numerics-models}

We utilise three measurement models:
(i) a \emph{single-shot} model ($R=1$), where each configuration is measured once and $Y$ contains both configuration randomness and Born-rule noise;
(ii) an \emph{oracle} model, where each realised configuration $Y = \underline\ell$ is assigned its exact conditional mean $\mu_{\underline\ell}$ (equivalently the $R\to\infty$ limit);
and (iii) an intermediate choice $R=64$, which suppresses the Born-rule contribution by $\sim 1/R$. This intermediate choice shows how $\rho(R)$ interpolates between the oracle and single-shot limits.

\subsection{TFIM Trotter benchmark}
\label{subsec:tfim-trotter}

We simulate first-order (Lie--Trotter) time evolution for the TFIM Hamiltonian on a ring of $n$ spins,
\[
H=h\sum_{j=1}^n X_j + J\sum_{j=1}^{n} Z_j Z_{j+1},
\quad Z_{n+1}\equiv Z_1,
\]
with $h=0.6$, $J=0.7$, total time $t=1.0$, initial state $\rho_0=|0\rangle\!\langle 0|^{\otimes n}$, and observable $\langle X_1\rangle$.
We fix $n=6$ and vary the Trotter depth $L$. For each $L$ and each design (naïve versus proportional-stratified) we draw $K=8192$ configurations, estimate design variances in the oracle and single-shot models, and form empirical ratios with bootstrap confidence bands (Appendix~\ref{app:numerical-methodology}). In Appendix~\ref{subsec:abs-variances} we also report the corresponding normalised absolute variances $\Var(\widehat Y_{K,R})/\|g\|_1^2$, whose ratios recover the $\rho(R)$ curves shown here.

\subsection{PAI and PEC: variance ratios versus depth}
\label{subsec:pai-pec}

\paragraph{Probabilistic angle interpolation (PAI).}
In PAI \cite{koczor_probabilistic_2024}, each rotation is replaced by a three-term local QPD over nearby hardware-available angles. Over $\nu$ rotations, configurations $\underline\ell\in\{1,2,3\}^\nu$ induce counts-vector strata
\[
\mathbf M=(M_1,M_2,M_3),
\quad \sum_{i=1}^3 M_i =\nu,
\]
where each $M_i$ counts up the occurrences of primitive index $i$ in a given configuration, ignoring all order.
Fig.~\ref{fig:trotter PAI} shows the variance ratios versus depth. In the single-shot model, stratification yields a modest but consistent improvement ($\widehat\rho^{\mathrm{shot}}\approx 0.9$). In the oracle model, the same statistic yields substantially larger reductions ($\widehat\rho^{\mathrm{oracle}}\approx 0.3$--$0.6$), indicating that counts capture a nontrivial fraction of configuration-level structure. The gradual degradation of $\widehat\rho^{\mathrm{oracle}}$ with depth is consistent with increasing non-commutation and position-dependence, which makes configurations within a fixed counts stratum less homogeneous.

\begin{figure}[tb]
    \centering
    \includegraphics[width=\linewidth]{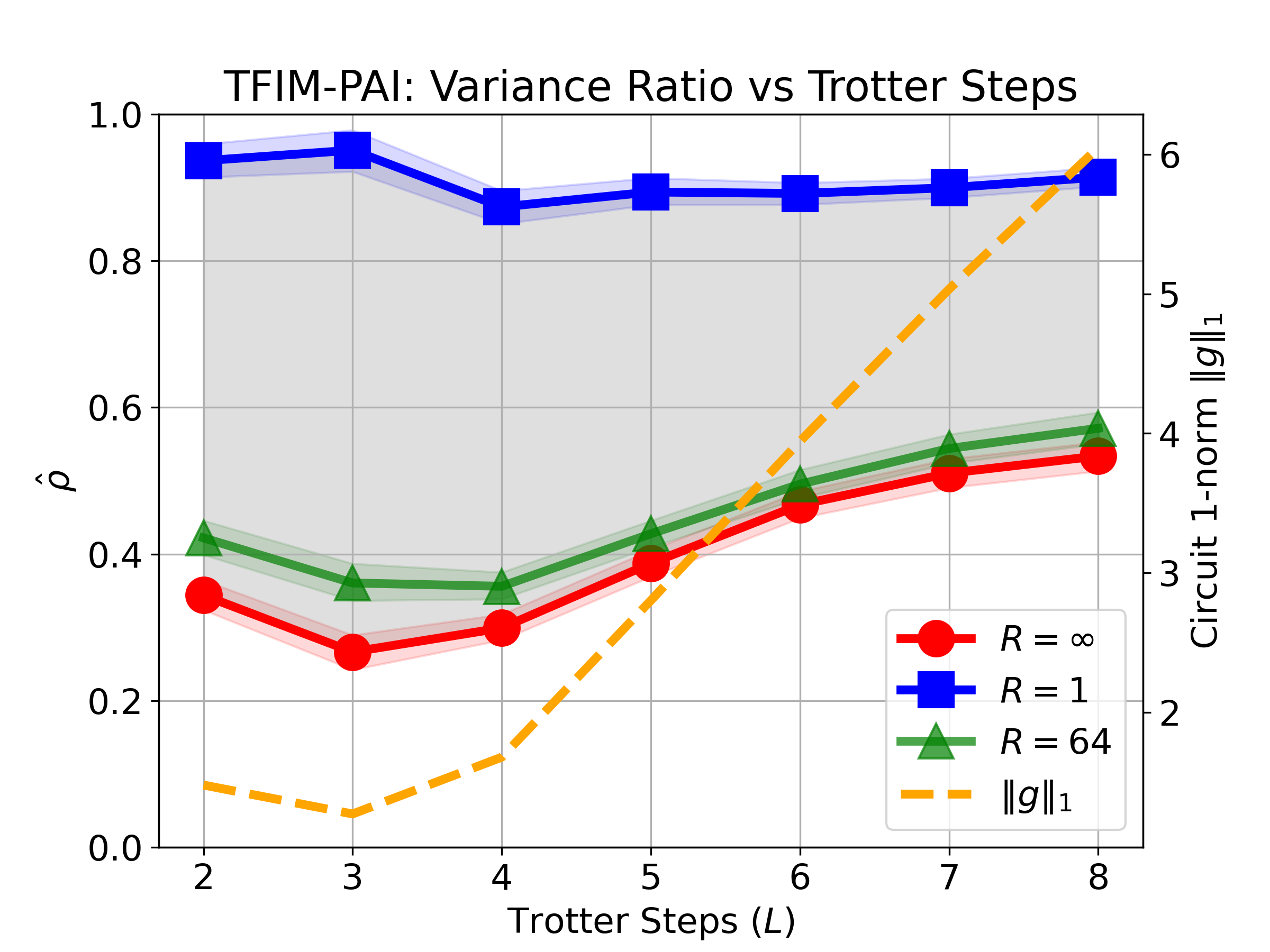}
    \caption{PAI on TFIM Trotter circuits ($n=6$, $h=0.6$, $J=0.7$, $t=1.0$) estimating $\langle X_1\rangle$. For each depth $L$ we draw $K=8192$ configurations and report empirical variance ratios under naïve and counts-vector proportional designs in the oracle (red) and single-shot (blue) models; the green curve shows $R=64$. Bands denote bootstrap confidence intervals. The dashed orange curve shows the circuit QPD 1-norm $\|g\|_1$ on a secondary axis. The circuit depths are given by $\nu = 12L$.}
    \label{fig:trotter PAI}
\end{figure}

\paragraph{Probabilistic error cancellation (PEC).}
In PEC \cite{temme_error_2017}, each ideal gate is accompanied by an inverse noise channel expressed as a local QPD over implementable noisy channels. We consider gate-independent single-qubit depolarising noise of strength $p=0.01$ after each unitary (applied on both qubits for two-qubit gates), giving a fixed-width local QPD over Pauli channels $\{I,X,Y,Z\}$. Configurations $\underline\ell\in\{I,X,Y,Z\}^\nu$ induce strata 
\[
\mathbf M=(M_I,M_X,M_Y,M_Z),
\quad \sum_{P\in\{I,X,Y,Z\}} M_P=\nu.
\]
Details of the PEC setup are given in Appendix~\ref{app:qpd-instances}. Fig.~\ref{fig:trotter PEC} shows the empirical variance ratios. As for PAI, single-shot improvements are modest (typically $\widehat\rho^{\mathrm{shot}}\approx 0.9$), whereas the oracle ratios are substantially smaller (typically $\widehat\rho^{\mathrm{oracle}}\approx 0.2$--$0.4$), demonstrating that counts-vector stratification can remove a large fraction of the classical configuration variance even in the higher-overhead PEC setting.

\paragraph{Why the counts-vector statistic works.}
For proportional stratification with statistic $S$, the ideal theory gives
\(
\Var(Y^{\mathrm{naive}})-\Var(Y^{\mathrm{prop}})=\Var_s(\mu_s)\)
so the gain is governed by how much variation in configuration-level conditional means is explained by conditioning on $S$.
In the oracle model this simplifies to
\(
\rho^{\mathrm{oracle}}(S)=1-R^2(S),
\)
where $R^2(S)$ is the explained fraction of configuration-to-configuration mean variation. In the TFIM--PAI and TFIM--PEC benchmarks we observe $1-\rho^{\mathrm{oracle}}$ ranging from $0.26$ up to $0.53$ over depths $L=2,\dots,8$; see Appendix~\ref{app:sufficiency} for further discussion on this `explained variance' perspective. Put simply, counts-vector stratification helps most when configurations that share the same counts vector also have similar conditional means $\mu_{\underline{\ell}}$, such that most of the configurational variance \(\Var_{\underline\ell}(\mu_{\underline\ell})\) is ``explained'' by $\mathbf M$. 

Separately, among statistics that are permutation-invariant in the sense of depending only on multiplicities (and not on gate positions), the full counts vector is variance-minimal within the monotone coarsening hierarchy; see Appendix~\ref{app:coarsening-main} (and the concrete counts-versus-parity comparison therein). We emphasise that stratification on coarser statistics (such as sign-parity) can be easily recovered from the counts-vector.  

For $d\le 4$ and $\nu$ up to $~10^2$, the DP Algorithm~\ref{alg:counts-forward} is practical on a personal workstation; for much larger problems coarsening or pruning will be required. In our implementation we elected to preserve unbiasedness and generality and were consequently limited to modest-depth simulations. We note that introducing a controlled bias can both yield further variance reductions (in the sense of mean-squared error), while simultaneously also improving the pre-compute complexity of the DP and conditional sampling algorithms by allowing for approximations instead. Generalising the results presented here into the bias-variance context is a promising direction for future work.

\section{Conclusion and Outlook}
\label{sec:outlook}
Product-form QPD protocols introduce a purely classical source of uncertainty on top of intrinsic Born-rule shot noise: \emph{configuration variance} which arises from randomising over circuit variants. Stratification helps reduce this additional variance. While many QPD protocols were originally motivated by NISQ-era constraints (e.g., error mitigation via circuit randomisation), circuit-level randomisation is now a broadly recurring motif across quantum algorithms and benchmarking: whenever an expectation value is ultimately obtained by sampling, one often benefits from introducing and then properly accounting for ensemble randomness.

This remains true in early fault-tolerant settings, where shot budgets may be even tighter due to slow logical clock rates: even modest constant-factor reductions in the required number of configurations or shots can materially improve throughput especially on first-generation fault-tolerant machines.

We cast these protocols as a two-level Monte Carlo estimator and used the law of total variance to isolate the configuration contribution. This viewpoint yields a general sampling-design principle: for any statistic $S(\Lambda)$ on the configuration space, proportional stratification is unbiased and (under ideal proportional quotas) almost always decreases variance relative to naïve sampling at fixed configuration budget.

\begin{figure}[tb]
	\centering
	\includegraphics[width=\linewidth]{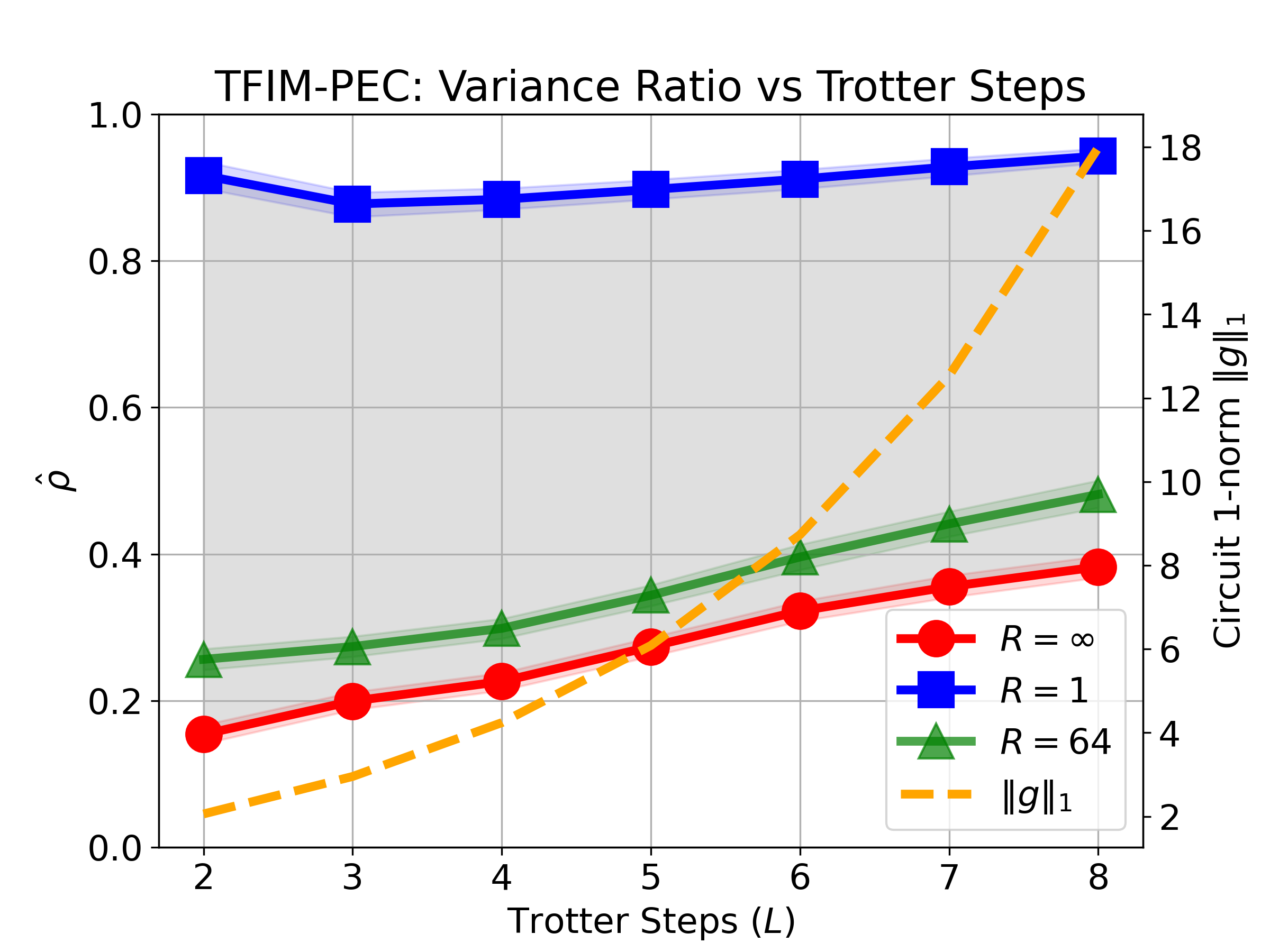}
	\caption{PEC on the same TFIM Trotter circuits as in Fig.~\ref{fig:trotter PAI}, with gate-independent single-qubit depolarising noise ($p=0.01$) after each unitary. We draw $K=8192$ configurations per $L$ and report empirical variance ratios with bootstrap confidence bands. The dashed orange curve shows the circuit QPD 1-norm $\|g\|_1$ on a secondary axis. The circuit depths are given by $\nu = 18L$.}
	\label{fig:trotter PEC}
\end{figure}

We then instantiated this principle with a universal, permutation-invariant statistic---the \emph{counts vector}---and made the resulting stratified design operational. For fixed local width $d$, a single dynamic programme over the induced Poisson-multinomial distribution computes all stratum weights and enables exact conditional sampling, with a one-off classical cost $O(d\,\nu^d)$ and per-configuration sampling cost $O(\nu d)$. After preprocessing, the stratified routine is a drop-in replacement for naïve configuration sampling. In cases where $\nu$ or $d$ are large, the same variance reduction guarantee (Theorem~\ref{theorem:proportional variance}) holds for coarser, but cheaper statistics like sign-parity, which has a classical cost $O(\nu^2)$ independent of width $d$ (see Appendix~\ref{app:coarsening-main}).

On first-order Trotter benchmarks with probabilistic angle interpolation (PAI) and probabilistic error cancellation (PEC), counts-vector stratification consistently reduces configuration-level variance: improvements are modest but robust in Born-dominated single-shot regimes (about $10\%$), and substantially larger in oracle-like regimes where configuration variance dominates (up to $\sim 60$--$80\%$ reduction in our case studies). As expected, these gains do not alter the exponential overhead set by $\|g\|_1^2$, but they reduce constant prefactors and therefore expand the practically useful range of existing QPD methods.

Looking forward, the same variance-accounting framework suggests clear upgrade paths: (i) statistics that deliberately break permutation symmetry to explain more configuration-to-configuration variation, (ii) pilot schemes approximate variance-optimal (Neyman) allocation across strata, or on-line adaptive schemes to achieve further cost optimisations, (iii) scalable approximate preprocessing and controlled truncation for deeper circuits, and (iv) extensions to hierarchical settings with simultaneous randomisations, such as classical shadows or circuit cutting on top of QPD schemes. In this last case, a hierarchical stratification (derived from recursively applying the law of total variance) will be a natural way of reducing total resource costs.  

To summarise, our contribution here is to separate ensemble randomness from Born-rule noise and to show that sampling design---independent of QPD construction---offers provable, composable constant-factor savings for product-form QPD protocols.

\textbf{Data availability}: The simulation code used is available on GitHub at \url{https://github.com/joshua-dai/stratifiedQPD}.

\begin{acknowledgments}
    We thank Simon Benjamin, Po-Wei Huang, Minjun Jeon, and Chusei Kiumi for helpful discussions. We additionally thank Ben Criger and the Cambridge Quantinuum team for stimulating initial conversations and sketching out an adaptive sampling scheme. JWD is supported by a Clarendon Fund Scholarship, University of Oxford. BK thanks UKRI for the Future Leaders Fellowship Theory to Enable Practical Quantum Advantage (MR/Y015843/1). BK also acknowledges funding from the EPSRC project Robust and Reliable Quantum Computing (RoaRQ, EP/W032635/1). This research was funded in part by UKRI (MR/Y015843/1). For the purpose of Open Access, the author has applied a CC BY public copyright licence to any Author Accepted Manuscript version arising from this submission.
\end{acknowledgments}

\bibliographystyle{apsrev4-2}
\bibliography{references} 

\appendix

\section{Related Work and Connections to Variance-Reduction Techniques}
\label{app:related-work}

There is extensive work on reducing the overhead of QPD-based protocols by improving the
\emph{decomposition itself}, typically with the goal of reducing the circuit 1-norm $\|g\|_1$ by exploiting
structure in the circuit, noise model, or observable
\cite{jiang_physical_2021, guo_quantum_2022, piveteau_quasiprobability_2022, scheiber_reduced_2025, tran_locality_2023, eddins_lightcone_2024, koczor_sparse_2024}.
Examples include physically informed constructions
\cite{jiang_physical_2021, guo_quantum_2022, piveteau_quasiprobability_2022}, specialised workflows in
Clifford-dominated regimes \cite{scheiber_reduced_2025}, and locality or light-cone arguments for pruning mitigation
on local observables \cite{tran_locality_2023, eddins_lightcone_2024}.
These works primarily address the \emph{ensemble design} problem: constructing a low-overhead QPD.

The present paper addresses the complementary \emph{sampling design} problem for a fixed
product-form QPD. Given such a decomposition, one must still decide how to sample configurations from its typically
enormous configuration space and how to aggregate outcomes into an estimator. In classical Monte Carlo, these
choices are studied as variance-reduction techniques (VRTs)
\cite{kroese_handbook_2011, cochran_sampling_1977, lohr_sampling_2021}, including conditioning/Rao--Blackwellization,
stratification, control variates, and importance sampling
\cite{liu_monte_2004, robert_rao-blackwellization_2021}.
While standard in stochastic simulation, these tools have been less systematically developed for QPD-based quantum
protocols as an explicit \emph{outer-layer} design problem over circuit configurations (distinct from inner
Born-rule sampling).

A notable recent example is CV4Quantum \cite{shyamsundar_cv4quantum_2025}, which applies control variates to reduce the
variance of PEC estimators under realistic conditions. CV4Quantum is a post-processing VRT: it starts from a fixed
dataset of circuit outcomes and constructs a lower-variance estimator without modifying the
configuration-sampling plan. By contrast, our approach modifies the configuration-sampling design itself (via
stratified sampling with exact conditional draws) while preserving unbiasedness. The two approaches are compatible
in principle. 

Alternatively, Ref.~\cite{aharonov_reliable_2025} also explicitly decomposes the variance of QP-based estimators into circuit-to-circuit and shot-to-shot contributions via the law of total variance, and observes that circuit-sampling distributions may be improved, including by splitting into complementary sub-distributions. We further develop stratification as a general sampling-design primitive for QPDs.

The closest precedent to our work is Ref.~\cite{chen_faster_2025}, which develops a PEC procedure for a gate-independent noise setting by grouping configurations into ``sectors'' according to the number of non-identity insertions and allocating sampling effort across sectors proportional to their weights. This sector statistic corresponds to a coarsening of the multinomial counts we present.

Our framework generalises this sector picture by treating stratification as a first-class sampling-design primitive
for arbitrary product-form QPDs. In particular, our contributions include:
\begin{enumerate}
\item \textbf{Counts-vector stratification for inhomogeneous product QPDs.}
For general product-form QPDs with position-dependent local supports and weights, we identify the full multinomial
counts vector as a canonical permutation-invariant statistic. For fixed local width $d$, this yields
$O(\nu^{d-1})$ strata. We give a dynamic programme that computes all stratum masses exactly and supports
\emph{exact} conditional sampling (Appendix~\ref{app:counts-dp}), yielding an unbiased drop-in replacement for naïve
i.i.d.\ configuration sampling after one-off preprocessing.

\item \textbf{A general ``never-worse'' guarantee and coarsening hierarchy.}
We show that under ideal proportional quotas, stratification on \emph{any} statistic is unbiased and never has
larger variance than naïve sampling at fixed configuration budget
(Theorem~\ref{theorem:proportional variance}). Moreover, coarsening a statistic yields a monotone variance hierarchy:
conditioning on finer information cannot increase the proportional-stratified variance
(Appendix~\ref{app:coarsening-main}). Binomial sectoring \cite{chen_faster_2025} arises as a special case.

\item \textbf{Connections to classical allocation theory.}
We connect stratified variance decompositions to proportional and Neyman allocation
\cite{cochran_sampling_1977}, and discuss pilot-based approximations to variance-optimal allocation in
Appendix~\ref{app:cost-pilots}.
\end{enumerate}

We also address a practical finite-$K$ issue: ideal proportional quotas are non-integer. We therefore use integer
apportionment together with a residual-stratum correction that preserves exact unbiasedness. Appendix~\ref{app:allocation details} gives details, and Appendix~\ref{subsec:certificate-exact} quantifies the resulting (typically small)
variance perturbation.

\textbf{Addendum.}
During the preparation of this manuscript, Myers \emph{et al.}~\cite{myers_simulating_2025} introduced a stratified
sampling approach for noise simulation that groups configurations into ``fault-count'' sectors and allocates
samples across sectors according to their weights, similar to the approach of \cite{chen_faster_2025} but for the classical simulation of quantum noise rather than PEC (we note that the QPD mathematics are common to both). Specifically, the method in \cite{myers_simulating_2025} exploits the weak-noise concentration of the configuration distribution on patterns with few fault insertions. In our framework, this fault-count statistic can be seen as a binomial coarsening of the full counts vector and fits naturally into the variance hierarchy discussed in see Appendix~\ref{app:coarsening-main}.

Conversely, our results provide a general-purpose stratification backbone for arbitrary product-form QPD ensembles,
with a universal never-worse guarantee and an exact Poisson--Multinomial preprocessing scheme that supports
conditional sampling. Appendix~D of Ref.~\cite{myers_simulating_2025} notes that a dynamic programme can be used to
compute Poisson--Binomial weights, corresponding to a one-dimensional special case of the Poisson--Multinomial
construction employed here. Our development was carried out independently.

\section{Exact moments and unbiasedness}
\label{app:exact-enumeration}

This appendix records exact finite-sum expressions for the mean and (design-level) variances induced by a product-form QPD, and clarifies unbiasedness for naïve and stratified estimators. These expressions provide ground truth for small instances where explicit enumeration is feasible.

\subsection{QPD ensemble and target mean}

Let a target channel admit a circuit-level QPD
\[
\mathcal U_{\mathrm{circ}}
= \sum_{\underline\ell\in\mathbf I} g(\underline\ell)\,\mathcal U(\underline\ell),
\quad
\|g\|_1 := \sum_{\underline\ell\in\mathbf I} |g(\underline\ell)|,
\]
where $\underline\ell=(\ell_1,\dots,\ell_\nu)$ indexes a configuration in the product space $\mathbf I$ of local primitives.
Fix an input state $\rho_0$ and a Pauli observable $O$ with eigenvalues in $\{\pm1\}$.
Define the configuration-level oracle mean
\[
\mu_{\underline\ell}
:= \Tr\!\big[O\,\mathcal U(\underline\ell)(\rho_0)\big]
= \mathbb E[Z\mid \underline\ell],
\]
where $Z\in\{\pm1\}$ denotes a single measurement outcome when running configuration $\underline\ell$ (Born-rule randomness).
By linearity,
\begin{equation}
\mu
:= \Tr\!\big[O\,\mathcal U_{\mathrm{circ}}(\rho_0)\big]
= \sum_{\underline\ell} g(\underline\ell)\,\mu_{\underline\ell}.
\label{eq:mu_exact_sum}
\end{equation}

\subsection{Naïve QPD sampling and unbiasedness}

The standard QPD sampling distribution is
\begin{equation}
p(\underline\ell) := \frac{|g(\underline\ell)|}{\|g\|_1}.
\label{eq:p_naive}
\end{equation}
Given a sampled configuration $\underline\ell\sim p$ and a measurement outcome $Z$ (possibly compressed after $R$ repeated shots), define the single-configuration estimator
\begin{equation}
Y := \|g\|_1\,\sign(g(\underline\ell))\,Z.
\label{eq:Y_def}
\end{equation}
Then
\[
\mathbb E[Y\mid \underline\ell]
= \|g\|_1\,\sign(g(\underline\ell))\,\mu_{\underline\ell},
\]
and taking expectation over $\underline\ell\sim p$ gives
\begin{align}
\mathbb E[Y]
&= \sum_{\underline\ell} p(\underline\ell)\,\mathbb E[Y\mid \underline\ell]\nonumber\\
&= \sum_{\underline\ell}\frac{|g(\underline\ell)|}{\|g\|_1}\,
\|g\|_1\,\sign(g(\underline\ell))\,\mu_{\underline\ell}\nonumber\\
&= \sum_{\underline\ell} g(\underline\ell)\,\mu_{\underline\ell}
= \mu.
\label{eq:naive_unbiased}
\end{align}
Thus the naïve QPD estimator is unbiased.

\subsection{Stratification and unbiasedness}

Let $S=S(\underline\ell)$ be a deterministic stratification statistic with strata indexed by $s\in\mathcal S$.
For example, the counts-vector statistic $\mathbf{M}$ from the main text. Define stratum weights under the sampling distribution $p$:
\[
w_s := \Pr(S=s)
= \sum_{\underline\ell:\,S(\underline\ell)=s} p(\underline\ell)
= \frac{1}{\|g\|_1}\sum_{\underline\ell\in s} |g(\underline\ell)|.
\]
The conditional distribution within a stratum is
\[
p(\underline\ell\mid s)
= \frac{p(\underline\ell)}{w_s}
= \frac{|g(\underline\ell)|}{\sum_{\underline\ell'\in s}|g(\underline\ell')|}.
\]
Define the stratum conditional mean of $Y$ (under $p$) by
\[
\mu_s := \mathbb E[Y\mid S=s]
= \sum_{\underline\ell\in s} p(\underline\ell\mid s)\,\mathbb E[Y\mid \underline\ell].
\]
Substituting~\eqref{eq:Y_def} yields the convenient closed form
\begin{equation}
\mu_s
= \frac{1}{w_s}\sum_{\underline\ell\in s} g(\underline\ell)\,\mu_{\underline\ell}.
\label{eq:mu_s_exact}
\end{equation}
Therefore $\sum_s w_s\mu_s=\mu$ by summing over all partitions. This can also be obtained by using the law of total expectation.

\paragraph{Unbiased stratified estimator.}
If one draws $K_s\ge 1$ samples within each stratum with $w_s>0$ and sets $\widehat\mu_s$ to the sample mean within stratum $s$, then
\[
\widehat\mu_{\mathrm{strat}} := \sum_s w_s\,\widehat\mu_s
\quad\text{satisfies}\quad
\mathbb E[\widehat\mu_{\mathrm{strat}}]=\mu.
\]
If some positive-mass strata receive $K_s=0$, unbiasedness is preserved by the residual-bucket construction described in Appendix~\ref{app:allocation details} (by sampling from the conditional mixture over the dropped set and adding a residual term $w_*\widehat\mu_*$).

\subsection{Exact second moments and design variances}

This section records exact expressions for the \emph{single-sample/design variance} $\Var(Y)$ under a specified measurement model. The variance of the Monte Carlo estimator with $K$ i.i.d.\ configuration draws is $\Var(\widehat\mu)=\Var(Y)/K$ in the naïve case, and $\Var(\widehat\mu)=\sum_s w_s^2\sigma_s^2/K_s$ (plus a residual term if present) in the stratified case.

\subsubsection*{Oracle model}

In the oracle model we replace the measurement outcome by its conditional mean, i.e.\ $Z\leftarrow \mu_{\underline\ell}$, so $Y$ becomes deterministic given $\underline\ell$ (no shot noise):
\[
Y(\underline\ell) = \|g\|_1\,\sign(g(\underline\ell))\,\mu_{\underline\ell}.
\]
The second moment under $p$ is
\begin{align}
\mathbb E[Y^2]
&= \sum_{\underline\ell} p(\underline\ell)\,Y(\underline\ell)^2
= \sum_{\underline\ell}\frac{|g(\underline\ell)|}{\|g\|_1}\,
\|g\|_1^2\,\mu_{\underline\ell}^2 \nonumber\\
&= \|g\|_1\sum_{\underline\ell} |g(\underline\ell)|\,\mu_{\underline\ell}^2.
\label{eq:E_Y2_oracle}
\end{align}
Hence the naïve oracle design variance is
\begin{equation}
\Var_{\mathrm{oracle}}(Y^{\mathrm{naive}})
= \|g\|_1\sum_{\underline\ell} |g(\underline\ell)|\,\mu_{\underline\ell}^2 - \mu^2.
\label{eq:Var_naive_oracle_exact}
\end{equation}

\paragraph{Stratum moments (oracle).}
Within stratum $s$, $Y$ takes values $Y(\underline\ell)$ with probabilities $p(\underline\ell\mid s)$, so
\[
\sigma_s^2 := \Var(Y\mid S=s)
= \mathbb E[Y^2\mid S=s]-\mu_s^2.
\]
Using $Y(\underline\ell)^2=\|g\|_1^2\mu_{\underline\ell}^2$ and the definition of $p(\underline\ell\mid s)$,
\begin{equation}
\mathbb E[Y^2\mid S=s]
= \frac{\|g\|_1}{w_s}\sum_{\underline\ell\in s} |g(\underline\ell)|\,\mu_{\underline\ell}^2,
\quad
\mu_s=\frac{1}{w_s}\sum_{\underline\ell\in s} g(\underline\ell)\,\mu_{\underline\ell}.
\label{eq:oracle_stratum_moments}
\end{equation}
The proportional stratified \emph{single-sample} design variance is then
\[
\Var_{\mathrm{oracle}}(Y^{\mathrm{prop}})
= \sum_s w_s\,\sigma_s^2,
\]
and the corresponding estimator variance under ideal proportional allocation is $\Var(\widehat\mu^{\mathrm{prop}})=\Var(Y^{\mathrm{prop}})/K$.

\subsubsection*{Single-shot Pauli measurement model}

For Pauli observables with outcomes $Z\in\{\pm1\}$, $\Var(Z\mid \underline\ell)=1-\mu_{\underline\ell}^2$.
Conditioned on $\underline\ell$, the estimator in~\eqref{eq:Y_def} satisfies
\[
\Var(Y\mid \underline\ell)
= \|g\|_1^2\,\Var(Z\mid \underline\ell)
= \|g\|_1^2(1-\mu_{\underline\ell}^2).
\]
Applying the law of total variance over $\underline\ell\sim p$ gives
\begin{align}
\Var(Y)
&= \mathbb E_{\underline\ell}\!\big[\Var(Y\mid \underline\ell)\big]
+ \Var_{\underline\ell}\!\big(\mathbb E[Y\mid \underline\ell]\big)\nonumber\\
&= \|g\|_1^2\sum_{\underline\ell}p(\underline\ell)(1-\mu_{\underline\ell}^2)
+ \Big(\|g\|_1\sum_{\underline\ell}|g(\underline\ell)|\,\mu_{\underline\ell}^2-\mu^2\Big)\nonumber\\
&= \|g\|_1^2-\mu^2.
\label{eq:Var_naive_singleshot_exact}
\end{align}
Thus in Pauli $\pm1$ special case the single-shot model the naïve design variance depends only on $\|g\|_1$ and $\mu$.

\paragraph{Intermediate $R$.}
If each configuration is measured $R$ times and averaged before applying the QPD weight, then $\Var(Z\mid\underline\ell)$ is replaced by $\Var(Z\mid\underline\ell)/R$, so the Born-rule contribution interpolates between~\eqref{eq:Var_naive_singleshot_exact} and~\eqref{eq:Var_naive_oracle_exact}.

\subsection{Use in the main numerics}

The expressions above define exact ground truth quantities for small instances where the configuration space can be enumerated. This was done in Appendix~\ref{app:numerical-methodology} for PEC on a small Trotter instance. In the main simulations we estimate variances from sampled outcomes, and for stratified designs we use the residual-aware allocation and estimator-variance formulas described in Appendix~\ref{app:allocation details}.

\section{Integer allocation, truncation bias, and the residual-stratum fix}
\label{app:allocation details}

This appendix specifies the finite-$K$ allocation used in our experiments and its implications.
Ideal proportional allocation assigns quota $q_s\equiv K w_s$ to each stratum $s$, which is generally non-integer. Since in practice we can only sample an integer number of times, to ensure unbiasedness we must carefully design our sampling scheme.
In this section we (i) recall Hamilton (largest-remainder) apportionment as a deterministic rounding rule, (ii) quantify the truncation bias that results if strata with $K_s=0$ are dropped following a straightforward Hamilton apportionment scheme, and (iii) present a residual-stratum modification that restores exact unbiasedness for any finite $K$.

The results in this appendix section hold for arbitrary stratification statistics $S$, as any proportional allocation scheme necessarily needs to deal with rounding. In the main text and Algorithm~\ref{alg:stratified-qpdmc} we exclusively focus on the counts-vector statistic $\mathbf{M}$ for simplicity.
\subsection{Setup}
Let $S=S(\underline\ell)$ be the stratum statistic, with stratum weights $w_s:=\Pr(S=s)$, $\sum_s w_s=1$.
Let $Y$ denote the per-configuration random variable (including any within-configuration averaging over $R$, if applicable), with
\[
\mu_s := \mathbb{E}[Y\mid S=s],\,\sigma_s^2 := \Var(Y\mid S=s)
\]
such that  $\mu=\sum_s w_s\mu_s$. We use the boundedness assumption
\begin{equation}
|Y|\le B,\quad B:=\ \lVert O\rVert_{\infty}\,\|g\|_1,
\label{eq:boundedY}
\end{equation}
which holds for Hermitian observables with $|O_x|\le \ \lVert O\rVert_{\infty}$ and QPD weights bounded by $\|g\|_1$.

\subsection{Hamilton (largest remainder) apportionment}
\label{app:hamilton}

Hamilton's method maps quotas $q_s=K w_s$ to an array of non-negative integers $(K_s)_s$ with $\sum_s K_s=K$ by rounding down and distributing the remaining units to the largest fractional remainders \cite{balinski_quota_1975, pukelsheim_quota_2017}. In the following we will use $(\cdot)_{s=1}^S$ to denote the array indexed from $s=1$ up to $s=S$. The pseudocode for Hamilton Apportionment is given in Algorithm~\ref{alg:hamilton}.

\begin{algorithm}[H]
\DontPrintSemicolon
\caption{\textsc{HamiltonApportion}$(w,K)$}
\label{alg:hamilton}
\KwIn{$w=(w_s)_{s=1}^S$ with $w_s\ge 0$, $\sum_s w_s=1$; $K\in\mathbb{Z}_{\ge 1}$.}
\KwOut{$(K_s)_{s=1}^S\in\mathbb{Z}_{\ge 0}$ with $\sum_s K_s = K$.}
$q_s \leftarrow K w_s$ for all $s$\;
$K_s \leftarrow \lfloor q_s\rfloor$ for all $s$\;
$r \leftarrow K - \sum_s K_s$\;
$\delta_s \leftarrow q_s - \lfloor q_s\rfloor$ for all $s$\;
Let $I$ be the indices of the $r$ largest $\delta_s$ (ties broken deterministically)\;
\For{$s\in I$}{ $K_s \leftarrow K_s + 1$ }
\Return{$(K_s)_{s=1}^S$}\;
\end{algorithm}

Hamilton apportionment closely matches the quota in the sense that the output allocation satisfies $|K_s - Kw_s| < 1$ for all $s$ and also preserves total budget $\sum_s K_s = K$. However, it can assign $K_s=0$ to strata with $w_s>0$ (whenever $w_s<1/K$). If such strata are omitted from estimation, the resulting estimator is biased for finite $K$, as this is mathematically equivalent to truncating all strata with $w_s < 1/K$ .

\subsection{Bias from dropping zero-allocation strata}
To see how this bias arises, let $\mathcal A:=\{s:K_s>0\}$ and $\mathcal D:=\{s:K_s=0,\ w_s>0\}$ for a given integer allocation, and let
\[
w_{\mathrm{drop}}:=\sum_{s\in\mathcal D} w_s .
\]
If one defines the truncated estimator
\begin{equation}
\widehat\mu_{\mathrm{drop}} := \sum_{s\in\mathcal A} w_s\,\widehat\mu_s,
\label{eq:dropest}
\end{equation}
then
\begin{equation}
\mathrm{Bias}(\widehat\mu_{\mathrm{drop}})
= \mathbb{E}[\widehat\mu_{\mathrm{drop}}]-\mu
= -\sum_{s\in\mathcal D} w_s\mu_s,
\label{eq:bias_exact}
\end{equation}
and by \eqref{eq:boundedY} we obtain the worst-case bound on the bias due to truncation
\begin{equation}
\bigl|\mathrm{Bias}(\widehat\mu_{\mathrm{drop}})\bigr|
\le B\,w_{\mathrm{drop}}
= \ \lVert O\rVert_{\infty}\,\|g\|_1\,w_{\mathrm{drop}}.
\label{eq:bias_bound}
\end{equation}
So when the QPD distribution has a long tail with many strata with $w_s < 1/K$ (or for very small $K$) $w_{\mathrm{drop}}$ can be large, leading to a potentially significant bias if we use Hamilton apportionment directly. This motivates explicitly sampling the dropped mass via a residual stratum to correct the bias, which we implement in Algorithm~\ref{alg:hamilton-residual} \textsc{ResidualHamiltonAllocate}.

\subsection{Residual-stratum construction for exact unbiasedness}
\label{app:allocation-residual}

We introduce a residual stratum $*$ that aggregates all zero-allocation strata, and allocate it an explicit budget $K_*$ while preserving the global budget constraint. This restores exact unbiasedness for finite values of $K$.

\paragraph{Residual stratum.}
Given a dropped set $\mathcal D$ from processing the output of \textsc{HamiltonApportion}, define
\[
w_* := \sum_{s\in\mathcal D} w_s,\quad
q(s):=\Pr(S=s\mid S\in\mathcal D)=\frac{w_s}{w_*}\ \ \ (s\in\mathcal D).
\]
To sample from the stratum means in the residual we draw $s\sim q(\cdot)$, then draw $\underline\ell\sim p(\underline\ell\mid S=s)$, and evaluate $Y$ on this circuit variant.

\paragraph{Allocation routine (Hamilton + residual).}
Our implementation \textsc{ResidualHamiltonAllocate} is specified in Algorithm~\ref{alg:hamilton-residual}. It (i) computes a Hamilton allocation with \textsc{HamiltonApportion}, (ii) plans a residual budget based on the initially dropped mass, (iii) borrows that many units from donor strata to keep the total budget fixed, and (iv) recomputes the dropped set (and hence the residual mixture) after borrowing. We then construct the proportional-stratified estimator..

\paragraph{Estimator construction.}
For each retained stratum $s\in\mathcal A:=\{s:K_s>0\}$, draw $K_s$ i.i.d.\ samples from $p(\underline\ell\mid S=s)$ and form $\widehat\mu_s$.
For the residual stratum, draw $K_*$ samples via the residual mixture and form $\widehat\mu_*$.
Define
\begin{equation}
\widehat\mu_{\mathrm{impl}}
:= \sum_{s\in\mathcal A} w_s\,\widehat\mu_s + w_*\,\widehat\mu_*.
\label{eq:res_est}
\end{equation}

\begin{proposition}[Exact unbiasedness with a residual stratum]
\label{prop:exact-unbiased-residual}
For any finite $K$, any allocations with $K_s\ge 1$ for $s\in\mathcal A$ and $K_*\ge 1$ whenever $w_*>0$, the estimator \eqref{eq:res_est} satisfies $\mathbb{E}[\widehat\mu_{\mathrm{impl}}]=\mu$.
\end{proposition}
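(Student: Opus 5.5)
The plan is to apply linearity of expectation to \eqref{eq:res_est} and show that each summand is unbiased for its own piece of the partition $\mu=\sum_{s\in\mathcal A}w_s\mu_s+\sum_{s\in\mathcal D}w_s\mu_s$. One preliminary point is that the allocation $(K_s)_s$, the sets $\mathcal A,\mathcal D$, the residual weight $w_*$ and the mixture $q$ are all deterministic functions of $(w_s)_s$ and $K$. They are produced by \textsc{ResidualHamiltonAllocate}, which never looks at sampled outcomes. We may therefore treat them as fixed constants and take expectations only over the configuration and Born-rule randomness. We also note that every stratum with $w_s>0$ lies in exactly one of $\mathcal A$ or $\mathcal D$. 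This uses the definitions $\mathcal A=\{s:K_s>0\}$ and $\mathcal D=\{s:K_s=0,\,w_s>0\}$, recomputed after borrowing. Strata with $w_s=0$ contribute nothing to $\mu$.

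\textbf{Retained strata.} For $s\in\mathcal A$ with $K_s\ge1$, each draw $\underline\ell\sim p(\underline\ell\mid S=s)$ together with its $R$ shots yields $Y_{s,t}$ with $\mathbb E[Y_{s,t}]=\mathbb E[Y\mid S=s]=\mu_s$. This follows from the tower property, exactly as in Appendix~\ref{app:exact-enumeration}, combined with \eqref{eq:mu_s_exact}. Hence $\mathbb E[\widehat\mu_s]=\mu_s$.

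\textbf{Residual stratum.} This is the only step that needs any care. A residual draw is the two-stage mixture $s\sim q$ followed by $\underline\ell\sim p(\cdot\mid S=s)$, so I will condition on the sampled stratum label and apply the law of total expectation:
\[
\mathbb E[Y_{*,t}]=\sum_{s\in\mathcal D}q(s)\,\mathbb E[Y\mid S=s]=\frac{1}{w_*}\sum_{s\in\mathcal D}w_s\mu_s .
\]
It follows that $\mathbb E[w_*\widehat\mu_*]=\sum_{s\in\mathcal D}w_s\mu_s$ whenever $w_*>0$, and the hypothesis $K_*\ge1$ guarantees that $\widehat\mu_*$ is well defined. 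When $w_*=0$, the set $\mathcal D$ is empty and the term vanishes by the convention $\widehat\mu_*=0$.

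Summing the two contributions gives $\sum_{s\in\mathcal A}w_s\mu_s+\sum_{s\in\mathcal D}w_s\mu_s=\sum_s w_s\mu_s=\mu$, which is the claim. The only potential subtlety is the determinism of the allocation step. If borrowing or tie-breaking were randomised, I would first condition on the allocation, apply the argument above, and then average over allocations, so unbiasedness would still hold.
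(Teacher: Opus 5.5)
Your proposal is correct and follows the same route as the paper. You use unbiasedness of each retained-stratum mean, the law of total expectation over the residual mixture $q(s)=w_s/w_*$ giving $\mathbb E[\widehat\mu_*]=\sum_{s\in\mathcal D}(w_s/w_*)\mu_s$, and summation over the partition of positive-weight strata into $\mathcal A$ and $\mathcal D$. Your extra remarks (that the allocation is a deterministic function of the weights and $K$, and the treatment of the case $w_*=0$) are sound and make explicit points the paper's short proof leaves implicit.
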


\begin{proof}
For $s\in\mathcal A$, $\mathbb E[\widehat\mu_s]=\mu_s$.
For the residual bucket, sampling is from the conditional mixture $S\in\mathcal D$, hence
$\mathbb E[\widehat\mu_*]=\sum_{s\in\mathcal D}(w_s/w_*)\mu_s$.
Substituting into \eqref{eq:res_est} gives $\mathbb{E}[\widehat\mu_{\mathrm{impl}}]=\sum_s w_s\mu_s=\mu$.
\end{proof}

\begin{lemma}[Borrowing cannot exhaust donors]
\label{lem:no-donor-exhaustion}
Assume \textsc{HamiltonApportion} returns $(K_s)_{s=1}^S$ with $\sum_s K_s=K$, and the residual plan satisfies $K_*\le K$.
Then the borrowing phase in Algorithm~\ref{alg:hamilton-residual} always achieves $\mathrm{deficit}=0$ (possibly creating additional zeros in the second pass, which we then re-group into the dropped set afterwards).
\end{lemma}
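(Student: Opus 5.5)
The plan is a simple counting argument. I will track the total number of units still held by ordinary strata during the borrowing phase, and show it stays positive for as long as the deficit is positive. I am assuming the borrowing phase of Algorithm~\ref{alg:hamilton-residual} works as the surrounding text describes: it initialises $\mathrm{deficit}\leftarrow K_*$, and while $\mathrm{deficit}>0$ it selects a donor stratum $s$ with current allocation $K_s\ge 1$, decrements $K_s$ and decrements $\mathrm{deficit}$. The donor-selection rule (for example, largest current $K_s$ or largest remainder) will not matter for the argument. The only property I need is that a donor must have $K_s\ge 1$, so that no allocation becomes negative.

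\textbf{Step 1 (invariant).} Let $T:=\sum_s K_s$ denote the current total allocation to ordinary strata, and let $b$ be the number of units borrowed so far. Initially $T=K$ by the hypothesis on \textsc{HamiltonApportion}, and $b=0$. Each borrowing step decreases $T$ by one and increases $b$ by one, so the invariant
\[
T+b=K,\qquad \mathrm{deficit}=K_*-b
\]
holds throughout. I would verify this by induction on the number of iterations.

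\textbf{Step 2 (a donor always exists).} Suppose that at some iteration $\mathrm{deficit}>0$, that is, $b<K_*$. Then the invariant together with $K_*\le K$ gives $T=K-b>K-K_*\ge 0$, so $T\ge 1$. Since all $K_s$ are nonnegative integers, at least one stratum has $K_s\ge 1$ and is an admissible donor. Hence the loop never stalls while the deficit is positive.

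\textbf{Step 3 (termination with zero deficit).} The deficit is a nonnegative integer that strictly decreases at every iteration. The loop therefore terminates after exactly $K_*$ iterations, and by Step 2 it cannot stop early for lack of a donor. At termination, $\mathrm{deficit}=0$ and $\sum_s K_s+K_*=K$, so the global budget is preserved.

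Some donors may reach $K_s=0$ during this phase, which is exactly the parenthetical remark in the statement. Those strata are regrouped into $\mathcal D$ in the second pass. That regrouping changes only $w_*$ and $q$, not any budgets, so it cannot reopen a deficit.

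The main obstacle is purely definitional. The proof depends on the exact donor rule in Algorithm~\ref{alg:hamilton-residual}, which is not reproduced above. If the algorithm restricts donors further, for instance to strata with $K_s\ge 2$ so that no new zeros are created, then Step 2 would need the stronger bound $T\ge |\{s:K_s\ge 1\}|+1$, and that could fail. The statement's explicit allowance that ``additional zeros'' may be created suggests donors need only $K_s\ge 1$, so I would adopt that reading and state it explicitly at the start of the proof.
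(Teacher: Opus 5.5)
Your proposal is correct and uses the same counting argument as the paper's proof: the total number of admissible unit decrements is $\sum_s K_s = K \ge K_*$, so the deficit must reach zero. The actual algorithm first borrows only from strata with $K_s\ge 2$ and then falls back to a second sweep that drains strata with $K_s\ge 1$; your invariant $T+b=K$ counts decrements from both passes together, so the concern in your last paragraph does not arise (and the regrouping into $\mathcal D$ happens after borrowing, not during the second pass).
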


\begin{proof}
In the second pass, a decrement is permitted whenever the current donor has $K_s\ge 1$.
The total number of admissible decrements before all donors reach zero equals $\sum_s K_s = K$.
Since $K_*\le K$, it is always possible to perform $K_*$ decrements in total; hence $\mathrm{deficit}$ must reach zero.
\end{proof}

\paragraph{Remark (why we recompute $\mathcal D$).}
Since borrowing in the second pass may create new zero-allocated strata, to preserve exact unbiasedness, the residual mixture must be defined using the \emph{final} dropped set
$\mathcal D=\{s:K_s=0,\ w_s>0\}$ after borrowing, as in Algorithm~\ref{alg:hamilton-residual}.

\subsection{Variance impact of rounding and the residual fix}
\label{app:var-rounding-residual}

Under ideal proportional allocation $K_s=K w_s$ (non-integer in general), the proportional stratified estimator has the simple variance result (see also \eqref{eq:var-strat-general-QPD}) 
\begin{equation}
\Var(\widehat\mu_{\mathrm{prop}})
=\sum_s \frac{w_s^2\sigma_s^2}{K w_s}
= \frac{1}{K}\sum_s w_s\sigma_s^2.
\label{eq:var-prop-ideal}
\end{equation}
With integer allocations and a residual stratum, the implemented estimator variance is
\begin{equation}
\Var(\widehat\mu_{\mathrm{impl}})
= \sum_{s\in\mathcal{A}} \frac{w_s^2\sigma_s^2}{K_s}
+ \frac{w_*^2\sigma_*^2}{K_*},
\label{eq:var-impl}
\end{equation}
where $\sigma_*^2:=\Var(Y\mid S\in\mathcal{D})$ and $\mathcal A,\mathcal D$ are defined by the final allocation.

\paragraph{Residual variance decomposition.}
Under the residual mixture $q(s)=w_s/w_*$, the law of total variance gives
\begin{equation}
\sigma_*^2
=\sum_{s\in\mathcal{D}} q(s)\,\sigma_s^2
+\Var_{q}(\mu_s).
\label{eq:sigma-star-decomp}
\end{equation}

\begin{algorithm}[H]
\DontPrintSemicolon
\caption{\textsc{ResidualHamiltonAllocate}$(w,K)$}
\label{alg:hamilton-residual}
\KwIn{$w=(w_s)_{s=1}^S$ with $w_s\ge 0$, $\sum_s w_s=1$; $K\in\mathbb{Z}_{\ge 1}$.}
\KwOut{Allocations $(K_s)_{s=1}^S$, residual count $K_*$, final dropped set $\mathcal D$, dropped mass $w_*$, residual mixture $q(\cdot)$.}
$(K_s)_{s=1}^S \leftarrow \textsc{HamiltonApportion}(w,K)$\;
$\mathcal D_0 \leftarrow \{s:K_s=0,\ w_s>0\}$; \quad $w_{\mathrm{drop},0}\leftarrow \sum_{s\in\mathcal D_0} w_s$\;
$K_* \leftarrow 0$\;
\If{$w_{\mathrm{drop},0}>0$}{
    $K_* \leftarrow \max\{1,\mathrm{round}(K\,w_{\mathrm{drop},0})\}$\;
    $\mathrm{deficit}\leftarrow K_*$\;
    Order strata by decreasing current $K_s$\;
    \ForEach{$s$ in order}{
        \While{$\mathrm{deficit}>0$ \KwSty{and} $K_s\ge 2$}{
            $K_s\leftarrow K_s-1$;\ \ $\mathrm{deficit}\leftarrow \mathrm{deficit}-1$\;
        }
        \If{$\mathrm{deficit}=0$}{\textbf{break}}
    }
    \If{$\mathrm{deficit}>0$}{
        Re-order strata by decreasing current $K_s$\;
        \ForEach{$s$ in order}{
            \While{$\mathrm{deficit}>0$ \KwSty{and} $K_s\ge 1$}{
                $K_s\leftarrow K_s-1$;\ \ $\mathrm{deficit}\leftarrow \mathrm{deficit}-1$\;
            }
            \If{$\mathrm{deficit}=0$}{\textbf{break}}
        }
    }
    \tcp{Donors cannot be exhausted (Lemma~\ref{lem:no-donor-exhaustion}).}
    \textbf{assert} $\mathrm{deficit}=0$\;
}
$\mathcal D \leftarrow \{s:K_s=0,\ w_s>0\}$;\quad $w_* \leftarrow \sum_{s\in\mathcal D} w_s$\;
\If{$K_*>0$ \KwSty{and} $w_*>0$}{
    Define $q(s)=w_s/w_*$ for $s\in\mathcal D$\;
}
\Else{
    $K_*\leftarrow 0$;\quad $\mathcal D\leftarrow \emptyset$;\quad $w_*\leftarrow 0$\;
}
\textbf{assert} $\sum_{s=1}^S K_s + K_* = K$\;
\Return $(K_s)_{s=1}^S,\ K_*,\ \mathcal D,\ w_*,\ q(\cdot)$\;
\end{algorithm}

\paragraph{Exact perturbation identity.}
Combining \eqref{eq:var-prop-ideal}--\eqref{eq:sigma-star-decomp} yields
\begin{align}
\Var(\widehat\mu_{\mathrm{impl}})-&\Var(\widehat\mu_{\mathrm{prop}})
=
\sum_{s\in\mathcal{A}} w_s^2\sigma_s^2\Bigl(\frac{1}{K_s}-\frac{1}{K w_s}\Bigr)
\nonumber\\
&+
\Bigl(\frac{w_*}{K_*}-\frac{1}{K}\Bigr)\sum_{s\in\mathcal{D}} w_s\sigma_s^2
+
\frac{w_*^2}{K_*}\Var_{q}(\mu_s).
\label{eq:var-perturbation-decomp}
\end{align}

\paragraph{A single computable certificate.}
Using \eqref{eq:boundedY}, we have $\sigma_s^2\le B^2$ and $\Var_q(\mu_s)\le B^2$, hence
\begin{equation}
\bigl|\Var(\widehat\mu_{\mathrm{impl}})-\Var(\widehat\mu_{\mathrm{prop}})\bigr|
\le
\mathsf{Cert}_{\mathrm{var}},
\label{eq:cert-var-def}
\end{equation}
where
\begin{equation}
\mathsf{Cert}_{\mathrm{var}}
:=
B^2\Bigg[
\sum_{s\in\mathcal{A}} w_s^2\Bigl|\frac{1}{K_s}-\frac{1}{K w_s}\Bigr|
\;+\;
w_*\Bigl|\frac{w_*}{K_*}-\frac{1}{K}\Bigr|
\;+\;
\frac{w_*^2}{K_*}
\Bigg].
\label{eq:cert-var}
\end{equation}
Equivalently, $\sqrt{\mathsf{Cert}_{\mathrm{var}}}$ upper-bounds the possible perturbation in estimator standard error induced purely by integer apportionment and residual aggregation.

\paragraph{Typical scaling.}
If $K_s\approx K w_s$ for $s\in\mathcal A$ and $K_*\approx K w_*$, then the certificate terms are small; in particular, when $K_*\asymp K w_*$ the residual contributions scale as
\[
w_*\Bigl|\frac{w_*}{K_*}-\frac{1}{K}\Bigr| = O\!\left(\frac{w_*}{K^2}\right),
\quad
\frac{w_*^2}{K_*} = O\!\left(\frac{w_*}{K}\right),
\]
so small final dropped mass $w_*$ suppresses the residual effect relative to the baseline $O(1/K)$ Monte Carlo scaling.

We will use the certificate values in Appendix~\ref{app:numerical-methodology} to verify that the implemented sampling routine under residual Hamilton apportionment does not significantly deviate from the ideal (non-integer) proportional allocation scheme. Specifically, we find that the certificate values obtained in our numerics are substantially smaller than the error bars, suggesting that the integerisation process do not materially affect our conclusions as to the variance reduction that proportional stratified sampling achieves.

\section{Dynamic programmes for counts-vector strata}
\label{app:counts-dp}

This appendix records the dynamic programmes used to (i) compute counts-vector stratum weights
$w_{\mathbf m}=\Pr(\mathbf M=\mathbf m)$ and (ii) sample configurations $\underline{\ell}$ conditionally on a target counts
vector $\mathbf m$. Throughout, we focus on product distributions of the form
\[
p(\underline\ell)=\prod_{i=1}^{\nu} p_i(\ell_i),
\quad \ell_i\in[d]:=\{1,\dots,d\},
\]
as induced by product-form QPD sampling with individual-primtiive probabilities $p_i(\ell_i)\propto |\gamma_i(\ell_i)|$ (after converting signed coefficients into probabilities via $\|g\|_1$). The statistic is the counts vector $\mathbf M(\underline\ell)$ in~\eqref{eq:counts-vector}.
The induced distribution of $\mathbf M$ is a Poisson-multinomial distribution (PMD), i.e.\ a sum of
independent (but not necessarily identically distributed) categorical variables. 

\subsection{Forward DP for stratum weights}
For $i\in\{0,1,\dots,\nu\}$ and any $\mathbf m\in\mathbb Z_{\ge 0}^d$ with $\sum_{k} m_k=i$, define
\[
W^{(i)}_{\mathbf m}
:=\Pr\!\bigl(\mathbf M^{(i)}=\mathbf m\bigr),
\]
where $\mathbf M^{(i)}$ is the counts vector of the prefix vector $(\ell_1,\dots,\ell_i)$.
Initialise $W^{(0)}_{\mathbf 0}=1$ and $W^{(0)}_{\mathbf m\neq \mathbf 0}=0$.
The standard PMD recursion is
\begin{equation}
W^{(i)}_{\mathbf m}
=
\sum_{k=1}^d p_i(k)\,W^{(i-1)}_{\mathbf m-e_k},
\label{eq:forward-dp-rec}
\end{equation}
where $e_k$ is the $k$th unit vector and $W^{(i-1)}_{\mathbf m-e_k}$ is interpreted as $0$ if any component
of $\mathbf m-e_k$ is negative. The desired stratum weights are the final layer
\[
w_{\mathbf m}:=W^{(\nu)}_{\mathbf m}
\quad(\sum_k m_k=\nu).
\]
At layer $i$, the number of admissible states is $\binom{i+d-1}{d-1}$; hence for fixed $d$, the total work needed to traverse all layers is $O(d\,\nu^{d})$. Storing all layers uses $O(\nu^{d})$ memory and is convenient for the backward sampler
below. (If only $w_{\mathbf m}$ are needed, a rolling two-layer implementation suffices; however in our implementation we always store the full table). Note also that if we store $\mathbf m$ as a tuple of length $d-1$ (since the last coordinate is determined by $\sum m_k=i$), we can reduce memory by one row; however this means that look-ups become correspondingly more expensive and we lose the ability to sanity check each intermediate calculation). For simplicity we did not implement this optimisation.

\subsection{Backward conditional sampling given a counts vector}
Fix a target $\mathbf m$ with $\sum_k m_k=\nu$ and $w_{\mathbf m}>0$. We generate
$\underline\ell\sim p(\underline\ell\mid \mathbf M=\mathbf m)$ by sampling indices in reverse order using
the cached table $\mathcal T$ and Bayes' Rule. The details are in Algorithm~\ref{alg:counts-conditional}.

Specifically, let $\mathbf m^{(\nu)}=\mathbf m$. For $i=\nu,\nu-1,\dots,1$, define
\begin{align} 
q_i(k\mid \mathbf m^{(i)})
&:=\Pr(\ell_i=k\mid \mathbf M^{(i)}=\mathbf m^{(i)})
\nonumber\\
&= \frac{p_i(k)\,W^{(i-1)}_{\mathbf m^{(i)}-e_k}}
{W^{(i)}_{\mathbf m^{(i)}}},
\label{eq:backward-prob}
\end{align}
with the convention $q_i(k\mid\mathbf m^{(i)})=0$ when $\mathbf m^{(i)}-e_k$ is invalid (any negative integer entries). Sample
$\ell_i\sim q_i(\cdot\mid \mathbf m^{(i)})$ and update
$\mathbf m^{(i-1)}\leftarrow \mathbf m^{(i)}-e_{\ell_i}$.
The resulting $\underline\ell=(\ell_1,\dots,\ell_\nu)$ has counts vector $\mathbf m$ by construction. Since this only requires traversing all rows in the table once, the time-complexity of each conditional sample is just $O(\nu d)$.  

\begin{algorithm}[H]
\caption{\textsc{CountsForwardDP}$(\{p_i\}_{i=1}^{\nu})$}
\label{alg:counts-forward}
\KwIn{Local categorical probabilities $p_i(\cdot)$ on $[d]$ for each $i\in[\nu]$.}
\KwOut{Final weights $\{w_{\mathbf m}\}$ and cached table $\mathcal T=\{W^{(i)}_{\mathbf m}\}_{i=0}^{\nu}$.}

Initialise $W^{(0)}_{\mathbf 0}\leftarrow 1$ and $W^{(0)}_{\mathbf m\neq \mathbf 0}\leftarrow 0$\;
\For{$i\leftarrow 1$ \KwTo $\nu$}{
  \ForEach{$\mathbf m\in\mathbb Z_{\ge 0}^d$ with $\sum_k m_k=i$}{
    $W^{(i)}_{\mathbf m}\leftarrow 0$\;
    \For{$k\leftarrow 1$ \KwTo $d$}{
      \If{$m_k>0$}{
        $W^{(i)}_{\mathbf m}\leftarrow W^{(i)}_{\mathbf m}+p_i(k)\,W^{(i-1)}_{\mathbf m-e_k}$\;
      }
    }
  }
}
Set $w_{\mathbf m}\leftarrow W^{(\nu)}_{\mathbf m}$ for all $\mathbf m$ with $\sum_k m_k=\nu$\;
\Return $(\{w_{\mathbf m}\},\ \mathcal T)$\;
\end{algorithm}

\subsection{Correctness (sketch)}
We briefly justify that (i) the forward DP computes the exact PMD prefix probabilities, and (ii) the
backward procedure produces exact conditional samples.

\begin{proposition}[Forward DP correctness]
For every $i\in\{0,1,\dots,\nu\}$ and every $\mathbf m\in\mathbb Z_{\ge 0}^d$ with $\sum_k m_k=i$,
the forward DP values satisfy
\[
W^{(i)}_{\mathbf m} \;=\; \Pr\!\bigl(\mathbf M^{(i)}=\mathbf m\bigr)
\quad\text{under}\quad
p(\ell_{1:i})=\prod_{j=1}^{i} p_j(\ell_j).
\]
In particular, the stratum weights are $w_{\mathbf m}=W^{(\nu)}_{\mathbf m}$.
\end{proposition}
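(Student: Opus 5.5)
The plan is induction on the prefix length $i$. The key structural fact is that under the product law $p(\ell_{1:i})=\prod_{j\le i}p_j(\ell_j)$, the index $\ell_i$ is independent of the prefix $(\ell_1,\dots,\ell_{i-1})$. The prefix counts vectors also satisfy the deterministic update $\mathbf M^{(i)}=\mathbf M^{(i-1)}+e_{\ell_i}$.

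For the base case $i=0$, the empty prefix has counts vector $\mathbf 0$ with probability one. This agrees with the initialisation $W^{(0)}_{\mathbf 0}=1$ and $W^{(0)}_{\mathbf m\neq\mathbf 0}=0$.

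For the inductive step, fix $i\ge 1$ and $\mathbf m$ with $\sum_k m_k=i$, and assume the claim holds at layer $i-1$. Partition the event $\{\mathbf M^{(i)}=\mathbf m\}$ according to the value of $\ell_i$:
\[
\Pr(\mathbf M^{(i)}=\mathbf m)=\sum_{k=1}^d \Pr\bigl(\ell_i=k,\ \mathbf M^{(i-1)}=\mathbf m-e_k\bigr).
\]
By independence of $\ell_i$ from the first $i-1$ indices, each summand factorises as $p_i(k)\Pr(\mathbf M^{(i-1)}=\mathbf m-e_k)$. By the inductive hypothesis this equals $p_i(k)W^{(i-1)}_{\mathbf m-e_k}$.

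The only detail needing care is the boundary convention. When $m_k=0$, the vector $\mathbf m-e_k$ has a negative entry, and no prefix can have negative counts, so that summand is genuinely zero. This matches both the convention in \eqref{eq:forward-dp-rec} and the \texttt{if} $m_k>0$ guard in Algorithm~\ref{alg:counts-forward}. When $m_k>0$, the vector $\mathbf m-e_k$ has entries summing to $i-1$, so it is an admissible state of layer $i-1$ and the inductive hypothesis applies. The sum is therefore exactly the recursion \eqref{eq:forward-dp-rec}, which closes the induction.

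Taking $i=\nu$ gives $w_{\mathbf m}=\Pr(\mathbf M=\mathbf m)=W^{(\nu)}_{\mathbf m}$, since $\mathbf M^{(\nu)}=\mathbf M$.

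I expect no real obstacle here. The only step that needs explicit justification is the factorisation via independence, which follows directly from the product form \eqref{eq:p-ell-main}. Zero-probability padded categories with $p_i(k)=0$ need no special treatment, because they simply contribute zero terms to the sum.
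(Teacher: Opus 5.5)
Your proof is correct and follows essentially the same route as the paper's: induction on the prefix length, partitioning on the last index $\ell_i$, and factorising to obtain the recursion, with the zero convention covering negative entries. You also spell out two points the paper leaves implicit, namely the independence of $\ell_i$ from the prefix and why $\mathbf m-e_k$ is an admissible layer-$(i-1)$ state.
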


\begin{proof}
We argue by induction on $i$. The base case $i=0$ holds by construction:
$W^{(0)}_{\mathbf 0}=1=\Pr(\mathbf M^{(0)}=\mathbf 0)$ and $W^{(0)}_{\mathbf m}=0$ otherwise.
For the inductive step, fix $i\ge 1$ and an admissible $\mathbf m$ with $\sum_k m_k=i$.
Partition on the last index $\ell_i\in[d]$, then:
\[
\Pr(\mathbf M^{(i)}=\mathbf m)
=\sum_{k=1}^d \Pr(\ell_i=k)\,\Pr(\mathbf M^{(i-1)}=\mathbf m-e_k).
\]
Since $\Pr(\ell_i=k)=p_i(k)$ and $\Pr(\mathbf M^{(i-1)}=\mathbf m-e_k)=0$ when $\mathbf m-e_k$ has
a negative entry, the recursion is exactly~\eqref{eq:forward-dp-rec}. Applying the inductive
hypothesis to the $(i\!-\!1)$-prefix probabilities yields $W^{(i)}_{\mathbf m}=\Pr(\mathbf M^{(i)}=\mathbf m)$. 
\end{proof}

\begin{proposition}[Backward sampler correctness]
Fix any target $\mathbf m$ with $\sum_k m_k=\nu$ and $w_{\mathbf m}=W^{(\nu)}_{\mathbf m}>0$.
Algorithm~\ref{alg:counts-conditional} outputs a configuration
$\underline\ell=(\ell_1,\dots,\ell_\nu)$ distributed as
$p(\underline\ell\mid \mathbf M=\mathbf m)$.
\end{proposition}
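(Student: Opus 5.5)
The plan is to show that the joint law of the sequence $(\ell_\nu,\ell_{\nu-1},\dots,\ell_1)$ produced by Algorithm~\ref{alg:counts-conditional} equals $p(\underline\ell\mid\mathbf M=\mathbf m)$. I would do this by computing the probability of any fixed output string directly as a telescoping product of the backward probabilities~\eqref{eq:backward-prob}. First I would check that each step is well defined. I will show by backward induction that every visited state $\mathbf m^{(i)}$ satisfies $W^{(i)}_{\mathbf m^{(i)}}>0$. This holds at $i=\nu$ by hypothesis. If it holds at step $i$, then $q_i(\cdot\mid\mathbf m^{(i)})$ sums to one by the forward recursion~\eqref{eq:forward-dp-rec}, since its numerators are exactly the summands defining $W^{(i)}_{\mathbf m^{(i)}}$. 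Any index $k$ drawn with positive probability must have $p_i(k)W^{(i-1)}_{\mathbf m^{(i)}-e_k}>0$, so the next state again has positive weight. The Forward DP correctness proposition lets me identify every $W^{(i)}$ with the true prefix probability.

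Next I would fix a string $\underline\ell$ with $\mathbf M(\underline\ell)=\mathbf m$. Its prefix count vectors are forced: $\mathbf m^{(i)}=\mathbf M^{(i)}(\ell_{1:i})$, and $\mathbf m^{(i)}-e_{\ell_i}=\mathbf m^{(i-1)}$. The probability that the sampler outputs this string is then
\[
\prod_{i=1}^{\nu}\frac{p_i(\ell_i)\,W^{(i-1)}_{\mathbf m^{(i-1)}}}{W^{(i)}_{\mathbf m^{(i)}}}
=\frac{W^{(0)}_{\mathbf 0}}{W^{(\nu)}_{\mathbf m}}\prod_{i=1}^\nu p_i(\ell_i)
=\frac{p(\underline\ell)}{w_{\mathbf m}},
\]
because the intermediate $W$ factors telescope and $W^{(0)}_{\mathbf 0}=1$. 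If some factor in this product vanishes, the string is simply never produced. That is consistent with the formula, because $p(\underline\ell)=0$ in that case; indeed a zero $W^{(i-1)}$ along a genuine prefix forces $p(\underline\ell)=0$.

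For strings with $\mathbf M(\underline\ell)\neq\mathbf m$, the update rule subtracts $e_{\ell_i}$ at each step and terminates at $\mathbf 0$ at $i=0$. Any output therefore has counts exactly $\mathbf m$, so these strings have probability zero. Together with the display above, this gives exactly $p(\underline\ell)\mathbf 1[\mathbf M(\underline\ell)=\mathbf m]/w_{\mathbf m}=p(\underline\ell\mid\mathbf M=\mathbf m)$.

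The main obstacle is bookkeeping rather than depth: handling zero-probability branches and invalid states (negative entries) cleanly. I would treat this with the positivity induction in the first step. Alternatively, one can note that the Bayes-rule form of~\eqref{eq:backward-prob} is exactly $\Pr(\ell_i=k\mid\mathbf M^{(i)}=\mathbf m^{(i)})$. Independence of the $\ell_j$ under the product law implies $\Pr(\ell_i=k\mid \mathbf M^{(i)},\ell_{i+1:\nu})=\Pr(\ell_i=k\mid\mathbf M^{(i)})$, which gives the chain-rule factorisation as a cross-check.
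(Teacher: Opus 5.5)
Your proof is correct, and it takes a slightly different route from the paper.

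The paper argues stepwise. Bayes' rule identifies each backward draw~\eqref{eq:backward-prob} with the true conditional $\Pr(\ell_i=k\mid\mathbf M^{(i)}=\mathbf m^{(i)})$, and the chain rule then assembles the joint law. That assembly tacitly uses the conditional-independence fact you isolate in your cross-check. Conditioning on $\mathbf M=\mathbf m$ together with the already-drawn suffix $\ell_{i+1:\nu}$ is the same as conditioning on $\mathbf M^{(i)}=\mathbf m^{(i)}$ together with that suffix. Moreover, $\mathbf M^{(i)}$ depends only on the prefix, which is independent of the suffix under the product law. The paper never states this step.

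Your main argument avoids that step by computing the output probability of a fixed string directly, with the $W$ factors telescoping to $p(\underline\ell)/w_{\mathbf m}$. It needs only three ingredients: the forward recursion (to normalise each $q_i$), $W^{(0)}_{\mathbf 0}=1$, and the identification $W^{(\nu)}_{\mathbf m}=\Pr(\mathbf M=\mathbf m)$. Your positivity induction and your handling of vanishing factors also settle well-definedness issues that the paper's sketch leaves implicit, namely division by $W^{(i)}_{\mathbf m^{(i)}}$ and strings with $p(\underline\ell)=0$.

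The paper's formulation buys the interpretation of the sampler as a sequence of exact conditional draws. Yours is the more self-contained verification.
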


\begin{proof}
Let $\mathbf m^{(\nu)}=\mathbf m$, and after sampling $\ell_i$ define the update
$\mathbf m^{(i-1)}=\mathbf m^{(i)}-e_{\ell_i}$ as in the algorithm. For each $i$ and any admissible
$\mathbf m^{(i)}$ with $W^{(i)}_{\mathbf m^{(i)}}>0$, Bayes' rule gives
\begin{align*}
    \Pr(\ell_i=k \mid \mathbf M^{(i)}=\mathbf m^{(i)})
&=\frac{\Pr(\ell_i=k,\mathbf M^{(i)}=\mathbf m^{(i)})}{\Pr(\mathbf M^{(i)}=\mathbf m^{(i)})}\\
&=\frac{p_i(k)\,W^{(i-1)}_{\mathbf m^{(i)}-e_k}}{W^{(i)}_{\mathbf m^{(i)}}}
\end{align*}

which is exactly~\eqref{eq:backward-prob} and matches the categorical law used in
Algorithm~\ref{alg:counts-conditional}.
Applying the chain rule for conditional probabilities yields the joint law of the sampled sequence:
\[
\Pr(\ell_{1:\nu}=\underline\ell \mid \mathbf M=\mathbf m)
=\prod_{i=\nu}^{1} \Pr(\ell_i \mid \mathbf M^{(i)}=\mathbf m^{(i)}).
\]
The update rule $\mathbf m^{(i-1)}=\mathbf m^{(i)}-e_{\ell_i}$ ensures that the sampled prefix
always remains consistent with the remaining counts. Therefore, the algorithm samples exactly from
$p(\underline\ell\mid \mathbf M=\mathbf m)$.
\end{proof}

\begin{algorithm}[H]
\caption{\textsc{CountsConditionalSample}$(\mathbf m,\{p_i\}_{i=1}^{\nu},\mathcal T)$}
\label{alg:counts-conditional}
\KwIn{Target $\mathbf m\in\mathbb Z_{\ge 0}^d$ with $\sum_k m_k=\nu$; local probabilities $\{p_i\}$; cached table $\mathcal T=\{W^{(i)}_{\mathbf m}\}$.}
\KwOut{Configuration $\underline\ell\sim p(\underline\ell\mid \mathbf M=\mathbf m)$.}

$\mathbf m^{(\nu)}\leftarrow \mathbf m$\;
\For{$i\leftarrow \nu$ \KwTo $1$}{
  \For{$k\leftarrow 1$ \KwTo $d$}{
    \eIf{$m^{(i)}_k>0$ and $W^{(i)}_{\mathbf m^{(i)}}>0$}{
      $q_i(k)\leftarrow \dfrac{p_i(k)\,W^{(i-1)}_{\mathbf m^{(i)}-e_k}}{W^{(i)}_{\mathbf m^{(i)}}}$\;
    }{
      $q_i(k)\leftarrow 0$\;
    }
  }
  Sample $\ell_i$ from the categorical distribution $q_i(\cdot)$\;
  $\mathbf m^{(i-1)}\leftarrow \mathbf m^{(i)}-e_{\ell_i}$\;
}
\Return $\underline\ell=(\ell_1,\dots,\ell_\nu)$\;
\end{algorithm}

\paragraph{Intuition.}
The backward pass is a sequence of \emph{exact} conditional draws: at step $i$ it samples $\ell_i$
from the correct conditional distribution given the remaining counts $\mathbf m^{(i)}$, then updates
the remaining counts accordingly. This enforces $\mathbf M(\underline\ell)=\mathbf m$ while preserving the
correct conditional law.

\subsection{Implementation notes and alternatives}

The forward DP described above is an exact algorithm for Poisson--multinomial probabilities; it is a generalisation of the DP for calculating the density function of a Poisson-binomial distribution \cite{hong_computing_2013} and also see the PMD review in Ref.~\cite{lin_poisson_2022}. The Poisson-binomial case can be easily recovered as a special case. 

Alternatively, computing only the PMD marginal weights $\{w_{\mathbf m}\}$ can also be done by polynomial/convolution methods (including FFT/DFT variants) and by Gaussian approximation schemes that scale to substantially larger $(\nu,d)$ \cite{lin_poisson_2022}. Our application, however, requires repeated sampling from the conditional law $p(\underline\ell\mid \mathbf M=\mathbf m)$ across many strata, which these other algorithms do not readily facilitate. However, there are trade-offs to our approach.

\paragraph{Scaling, memory, and parallelism.}
A practical limitation of our implementation of counts-vector stratification is the growth of the DP state space with width $d$. The number of distinct counts vectors for fixed $(\nu,d)$ is $|\mathcal S|=\binom{\nu+d-1}{d-1}=O(\nu^{d-1})$, and caching the intermediate tables required for exact backward sampling scales as $O(\nu^{d})$ in memory. So while the complexity scales polynomially in $\nu$ for a fixed $d$, it scales \emph{exponentially} in $d$. Figure~\ref{fig:dp_state_count} reports the total number of cached DP states $\sum_{i=0}^{\nu}\binom{i+d-1}{d-1}$ as a function of $(\nu,d)$. In our numerical experiments (Section~\ref{sec:numerical results}), moderate depths together with modest widths already approach memory limits on consumer hardware ($\nu \sim 100$, $d=4$). The forward pass costs $O(d\,\nu^{d})$ arithmetic operations, while exact backward sampling requires access to the cached tables $\{W^{(i)}_{\mathbf m}\}$. We note that it is also possible to compute marginal values ``on-the-fly'' rather than storing the full table, at the cost of increasing the run-time required to generate each sample. However, this is fairly inefficient, as once these tables are cached conditional samples $\underline\ell\sim p(\underline\ell\mid \mathbf M=\mathbf m)$ can be generated independently, so the backward sampler can be parallelised (with shared read-only access to the cached DP table).

\paragraph{Numerical stability.}
Because $w_{\mathbf m}$ can become very small for large $\nu$, numerical underflow can arise in principle.
Standard safeguards include storing $W$ in log space, or renormalising each layer and tracking scaling factors.
We did not encounter instabilities in the instances benchmarked here, but guarding against underflow is
important for larger depths and/or more extreme inhomogeneity.

\begin{figure}
    \centering
    \includegraphics[width=\linewidth]{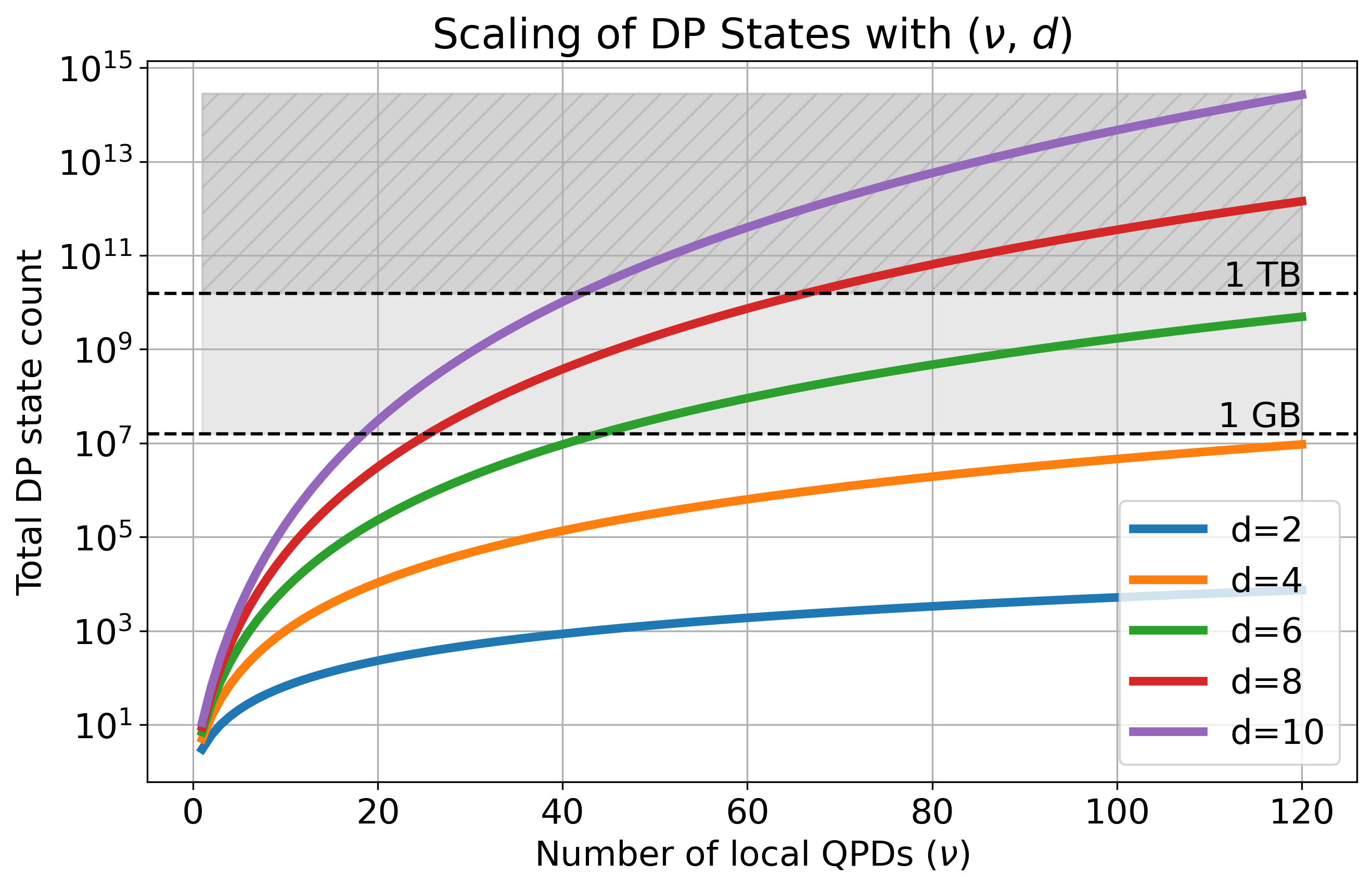}
    \caption{State-space scaling of the cached counts-vector DP. Shown is the number of reachable DP states
    $\sum_{i=0}^{\nu}\binom{i+d-1}{d-1}$, i.e.\ the size of the cached forward table required to compute exact
    counts-marginals and to enable exact backward conditional sampling. As a
    simple architecture-agnostic proxy, we annotate 1\,GB/1\,TB reference lines assuming one float64 (8 bytes)
    stored per state. Real implementations typically incur constant-factor overheads (data type, number of cached layers, metadata, etc.), or benefit from
    sparse/compressed representations, but the state count captures the fundamental scaling with $(\nu,d)$.}
    \label{fig:dp_state_count}
\end{figure}

\paragraph{State-space concentration and motivation for pruning.}
Although the DP must enumerate all reachable counts vectors to compute exact marginals and support exact conditional sampling, the induced Poisson--multinomial distribution is often sharply concentrated in practical product-form QPDs, because certain local primitives are much rarer than others (e.g.\ antipodal rotations in PAI, or non-identity inverse-error insertions in PEC). Figure~\ref{fig:pai_pmd_concentration} quantifies this effect for the TFIM--PAI instances of Section~\ref{sec:numerical results}: for each Trotter depth $L$ we sort strata by decreasing mass and plot the cumulative mass $\sum_{i\le t} w_{(i)}$. Across depths, the smallest index $t_{0.99}$ capturing $99\%$ of the total mass corresponds to only $\sim 2\%$--$9\%$ of the reachable strata. Operationally, this strongly motivates approximate variants in which the DP computation and/or cached state space are pruned toward the high-mass region. However, we do \emph{not} use pruning in the main experiments: all results in the paper use the exact DP to compute $\{w_{\mathbf m}\}$ and to enable exact sampling from $p(\underline\ell\mid \mathbf M=\mathbf m)$. Nevertheless, Fig.~\ref{fig:pai_pmd_concentration} suggests that for larger $(\nu,d)$ one can plausibly trade a controlled approximation (or controlled bias, if desired) for substantial reductions in preprocessing cost. Developing pruning/coarsening rules that preserve a usable backward sampler is therefore a natural next step.

\begin{figure}
    \centering
    \includegraphics[width=\linewidth]{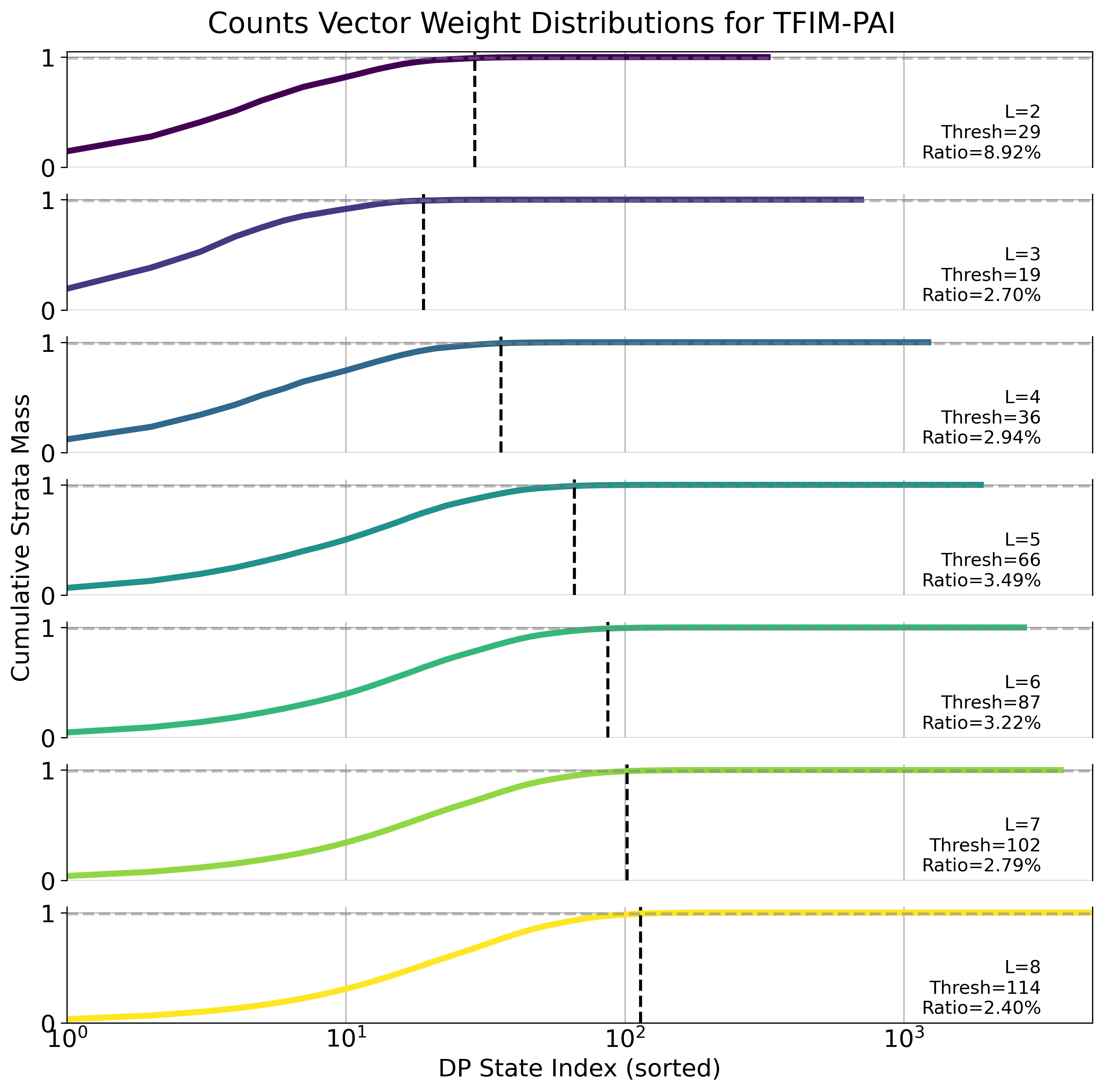}
    \caption{Counts-vector stratum mass concentration for TFIM--PAI.
For each Trotter depth $L$, we compute the counts-vector stratum weights $w_{\mathbf m}=\Pr(\mathbf M=\mathbf m)$ (Poisson--multinomial marginals) and sort strata by decreasing mass $w_{(1)}\ge w_{(2)}\ge\cdots$. Shown is the cumulative mass $\sum_{i\le t} w_{(i)}$ versus sorted index $t$ (log-scale). The horizontal dashed line marks 0.99, and the vertical dashed line marks the smallest index $t_{0.99}$ such that $\sum_{i\le t_{0.99}} w_{(i)}\ge 0.99$. Each panel reports the threshold $t_{0.99}$ and the ratio $t_{0.99}/|\mathcal S|$ (labelled `Thresh' and `Ratio' respectively), where $|\mathcal S|$ is the number of reachable DP states (counts-vector strata).
In this TFIM--PAI instance ($d=3$ at every location), $|\mathcal S|=\binom{\nu+d-1}{d-1}$.
Across depths, $99\%$ of the stratum mass is supported on only $\sim 2\%$--$9\%$ of strata, illustrating a
pronounced head--tail structure and motivating truncated/coarsened variants of the exact DP.}
    \label{fig:pai_pmd_concentration}
\end{figure}

\section{Numerical methodology: variance estimation, bootstrap uncertainty, absolute variances, and validation}
\label{app:numerical-methodology}

This appendix documents the numerical procedures used in Section~\ref{sec:numerical results} to (i) estimate the
\emph{estimator variance} under each sampling design, (ii) attach uncertainty bands via a nonparametric bootstrap,
(iii) report \(\|g\|_1\)-normalised \emph{absolute} variances (Appendix~\ref{subsec:abs-variances}), and (iv) validate
our implementations against an exactly enumerable toy instance (Appendix~\ref{subsec:mc-validation}).

Throughout, a single run of a sampling design (naïve or proportional-stratified) produces realised per-configuration scalars \(\{Y_j\}_{j=1}^K\), where each
\(Y_j\) may already include an \(R\)-shot average as in~\eqref{eq:Yk-QPD}. We write
\[
\hat{\mu}:=\widehat{Y}^{(R)}_K
\]
for the corresponding mean estimator (naïve or stratified) of this dataset. 

\subsection{Estimator variance versus design variance}
\label{subsec:var-vs-designvar}

The main reported quantity is the variance of the \emph{mean estimator} \(\Var(\hat{\mu})\).
For the naïve i.i.d.\ design, \(\Var(\hat{\mu}_{\mathrm{naive}})=\Var(Y)/K\).
For ideal proportional stratification (non-integer quotas) one has
\(
\Var(\hat{\mu}_{\mathrm{prop}})=(1/K)\sum_s w_s\sigma_s^2
\),
while our implemented estimator uses the residual-aware form from Appendix~\ref{app:allocation details} which has an additional correction term to the variance (which we emphasise is small in our experiments).

In figures intended to compare against \(K\)-independent reference values (e.g.\ convergence checks),
we additionally report the \emph{scaled estimator variance}
\[
K\,\widehat{\Var}(\hat{\mu}),
\]
which equals \(\Var(Y)\) for the i.i.d.\ design and equals \(\sum_s w_s\sigma_s^2\) for ideal proportional
stratification. Since this is a linear scaling, all confidence intervals scale directly.

\subsection{Plug-in estimators for the Variance}
\label{subsec:plugin-varmu}
Here we note the formulae for estimating \(\Var(\hat{\mu})\).
\paragraph{Naïve design.}
Given realised samples \(\{Y_j\}_{j=1}^K\), define
\[
\bar{Y}:=\frac{1}{K}\sum_{j=1}^K Y_j,
\quad
s^2 := \frac{1}{K-1}\sum_{j=1}^K (Y_j-\bar{Y})^2.
\]
We use the standard plug-in estimator
\begin{equation}
\widehat{\Var}(\hat{\mu}_{\mathrm{naive}}) := \frac{s^2}{K}.
\label{eq:plugin_naive_varmu_app}
\end{equation}

\paragraph{Residual-aware stratified design.}
Let \(\mathcal{A}\) be the retained strata with realised counts \(K_s\ge 1\), and let \((w_*,K_*)\) denote the
residual bucket weight and realised count (Appendix~\ref{app:allocation details}).
Within each retained stratum compute the unbiased sample variance \(s_s^2\) from the realised
\(\{Y_{s,t}\}_{t=1}^{K_s}\), and similarly compute \(s_*^2\) from residual draws if \(K_*>1\).
We then estimate
\begin{equation}
\widehat{\Var}(\hat{\mu}_{\mathrm{impl}})
:=
\sum_{s\in\mathcal{A}} w_s^2\,\frac{s_s^2}{K_s}
\;+\;
w_*^2\,\frac{s_*^2}{K_*}
\label{eq:plugin_strat_varmu_app}
\end{equation}
using the exact population quantities $w_s$ and $w_*$ obtained via the DP. The resulting estimator of the variance is the empirical quantity used both for absolute variances and for variance ratios in the main text.

\subsection{Nonparametric bootstrap uncertainty bands}
\label{subsec:bootstrap}

To obtain uncertainty bands on our estimate of the variance we use a nonparametric bootstrap with \(B=1024\) resamples in all plots.

\paragraph{Bootstrap for \(\widehat{\Var}(\hat{\mu})\): naïve.}
For \(b=1,\dots,B\), draw a resample \(\{Y_j^{*(b)}\}_{j=1}^K\) by sampling with replacement from
\(\{Y_j\}_{j=1}^K\). Compute \(s_{*(b)}^2\) and set
\[
\widehat{\Var}^{(b)}(\hat{\mu}_{\mathrm{naive}}):=s_{*(b)}^2/K.
\]
Percentile intervals of \(\{\widehat{\Var}^{(b)}(\hat{\mu}_{\mathrm{naive}})\}_{b=1}^B\) define the reported bands.

\paragraph{Bootstrap for \(\widehat{\Var}(\hat{\mu})\): stratified (residual-aware).}
For each stratum \(s\in\mathcal{A}\), resample \emph{within the stratum} by sampling with replacement from the
realised \(\{Y_{s,t}\}_{t=1}^{K_s}\), preserving \(K_s\). If a residual bucket is present, resample within the residual
pool of size \(K_*\). Compute the resampled variances \(s_{s,*(b)}^2\) and (if applicable) \(s_{*,*(b)}^2\), and set
\[
\widehat{\Var}^{(b)}(\hat{\mu}_{\mathrm{impl}})
:=
\sum_{s\in\mathcal{A}} w_s^2\,\frac{s_{s,*(b)}^2}{K_s}
\;+\;
w_*^2\,\frac{s_{*,*(b)}^2}{K_*}.
\]
Percentile intervals over \(b\) again yield uncertainty bands.

\paragraph{Bootstrap for variance ratios.}
For a fixed problem instance and fixed \(K\), let \(\widehat{\Var}(\hat{\mu}_{\mathrm{impl}})\) and
\(\widehat{\Var}(\hat{\mu}_{\mathrm{naive}})\) be the plug-in estimator variances from independent runs. We report
\[
\widehat{\rho}
:=
\frac{\widehat{\Var}(\hat{\mu}_{\mathrm{impl}})}
{\widehat{\Var}(\hat{\mu}_{\mathrm{naive}})}.
\]
To attach bands, we bootstrap the two runs independently and form
\[
\widehat{\rho}^{(b)}
:=
\frac{\widehat{\Var}^{(b)}(\hat{\mu}_{\mathrm{impl}})}
{\widehat{\Var}^{(b)}(\hat{\mu}_{\mathrm{naive}})}.
\]
Percentiles of \(\{\widehat{\rho}^{(b)}\}_{b=1}^B\) define the reported confidence bands. When both designs use the
same \(K\) (as in this paper), the factor of \(K\) cancels in the ratio, so \(\widehat{\rho}\) is identical whether interpreted in terms of
\(\Var(\hat{\mu})\) or \(K\Var(\hat{\mu})\).

\subsection{Absolute variances}
\label{subsec:abs-variances}

Section~\ref{sec:numerical results} emphasised the ratio between the naïve and proportional-stratified estimators which isolated the constant-factor improvement due to stratified configuration sampling. For completeness, here in
Fig.~\ref{fig:absvar-pec} and \ref{fig:absvar-PAI} we report the corresponding \emph{absolute} estimator variances normalised by the
circuit QPD 1-norm squared, i.e. \(\Var(\widehat Y_{K,R})/\|g\|_1^2\), for both naïve and proportional designs to emphasise the identical asymptotic scaling.

For Pauli-valued observables \(|O_x|\le 1\) and QPD weights satisfy \(|w(\underline\ell)|=\|g\|_1\), so
\(|Y^{(R)}|\le \|g\|_1\) almost surely. Hence
\[
\frac{\Var(\widehat Y_{K,R})}{\|g\|_1^2}
= \frac{1}{K}\frac{\Var(Y^{(R)})}{\|g\|_1^2}
\le \frac{1}{K},
\]
independently of depth \(L\) and measurement model \(R\). The grey horizontal line in both plots shows this bound for our fixed \(K=8192\).

Specifically, normalising by \(\|g\|_1^2\) removes the dominant exponential scale set by the QPD overhead, allowing remaining dependence on \(L\) and \(R\) to be compared directly. The variance ratios \(\rho(R)\) in the main text are obtained by taking, at each \((L,R)\), the ratio of the stratified and naïve curves. Normalised absolute estimator variances for the TFIM benchmark. We plot \(\widehat{\Var}(\widehat Y_{K,R})/\|g\|_1^2\) versus Trotter depth \(L\) at fixed configuration budget \(K=8192\), for naïve sampling and counts-vector proportional stratification, under three measurement models: single-shot (blue \(R=1\)), intermediate (green \(R=64\)), and oracle (red \(R=\infty\)). The horizontal grey line shows the bound \(1/K\), which follows from \(|Y^{(R)}|\le \|g\|_1\) for Pauli observables. Importantly, note how both naive and stratified exhibit the same asymptotic scaling with $\|g\|_1^2$. The corresponding variance ratios \(\rho(R)\) plotted in the main text are obtained by taking, at each \(L\) and \(R\), the ratio of the stratified (solid) and naïve (dashed) curves.

\begin{figure}
    \centering
    \includegraphics[width=0.95\linewidth]{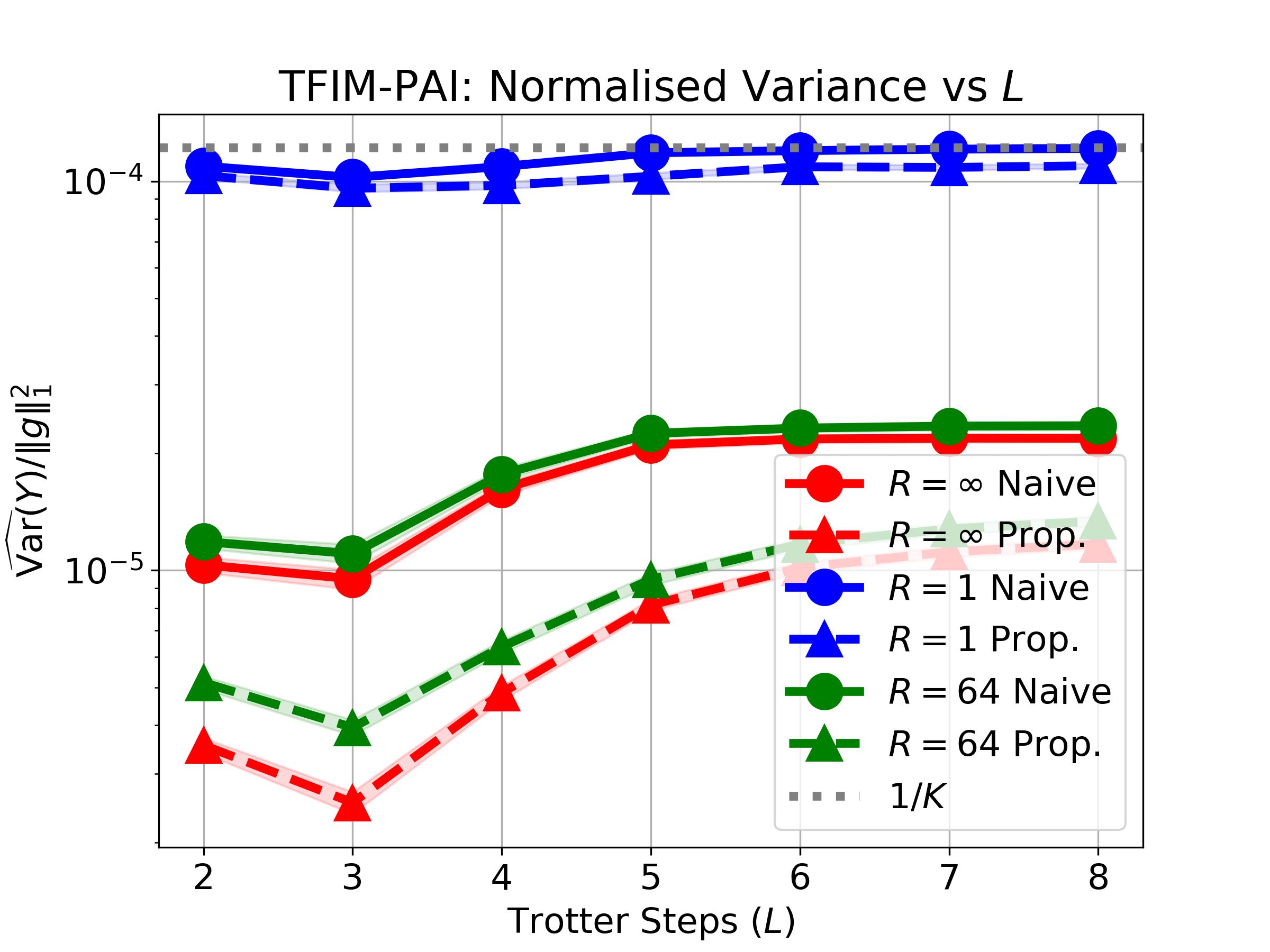}
    \caption{TFIM-PAI absolute variance scaling against 1-norm}
    \label{fig:absvar-PAI}
\end{figure}

\begin{figure}
    \centering
    \includegraphics[width=0.95\linewidth]{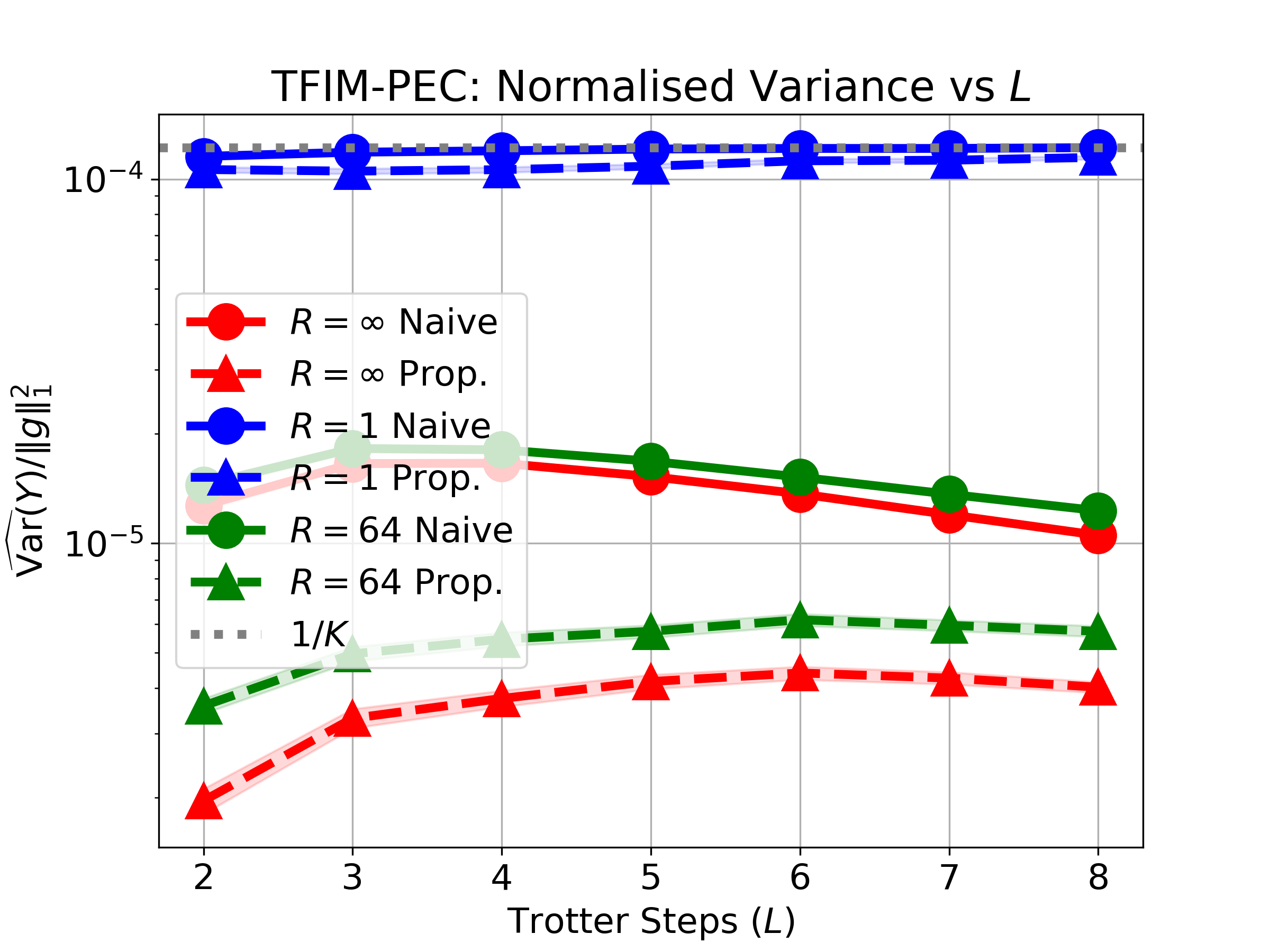}
    \caption{TFIM-PEC absolute variance scaling against 1-norm}
    \label{fig:absvar-pec}
\end{figure}

\subsection{Monte Carlo validation and implementation diagnostics}
\label{subsec:mc-validation}

To validate our implementations (naïve sampling, stratified sampling, and the residual-aware apportionment),
we performed convergence checks on a small PEC instance where exact enumeration is feasible.

\paragraph{Enumerable PEC instance.}
We consider a TFIM Trotter circuit with \(n=3\) qubits, \(L=1\), and open boundary conditions. As in the main text we
fix \(h=0.6\), \(J=0.7\), and \(t=1.0\), and add a single-qubit depolarising noise channel of strength \(p=0.01\)
after each gate (independently for the two legs of two-qubit gates). This yields \(\nu=7\) local PEC QPD locations.
Each location has a 4-term local QPD, so there are \(4^7=16384\) circuit variants. In the oracle model, exact
reference values for the mean and the design variances are available by explicit summation. (When performing classical simulations, Born-rule noise
is just an additional sampling layer on top of oracle outputs, so we focus on the oracle setting here.)

\begin{figure}
    \centering
    \includegraphics[width=0.8\linewidth]{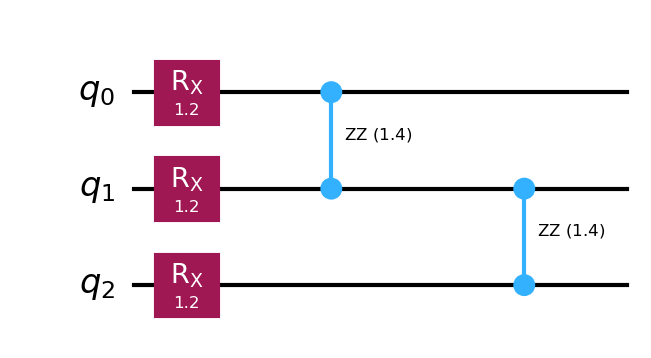}
    \caption{Trotter TFIM circuit with \(n=3\), \(L=1\), open boundary conditions. In the convergence checks we apply a
    single-qubit depolarising channel of strength \(p=0.01\) after each gate (independently for the two-qubit rotations).}
    \label{fig:convergence check circuit}
\end{figure}

\paragraph{Convergence of mean and variance estimates.}
Fig.~\ref{fig:mc-convergence} shows Monte Carlo convergence of (a) the mean
estimator and (b) the scaled estimator variance \(K\,\widehat{\Var}(\hat{\mu})\), for both naïve and stratified
designs, against the exact values obtained by full enumeration. Bands are \(95\%\) bootstrap intervals with
\(B=1024\) resamples, using the procedures in Appendix~\ref{subsec:bootstrap}.

\begin{figure}
    \includegraphics[width=0.95\linewidth]{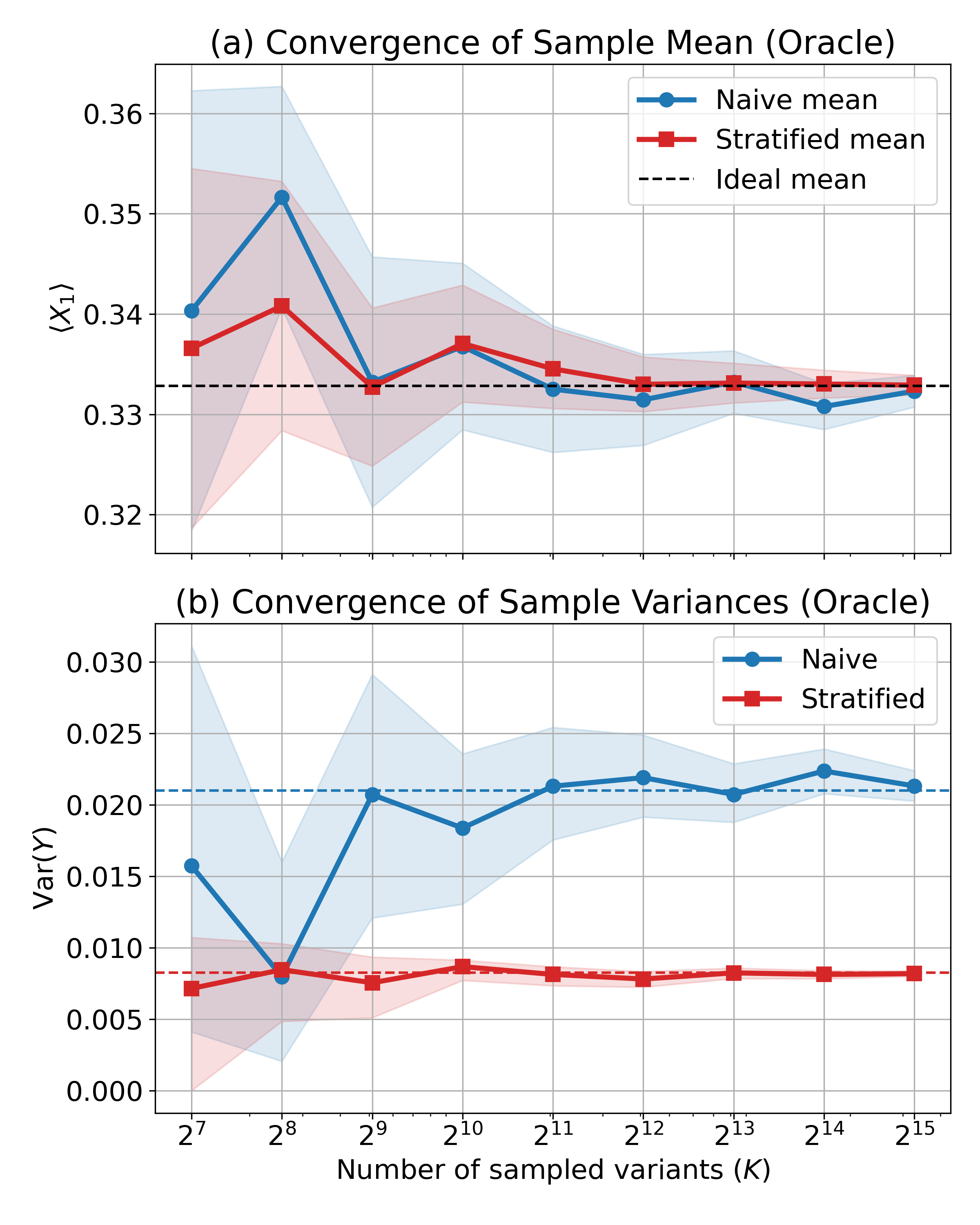}
    \caption{Monte Carlo convergence checks for an exactly enumerable PEC instance in the oracle model. The mean and
    the scaled estimator variances \(K\,\widehat{\Var}(\hat{\mu})\) converge to the exact values obtained by explicit
    summation over all \(4^7\) configurations. Shaded bands denote \(95\%\) bootstrap confidence intervals with
    \(B=1024\) resamples.}
    \label{fig:mc-convergence}
\end{figure}

\subsection{Certificate for apportionment-induced variance perturbations}
\label{subsec:certificate-exact}

The proportional stratification results in Section~\ref{subsec:stratified-qpd-main} are stated for ideal (generally
non-integer) quotas \(K_s=K w_s\). Our implementation uses integer apportionment (Hamilton's method) together with a
residual stratum aggregating all zero-allocation strata, which preserves exact unbiasedness at finite \(K\)
(Appendix~\ref{app:allocation details}). This subsection records a deterministic certificate bounding the deviation
between the implemented estimator variance and the ideal proportional-allocation variance due purely to finite-\(K\)
rounding and residual aggregation.

\paragraph{Deterministic certificate.}
Assume the per-configuration random variable is almost surely bounded as \(|Y|\le B\) (in our PEC/PAI numerics with
Pauli observables, \(B=\ \lVert O\rVert_{\infty}\|g\|_1\) with \(\ \lVert O\rVert_{\infty}=1\)). Let \(K_s\) denote the final integer allocations over the
original strata, and let \((w_*,K_*)\) denote the residual weight and residual shot count after borrowing. Define
\(\mathcal{A}:=\{s:K_s>0\}\) and \(\mathcal{D}:=\{s:K_s=0,\,w_s>0\}\) so that \(w_*=\sum_{s\in\mathcal{D}}w_s\).
Using the perturbation decomposition in Appendix~\ref{app:var-rounding-residual} together with the bounds
\(\sigma_s^2\le B^2\) and \(\Var_{\pi}(\mu_s)\le B^2\), we obtain
\begin{align}
\bigl|\Var(\widehat\mu_{\mathrm{impl}})-\Var(\widehat\mu_{\mathrm{prop}})\bigr|
\le
\mathrm{Cert}_{\mathrm{var}},
\label{eq:cert_var_app_merged}
\end{align}
with 
\[
\mathrm{Cert}_{\mathrm{var}}
:=B^2\left(T_{\mathrm{i}}+T_{\mathrm{ii}}+T_{\mathrm{iii}}\right),
\]
where
\begin{align}
T_{\mathrm{i}} &:= \sum_{s\in\mathcal{A}} w_s^2\left|\frac{1}{K_s}-\frac{1}{K w_s}\right|
\quad\text{(retained rounding)},\\
T_{\mathrm{ii}} &:= w_*\left|\frac{w_*}{K_*}-\frac{1}{K}\right|
\quad\text{(residual quota mismatch)},\\
T_{\mathrm{iii}} &:= \frac{w_*^2}{K_*}
\quad\text{(residual aggregation penalty)}.
\end{align}
Moreover, since \(\bigl|\sqrt{u}-\sqrt{v}\bigr|\le \sqrt{|u-v|}\) for \(u,v\ge 0\), the quantity
\(\sqrt{\mathrm{Cert}_{\mathrm{var}}}\) gives a conservative scale for possible standard-error perturbations induced by
finite-\(K\) apportionment and residual aggregation.

\paragraph{Observed scale.}
On the enumerable PEC instance above, the certificate is conservative by one to three orders of magnitude, consistent
with the use of worst-case bounds. On the main TFIM benchmarks (\(K=8192\)), we observe small residual masses
(\(w_*\sim 10^{-4}\)–\(10^{-3}\) with \(K_*\) of order \(1\)–\(10\)), and \(\sqrt{\mathrm{Cert}_{\mathrm{var}}}\) is a modest
fraction of the empirical standard error. This supports interpreting the reported variance reductions as insensitive to
the finite-\(K\) apportionment details.

\section{Coarsening and hierarchy of permutation-invariant strata}
\label{app:coarsening-main}

Stratification depends on a choice of statistic $S(\underline\ell)$ that partitions the configuration
space into strata. This appendix records a simple but useful monotonicity principle:
\emph{refining} the statistic (i.e.\ using more information) cannot increase the variance of the
\emph{proportional-allocation} stratified estimator, for a fixed measurement model (fixed $R$).

\paragraph{Refinements and monotonicity.}
Let $S$ and $S'$ be two statistics of $\underline\ell$, and say that $S'$ \emph{refines} $S$ if
$S=f(S')$ for some deterministic map $f$ (equivalently, every $S'$-stratum is contained in an
$S$-stratum). Write $\Var(\widehat{Y}_K^{\mathrm{prop}};S)$ for the variance of the proportional
stratified estimator when strata are defined by $S$. Under ideal proportional quotas
$K_s = K\,w_s$ (ignoring integrality), we have
\begin{equation}
\Var(\widehat{Y}_K^{\mathrm{prop}};S')
\;\le\;
\Var(\widehat{Y}_K^{\mathrm{prop}};S).
\label{eq:refinement-monotonicity}
\end{equation}
One way to see this is to note that, under proportional allocation,
\[
\Var(\widehat{Y}_K^{\mathrm{prop}};S)=\frac{1}{K}\,\mathbb E\!\left[\Var(Y\mid S)\right],
\]
and conditioning on a finer $\sigma$-algebra can only decrease conditional variance on average
(i.e.\ $\mathbb E[\Var(Y\mid S')]\le \mathbb E[\Var(Y\mid S)]$ when $S=f(S')$).

\paragraph{The finest statistic and an absolute lower bound.}
The finest possible statistic is the full configuration label
\[
S_{\mathrm{full}}(\underline\ell):=\underline\ell.
\]
In this case each stratum contains a single configuration, so
$\mu_{S_{\mathrm{full}}}=\mu_{\underline\ell}$ and $\sigma^2_{S_{\mathrm{full}}}=\Var(Y\mid\underline\ell)$, and
proportional stratification yields
\begin{align}
\Var(\widehat{Y}_K^{\mathrm{prop}};S_{\mathrm{full}})
&=\frac{1}{K}\sum_{\underline\ell} p(\underline\ell)\,\Var(Y\mid\underline\ell)\nonumber\\
&=\frac{1}{K}\mathbb E_{\underline\ell}\!\left[\Var(Y\mid\underline\ell)\right].
\label{eq:full-stat-lowerbound}
\end{align}
Comparing~\eqref{eq:full-stat-lowerbound} with the law of total variance,
\[
\Var(Y)=\mathbb E_{\underline\ell}[\Var(Y\mid\underline\ell)]
+\Var_{\underline\ell}(\mu_{\underline\ell}),
\]
shows that $S_{\mathrm{full}}$ removes \emph{all}  configuration variance
$\Var_{\underline\ell}(\mu_{\underline\ell})$ and leaves only within-configuration noise (Born-rule
noise under the measurement model). It is therefore an absolute lower bound among stratified
designs that use the same per-configuration estimator $Y$.

\paragraph{Permutation-invariant coarsenings: counts-vector and sign parity.}
In product-form QPDs, $S_{\mathrm{full}}$ has exponentially many strata and consequently calculating the marginal distribution is infeasible. The counts vector $\mathbf M$ introduced in the main text can be viewed as the permutation-invariant coarsening of $S_{\mathrm{full}}$ that remains tractable (polynomially many strata for fixed width $d$) and already guarantees never-worse performance relative to naïve sampling (under ideal proportional quotas). It can be understood as the \textit{finest possible} (and consequently variance-minimal) of the class of statistics that are permutation-invariant.

Further coarsenings of the counts-vector can further reduce preprocessing cost at the price of
more configuration variance. A notable example is \emph{sign parity}, as used in \cite{chen_faster_2025} and very recently in \cite{myers_simulating_2025}.
Partition each local index set into positive and negative coefficient labels,
\[
I_i^+ := \{\ell:\gamma_i(\ell)>0\},\quad
I_i^- := \{\ell:\gamma_i(\ell)<0\},
\]
and define
\[
\mathbf P=(P_+,P_-),\quad
P_\pm := \#\{i:\ell_i\in I_i^\pm\},\quad P_++P_-=\nu.
\]
Equivalently, $\mathbf P$ is obtained from $\mathbf M$ by merging all labels with the same sign, so
each $\mathbf P$-stratum is a union of $\mathbf M$-strata. Hence we have the refinement chain
\[
S_{\mathrm{full}} \;\succ\; \mathbf M \;\succ\; \mathbf P,
\]
and corresponding variance hierarchy (corresponding to successive applications of Theorem~\ref{theorem:proportional variance})
\begin{equation}
\Var(\widehat{Y}_K^{\mathrm{prop}};S_{\mathrm{full}})
\;\le\;
\Var(\widehat{Y}_K^{\mathrm{prop}};\mathbf M)
\;\le\;
\Var(\widehat{Y}_K^{\mathrm{prop}};\mathbf P).
\label{eq:counts-parity-hierarchy}
\end{equation}
The second inequality is tight when the additional multinomial detail in $\mathbf M$ does not change stratum means beyond what is already captured by $\mathbf P$; conversely, it can be strict when different primitive labels (even of the same sign) have systematically different effects. We also note that $\mathbf{P}$ is a universally applicable statistic for all product-form QPDs, since it essentially just partitions based on the amount of `negativity' in a given configuration. However, we choose to focus on the more costly counts-vector scheme since it is the more general construction. 

\paragraph{Cost--variance trade-off.}
The appeal of $\mathbf P$ is that it has only $\nu+1$ values and admits a binomial-style DP with $O(\nu^2)$ time and memory, independent of the original width $d$. A homogeneous version of sign-parity was used in Ref.~\cite{chen_faster_2025}. In symmetric settings (e.g.\ homogeneous single-qubit depolarising PEC), sign parity can capture most of the explainable configuration variance while being far cheaper than the full counts vector. In less symmetric settings, $\mathbf M$ typically yields stronger variance reduction at higher (but still polynomial-in-$\nu$ for fixed $d$) preprocessing cost. More generally, one can interpolate between $\mathbf M$ and $\mathbf P$ by merging subsets of primitive labels whose detailed counts have little impact on the conditional means $\mu_{\mathbf m}$, leading to various polynomial powers in $\nu$ between $\nu^2$ and the full width $\nu ^d$.

\subsection{Counts-vector vs sign-parity stratification on a small PEC instance}
\label{app:counts-vs-parity}

To illustrate this trade-off, we compare counts-vector stratification with sign-parity stratification
on the small PEC instance from Appendix~\ref{app:numerical-methodology}: TFIM Trotter on $n=3$ qubits,
$L=1$ step, open boundaries, and homogeneous single-qubit depolarising noise $p=0.01$ after each gate.
The resulting PEC QPD has $\nu=7$ local factors and $4^7\approx 10^4$ circuit variants.

For the full counts-vector statistic $\mathbf M$ there are
\[
\#\{\mathbf M\}=\binom{\nu+4-1}{4-1}=120
\]
strata, indexed by $\mathbf M=(M_I,M_X,M_Y,M_Z)$ with $M_I+M_X+M_Y+M_Z=7$.
For sign parity $\mathbf P=(P_+,P_-)$ there are only
\[
\#\{\mathbf P\}=\nu+1=8
\]
strata.

Because this instance is small, we compute exact design variances by enumerating all $4^7$ configurations
(oracle model):
\begin{align*}
\Var_{\text{naive}}(Y) &\approx 2.099\times 10^{-2},\\
\Var_{\text{strat}}^{\text{(counts)}}(Y) &\approx 8.246\times 10^{-3},\\
\Var_{\text{strat}}^{\text{(parity)}}(Y) &\approx 8.404\times 10^{-3}.
\end{align*}
Both stratifications reduce configuration variance by $\sim 2.5\times$, and sign parity recovers more than
$98\%$ of the variance reduction achieved by the full counts vector, while using only $8$ strata instead of $120$. 

Fig.~\ref{fig:pec_parity_vs_counts} shows empirical variance estimates converging to these exact baselines as
$K$ increases. This toy example demonstrates that, in specific settings, very coarse statistics can
capture most of the explainable configuration variance at dramatically reduced preprocessing cost. Importantly, the results still verify the hierarchy in~\eqref{eq:counts-parity-hierarchy}. We emphasise that we expect the gap between $\mathbf M$ and $\mathbf P$ to widen in circumstances where the class of negative events is large and when individual primitives within it have drastically different effects. We leave investigating when exactly a coarser vs. finer statistic is optimal (which depends strongly on available resources and problem structure) for future work. 

\begin{figure}
    \centering
    \includegraphics[width=\linewidth]{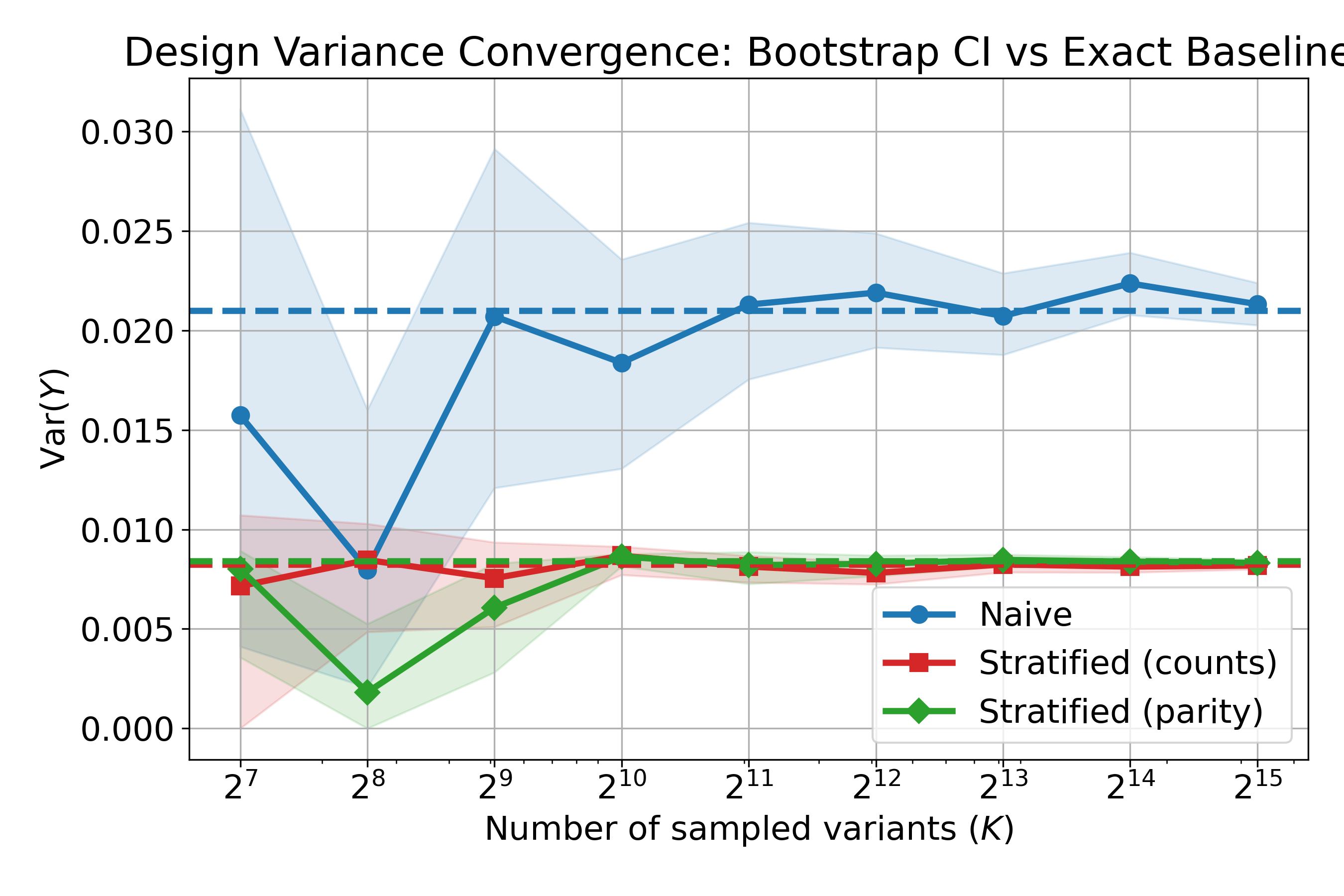}
    \caption{%
    Design-variance convergence for a small PEC instance in the oracle model, comparing naïve sampling (blue) with counts-vector stratification (red) and sign-parity stratification (green). The horizontal dashed lines show the exact design variances obtained by enumerating all $4^7$ PEC configurations: $\Var_{\text{naive}}(Y)\approx 0.02099$, $\Var_{\text{strat}}^{\text{(counts)}}(Y)\approx 0.00825$, and $\Var_{\text{strat}}^{\text{(parity)}}(Y)\approx 0.00840$ (both under proportional allocation). Solid curves show empirical variance estimates as a function of the number of sampled configurations $K$, with shaded bands indicating $95\%$ bootstrap confidence intervals ($M=1024$ resamples). Both stratified schemes converge quickly to their baselines and achieve a $\sim 2.5\times$ reduction relative to naïve sampling, while the coarse sign-parity statistic recovers more than $98\%$ of the variance reduction of the full counts-vector statistic using only $8$ strata instead of $120$.%
    }
    \label{fig:pec_parity_vs_counts}
\end{figure}
\section{Sufficiency, explained variance, and symmetry}
\label{app:sufficiency}

This appendix collects a few formal consequences of the variance-accounting viewpoint used in
Section~\ref{sec:QPDs} and motivates when a stratification statistic $S$ should be expected to help.

\subsection{Explained-variance identity}

Let $Y$ denote the single-configuration scalar random variable used in Monte Carlo estimation
(including QPD reweighting and, in the shot model, Born noise and any $R$-fold averaging),
and let $S=S(\underline\ell)$ be any finite-valued stratification statistic with strata $\mathcal S$.
Write $w_s=\Pr(S=s)$, $\mu_s=\mathbb E[Y\mid S=s]$, and $\sigma_s^2=\Var(Y\mid S=s)$.

\begin{proposition}[Explained-variance interpretation]
\label{prop:explained-variance}
For any statistic $S$,
\[
\Var(Y)
=\mathbb E[\Var(Y\mid S)] + \Var(\mathbb E[Y\mid S]).
\]
Under ideal proportional allocation $K_s=K w_s$,
\[
\Var(\widehat\mu_{\mathrm{prop}})=\frac{1}{K}\sum_{s} w_s\sigma_s^2,
\quad
\Var(\widehat\mu_{\mathrm{naive}})=\frac{1}{K}\Var(Y),
\]
so the design-level variance reduction is
\[
\Var(Y^{\mathrm{naive}})-\Var(Y^{\mathrm{prop}})
=\Var(\mathbb E[Y\mid S]).
\]
\end{proposition}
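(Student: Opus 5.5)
The argument has three parts: establish the law of total variance for $Y$ and $S$, compute the two estimator variances, and subtract.

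\textbf{Step 1: the decomposition.} Since $S$ is finite-valued and $|Y|\le B$ by \eqref{eq:boundedY}, all moments exist and conditional expectations are finite sums. Write $Y-\mu=(Y-\mu_S)+(\mu_S-\mu)$, where $\mu_S:=\mathbb E[Y\mid S]$, then square and take expectations. The cross term $\mathbb E[(Y-\mu_S)(\mu_S-\mu)]$ vanishes by the tower property: conditioning on $S$ makes $(\mu_S-\mu)$ constant, and $\mathbb E[Y-\mu_S\mid S]=0$. The two surviving terms are $\mathbb E[(Y-\mu_S)^2]=\sum_s w_s\sigma_s^2=\mathbb E[\Var(Y\mid S)]$ and $\mathbb E[(\mu_S-\mu)^2]=\sum_s w_s(\mu_s-\mu)^2=\Var(\mathbb E[Y\mid S])$. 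Here we use $\sum_s w_s\mu_s=\mu$ from Appendix~\ref{app:exact-enumeration}. This is exactly \eqref{eq:total-variance-decomp}.

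\textbf{Step 2: the estimator variances.} For the naïve design, the $K$ i.i.d.\ draws give $\Var(Y)/K$, as in \eqref{eq:var-KR-QPD}. For the stratified design, draws are independent across and within strata, and stratum $s$ receives $K_s$ draws with variance $\sigma_s^2$ each. Hence $\Var(\widehat\mu_s)=\sigma_s^2/K_s$, and by independence the total variance is \eqref{eq:var-strat-general-QPD}. Substituting the ideal quotas $K_s=Kw_s$ gives $\frac1K\sum_s w_s\sigma_s^2$. Strata with $w_s=0$ contribute nothing and can be excluded from the start, so no division by zero arises.

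\textbf{Step 3: the reduction.} Define the design-level quantities $\Var(Y^{\mathrm{naive}}):=K\Var(\widehat\mu_{\mathrm{naive}})=\Var(Y)$ and $\Var(Y^{\mathrm{prop}}):=K\Var(\widehat\mu_{\mathrm{prop}})=\mathbb E[\Var(Y\mid S)]$. Subtracting these and applying Step 1 leaves $\Var(\mathbb E[Y\mid S])$.

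The only step needing care is the cross-term cancellation in Step 1. It follows directly from conditioning on the finite partition induced by $S$, and the same computation is already carried out in the proof of Theorem~\ref{theorem:proportional variance}. The rest of the argument is substitution.
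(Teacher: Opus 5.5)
Your proposal is correct and follows essentially the paper's argument (the proof of Theorem~\ref{theorem:proportional variance}, on which this proposition rests): apply the law of total variance, use $\Var(\widehat\mu_{\mathrm{naive}})=\Var(Y)/K$, substitute $K_s=Kw_s$ into the stratified variance formula, and subtract. The only difference is that you derive the law of total variance directly via the cross-term cancellation, whereas the paper simply cites it.
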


Defining
\[
R^2_{\mathrm{eff}}(S):=\frac{\Var(\mathbb E[Y\mid S])}{\Var(Y)},
\]
the ideal variance ratio is $\rho(S)=1-R^2_{\mathrm{eff}}(S)$ (as stated in the main text). Thus $S$
is effective precisely when it explains a large fraction of the single-configuration variability in $Y$.

\subsection{Oracle specialisation and configuration-budget savings}

In the oracle model, the per-configuration scalar is deterministic given $\underline\ell$:
\[
Y=\mu_{\underline\ell}:=\mathbb E[Y\mid \underline\ell].
\]
Then $\Var(Y)$ is purely configuration-to-configuration mean variation, and $R^2_{\mathrm{eff}}(S)$ reduces
to the usual explained fraction $R^2(S)$ of $\mu_{\underline\ell}$ by conditioning on $S$.
In particular,
\begin{equation}
\rho^{\mathrm{oracle}}(S)=1-R^2(S).
\label{eq:rho_oracle_R2_app}
\end{equation}

Moreover, because $\Var(\widehat\mu)=\Var(Y)/K$ in both designs, matching a fixed target variance implies
\[
\frac{\Var(Y^{\mathrm{prop}})}{K_{\mathrm{prop}}}=\frac{\Var(Y^{\mathrm{naive}})}{K_{\mathrm{naive}}}
\Longrightarrow
K_{\mathrm{prop}}=\rho^{\mathrm{oracle}}(S)\,K_{\mathrm{naive}},
\]
so in the oracle setting the configuration-budget savings fraction is exactly
$1-\rho^{\mathrm{oracle}}(S)=R^2(S)$.
Outside the oracle model, shot noise increases $\Var(Y)$ without increasing $\Var(\mathbb E[Y\mid S])$,
so $R^2_{\mathrm{eff}}(S)\le R^2(S)$ and the attainable savings are reduced accordingly.

\subsection{Counts vectors, permutation symmetry, and symmetry breaking}

The counts-vector statistic $\mathbf M=S_{\mathrm{count}}(\underline\ell)$ forgets gate ordering and records
only the multiplicities of local primitive indices.
If the oracle means $\mu_{\underline\ell}$ are invariant under a permutation group $\mathbb G$ of gate positions
that preserves $\mathbf M$, then $\mu_{\underline\ell}$ is constant on each counts-vector stratum and
$\Var(\mu_{\underline\ell}\mid \mathbf M)=0$.
In this mean-sufficient limit, $R^2(\mathbf M)=1$ and $\rho^{\mathrm{oracle}}(\mathbf M)=0$, i.e.\ counts-vector
stratification removes all configuration-to-configuration mean variation and only measurement noise remains.

In realistic circuits this symmetry is only approximate (e.g.\ due to non-commutation across layers in $\nu$), so
$\Var(\mu_{\underline\ell}\mid \mathbf M)>0$ and $R^2(\mathbf M)<1$.
Consequently $\rho^{\mathrm{oracle}}(\mathbf M)>0$, and growth of $\rho^{\mathrm{oracle}}$ with circuit depth can be
read as a symmetry-breaking effect: the counts vector explains a smaller fraction of mean variation as depth increases.

\section{Cost models, pilot schemes, and adaptive allocation}
\label{app:cost-pilots}

The main text compares configuration-sampling designs at a fixed measurement model, i.e.\ at a fixed number of Born-rule repetitions $R$ per configuration. In that setting, the relevant single-sample object is the per-configuration average $Y^{(R)}$, and the estimator satisfies
\[
\Var(\widehat{Y}_{K,R})=\frac{1}{K}\Var\!\bigl(Y^{(R)}\bigr).
\]
Consequently, variance ratios between sampling designs translate directly into constant-factor reductions in the number of configurations $K$ required to achieve a target precision.

In hardware experiments, budgets are typically expressed in terms of total cost (e.g.\ wall-clock time), which couples configuration count $K$ and shot count $R$ and motivates joint budgeting and pilot-based variance estimation. A detailed treatment of such hardware-relevant cost models and optimal budgeting between $K$ and $R$ under a joint circuit- and shot-level variance decomposition is given in Ref.~\cite{aharonov_reliable_2025}. Here we only record the basic link between our variance ratios and cost models, and then discuss adaptive (pilot/Neyman-type) allocation
across strata. 

\subsection{Connection to cost models via the variance decomposition}
\label{app:cost-model}

For any fixed configuration-sampling design (naïve or stratified), the law of total variance gives
\begin{equation}
\Var\!\bigl(Y^{(R)}\bigr)
=
\frac{1}{R}\,\mathbb E_{\underline\ell}\!\left[\Var(\tilde Y\mid\underline\ell)\right]
+\Var_{\underline\ell}\!\left(\mu_{\underline\ell}\right),
\label{eq:var-singleconfig-app}
\end{equation}
where $\tilde Y$ denotes a single reweighted shot and $Y^{(R)}$ is the average of $R$ i.i.d.\ shots at fixed $\underline\ell$, with $\mu_{\underline\ell}:=\mathbb E[\tilde Y\mid\underline\ell]$. The first term is Born-rule noise and scales as $1/R$; the second term is configuration variance and is independent of $R$. 

Stratification modifies \eqref{eq:var-singleconfig-app} only through the configuration-variance term. In the ideal proportional model, replacing naïve configuration sampling by stratified sampling reduces the configuration variance by a constant factor $\rho\le 1$, and therefore reduces $\Var(\widehat Y_{K,R})$ by the same constant factor at fixed $(K,R)$. Under any hardware cost model that constrains $(K,R)$ (or allows nonuniform $R_k$), this improvement can be interpreted as a constant-factor reduction in the cost required to reach a fixed target precision, once the remaining budgeting problem is solved. Related analysis of such joint budgeting appears with more hardware error-mitigation focus in \cite{aharonov_reliable_2025}.

\subsection{Pilot and Neyman allocation across strata}
\label{app:pilot-neyman}

For a fixed stratification statistic $S$ and fixed repetitions $R_k\equiv R$, classical sampling theory shows that the variance-minimising allocation across strata is the Neyman rule
\[
K_s^\star
=K\,\frac{w_s\,\sigma_s}{\sum_t w_t\,\sigma_t},
\qquad
\sigma_s^2:=\Var\!\bigl(Y^{(R)}\mid S=s\bigr),
\]
which yields
\[
\Var(\widehat Y_{K,R}^{\min})
=\frac{1}{K}\Bigl(\sum_s w_s\sigma_s\Bigr)^2
\le
\frac{1}{K}\sum_s w_s\sigma_s^2
=
\Var(\widehat Y_{K,R}^{\mathrm{prop}})
\]
by Cauchy--Schwarz. We focus on proportional allocation because it depends only on stratum masses $w_s$, which are available exactly from the DP, whereas Neyman allocation requires (possibly approximate) within-stratum variance
information $\sigma_s$.

A standard approximation is a two-stage pilot scheme: expend a small fraction of the budget to obtain rough within-stratum variance estimates $\widehat\sigma_s^2$, then allocate the remaining configurations according to $K_s\propto w_s\widehat\sigma_s$. Pilot data can be included in the final estimator, so no samples are discarded; the trade-off is that allocation decisions are made using noisy variance estimates. Such pilot-based budgeting is also natural in hardware pipelines, since one typically needs variance estimation even for the naïve protocol (see, e.g., Algorithm~2 in Appendix~B of \cite{aharonov_reliable_2025}). In the stratified setting, the additional granularity is that one estimates $\sigma_s^2$ per stratum rather than a single global variance. We leave a detailed empirical study of pilot/Neyman schemes for QPD stratification to future work.

\section{General channel randomisation viewpoint}
\label{app:general-channel-rand}

The main text develops all variance statements in the concrete setting of product-form QPDs, where circuit variants are labelled by multi-indices $\underline\ell$ and the configuration law $p(\underline\ell)$ factorises over gates. This QPD problem structure is convenient a operational point of view since it admits an efficient DP and sampling procedure, but the mathematics translate readily into more general settings. In this appendix we briefly record how the same variance-accounting and stratification ideas extend to \emph{arbitrary} classical randomisation over quantum channels.

\subsection{From QPD indices to arbitrary channel ensembles}

Let $\rho_0$ be a fixed input state, $\{M_x\}$ a fixed POVM, and $O=\sum_x O_x M_x$ a fixed observable, as in the main text. Consider now an \emph{arbitrary} ensemble of CPTP maps
\[
\Lambda \in \{\lambda\},\quad
\Pr(\Lambda=\lambda)=p(\lambda),
\]
which we interpret as the ``configuration'' random variable. For each realisation $\Lambda=\lambda$, the hybrid protocol:
\begin{itemize}
  \item applies $\lambda$ to $\rho_0$,
  \item measures with $\{M_x\}$, obtaining $X\sim p(x\mid\lambda)$,
  \item maps $X$ to a scalar $Z=O(X)$,
  \item and optionally rescales by a weight $w(\lambda)$ to produce $Y=w(\lambda)\,Z$.
\end{itemize}
In Sec.~\ref{sec:QPDs}, we assumed that $p(\lambda)$ and $w(\lambda)$ were chosen so that $Y$ is unbiased for some target expectation (typically the ideal expectation under a reference channel), but this assumption can be easily dropped to allow for bias or even more general problem settings entirely.

All of the variance formulas used in the main text follow directly by applying the law of total variance to this joint law $p(y,\lambda)=p(y\mid\lambda)p(\lambda)$:
\begin{itemize}
  \item the decomposition into a Born-rule term and a classical channel term;
  \item the shot-accounting formula for $\Var(\widehat Y_{K,R})$ in terms of $K$ configuration draws and $R$ repetitions per configuration (analogue of Eq.~\eqref{eq:var-KR-QPD});
  \item the stratified-variance formula with statistic $S=C(\Lambda)$ and the hierarchy
  \[
  \Var(\widehat{Y}_K^{\min})
  \;\le\;
  \Var(\widehat{Y}_K^{\mathrm{prop}})
  \;\le\;
  \Var(\widehat{Y}_K^{\mathrm{naive}})
  \]
  for Neyman, proportional, and naïve allocation (Sec.~\ref{subsec:stratified-qpd-main}).
\end{itemize}
No step in those derivations uses product structure or QPD-specific properties; only the existence of a discrete channel ensemble and an unbiased scalar estimator is required. The caveat is that to make this variance framework operational in the general setting is non-trivial in general.

In QPD language, the random variable $\Lambda$ is simply relabelled as the multi-index $\underline\ell$, the ensemble $p(\lambda)$ becomes the QPD-induced ensemble $p(\underline\ell)$, and the weight $w(\lambda)$ takes the familiar form $\|g\|_1 \,\mathrm{sign}(g(\underline\ell))$. The QPD setting is therefore a special case of a more general ``channel randomisation'' picture.

\subsection{Stratification beyond product-form ensembles}

From this viewpoint, a stratification statistic is just a measurable map
\[
S = C(\Lambda),
\]
which partitions the channel ensemble $p(\lambda)$ into strata labelled by $s\in\mathcal S$, with marginal probability $w_s=\Pr(S=s)$ and conditional law $p(\lambda\mid S=s)$. Given any such statistic, and a way to:
\begin{enumerate}
  \item evaluate or approximate $w_s$, and
  \item sample $\lambda$ conditionally on $S=s$ (possibly from an approximate distribution)
\end{enumerate}
one can implement proportional (or perhaps pilot-based Neyman) stratified sampling exactly as in Sec.~\ref{sec:QPDs}. The variance guarantees are identical: at fixed measurement model and fixed number of configuration draws $K$, stratified sampling never increases variance compared to naïve sampling, and strictly reduces it whenever the stratum means differ.

The special role of product-form QPDs in the main text is therefore \emph{algorithmic}. For general channel ensembles there is no reason to expect the stratum masses $w_s$ or the conditional laws $p(\lambda\mid S=s)$ to admit efficient exact computation. In generic cases one would need to rely on approximate methods (e.g.\ importance sampling, MCMC over configurations, or heuristic groupings of channels) to implement stratification.

In product-form QPDs, by contrast:
\begin{itemize}
  \item the ensemble is indexed explicitly by multi-indices $\underline\ell$;
  \item the product-distribution $p(\underline\ell)$ factorises over gates;
  \item permutation-invariant statistics such as the counts vector $\mathbf M$ and sign parity $\mathbf P$ are natural and easy to define; and
  \item for these statistics the Poisson–multinomial structure yields an exact dynamic programme which computes both $w_s$ and conditional samplers in time polynomial in the circuit depth $\nu$ (for fixed local width $d$).
\end{itemize}
This is what makes counts-vector (and related permutation-invariant) stratification \emph{practically} attractive in the QPD setting. Additionally, since QPDs experience an exponential variance blow-up with the circuit-depth, variance optimisation is a highly relevant goal in practice. In other randomisation settings variance reduction may not be so important.

\subsection{Other randomisation schemes}

Many hybrid protocols outside the QPD family fit into this general channel-randomisation template: randomised compiling, Pauli or Clifford twirling, randomised product formulae/time evolution, classical shadows and related measurement schemes, and stochastic-phase estimation, among others. In all these cases, one can in principle:
\begin{itemize}
  \item identify the induced channel ensemble $p(\lambda)$ over compiled circuits, time steps, or measurement settings;
  \item choose statistics $S(\Lambda)$ that capture relevant structure (e.g.\ numbers of non-identity twirling gates, light-cone properties, coarse features of the random unitary); and
  \item apply the same stratification logic and variance decompositions as in the QPD case.
\end{itemize}

Working out efficient stratification schemes for these more general settings requires problem-specific analysis, since the clean product structure and Poisson–multinomial DP may no longer be available in general (or may be intractable). Nevertheless, the conceptual message is simple: the ``extra'' configurational variance induced by any channel-randomisation scheme can, in principle, be partially removed by stratified sampling over channels. Variance reduction techniques have been investigated previously outside the QPD setting in e.g. for randomised time evolution \cite{kiss_importance_2023}. 

Product-form QPDs provide a particularly transparent testbed where this idea can be made both rigorous and algorithmically explicit. It will be an interesting line of research to investigate how the ideas presented here can generalise.

\section{PAI and PEC constructions used in the TFIM benchmark}
\label{app:qpd-instances}

This appendix records the two product-form QPD instances used in the TFIM Trotter experiments of
Section~\ref{sec:numerical results}: probabilistic angle interpolation (PAI) and probabilistic error cancellation (PEC).
Both fit the local-QPD framework of Section~\ref{sec:QPDs}: each protocol specifies (i) a local index set \(I_i\),
(ii) implementable CPTP primitives \(\mathcal C_i(\ell_i)\), and (iii) real coefficients \(\gamma_i(\ell_i)\), inducing a
product-form circuit-level ensemble over configuration strings \(\underline\ell\). In both cases we stratify by the
counts vector \(S(\underline\ell)\equiv \mathbf M(\underline\ell)\), with stratum weights and conditional sampling
implemented via the DP of Appendix~\ref{app:counts-dp} and residual-aware allocation via Appendix~\ref{app:allocation details}.

\subsection{Probabilistic angle interpolation (PAI)}
\label{app:qpd-instances:pai}

PAI \cite{koczor_probabilistic_2024} implements a target single-qubit rotation in expectation on hardware that supports
only a discrete set of rotation angles. Each ideal rotation is expressed as a signed combination of three implementable
rotations about the same axis; sampling from this local QPD yields an unbiased estimator whose local variance inflation
is controlled by the corresponding local 1-norm.

\paragraph{Single-gate PAI decomposition.}
Fix a Pauli (or SWAP-type) rotation generated by a Hermitian involution \(G\),
\[
R_G(\theta):=e^{-i\theta G/2},
\quad
G^\dagger=G,
\quad
G^2=\mathbb I,
\]
and write the associated channel as \(\mathcal R_G(\theta)(\rho):=R_G(\theta)\rho R_G(\theta)^\dagger\).
Assume the hardware supports a discrete grid of angles \(\Theta_k:=k\Delta\) with step
\[
\Delta=\frac{2\pi}{2^{B_{\mathrm{PAI}}}},
\]
where \(B_{\mathrm{PAI}}\) is the angular precision (in bits). For a target angle \(\theta\), choose \(k\) such that
\[
\theta=\Theta_k+\vartheta,
\quad
0\le \vartheta<\Delta,
\]
and define three implementable primitives
\[
\mathcal C_1:=\mathcal R_G(\Theta_k),\quad
\mathcal C_2:=\mathcal R_G(\Theta_{k+1}),\quad
\mathcal C_3:=\mathcal R_G(\Theta_k+\pi).
\]
PAI provides coefficients \(\gamma_\ell(\vartheta)\in\mathbb R\) such that the target channel admits the exact local QPD
\begin{equation}
\mathcal R_G(\Theta_k+\vartheta)
=
\gamma_1(\vartheta)\,\mathcal C_1
+\gamma_2(\vartheta)\,\mathcal C_2
+\gamma_3(\vartheta)\,\mathcal C_3 .
\label{eq:pai-local-qpd}
\end{equation}
Define the local 1-norm and categorical sampling distribution
\[
\|\boldsymbol\gamma(\vartheta)\|_1:=\sum_{\ell=1}^3|\gamma_\ell(\vartheta)|,
\quad
p(\ell\mid\vartheta):=\frac{|\gamma_\ell(\vartheta)|}{\|\boldsymbol\gamma(\vartheta)\|_1}.
\]
Sampling follows the standard QPD rule: draw \(\ell\sim p(\cdot\mid\vartheta)\), implement \(\mathcal C_\ell\), and
rescale the measured observable outcome by
\(\|\boldsymbol\gamma(\vartheta)\|_1\,\mathrm{sign}(\gamma_\ell(\vartheta))\).
This yields an unbiased single-gate estimator for the ideal rotation channel, with variance inflation controlled by
\(\|\boldsymbol\gamma(\vartheta)\|_1^2\). Closed-form expressions for \(\gamma(\vartheta)\) are given in
Ref.~\cite{koczor_probabilistic_2024}.

\paragraph{Mapping to the product-form QPD notation.}
Consider a circuit with \(\nu\) parametrised rotations, where gate \(i\) has generator \(G_i\) and target angle \(\theta_i\).
For each \(i\), choose the nearest notch \(\Theta_{k_i}\) and write \(\theta_i=\Theta_{k_i}+\vartheta_i\) with
\(\vartheta_i\in[0,\Delta)\). The local PAI decomposition has uniform width \(d=3\) and matches
Section~\ref{sec:QPDs} by taking:
\begin{itemize}
  \item index set \(I_i=\{1,2,3\}\);
  \item implementable channels
  \(\mathcal C_i(1)=\mathcal R_{G_i}(\Theta_{k_i})\),
  \(\mathcal C_i(2)=\mathcal R_{G_i}(\Theta_{k_i+1})\),
  \(\mathcal C_i(3)=\mathcal R_{G_i}(\Theta_{k_i}+\pi)\);
  \item coefficients \(\gamma_i(\ell):=\gamma_\ell(\vartheta_i)\).
\end{itemize}
The resulting circuit-level QPD has the standard product form
\[
\mathcal U_{\mathrm{circ}}
=
\sum_{\underline\ell\in\{1,2,3\}^\nu}
g(\underline\ell)\,\mathcal U(\underline\ell),
\;
g(\underline\ell)=\prod_{i=1}^\nu \gamma_i(\ell_i),
\;
\|g\|_1=\prod_{i=1}^\nu \|\boldsymbol\gamma_i\|_1,
\]
with configuration distribution \(p(\underline\ell)=|g(\underline\ell)|/\|g\|_1\).

\paragraph{Counts-vector stratification for PAI.}
A PAI configuration is the index string \(\underline\ell\in\{1,2,3\}^\nu\) specifying whether each rotation uses the lower
notch (\(\ell=1\)), upper notch (\(\ell=2\)), or antipodal rotation (\(\ell=3\)). The induced counts vector
\[
\mathbf M=(M_1,M_2,M_3),
\quad
M_\ell:=\#\{i:\ell_i=\ell\},
\]
with $M_1+M_2+M_3=\nu$ forgets gate positions and records only primitive multiplicities. We stratify with \(S(\underline\ell)\equiv \mathbf M\).
The DP in Appendix~\ref{app:counts-dp} computes \(w_{\mathbf m}=\Pr(\mathbf M=\mathbf m)\) for the generally
inhomogeneous product distribution induced by \(\{p_i(\cdot)\}\), and provides an exact sampler for
\(\underline\ell\mid \mathbf M=\mathbf m\). Integer proportional allocations and exact unbiasedness at finite \(K\) use
the residual-aware procedure of Appendix~\ref{app:allocation details}.

\subsection{Probabilistic error cancellation (PEC)}
\label{app:qpd-instances:pec}

This subsection specifies the PEC model used in Section~\ref{sec:numerical results}: a gate-independent single-qubit
depolarising noise model, the inverse-channel QPD, and the induced product-form configuration ensemble.

\paragraph{Gate-independent depolarising noise.}
We assume a single-qubit depolarising channel of strength \(p\) applied after each logical unitary acting on that qubit:
\begin{equation}
\mathcal D_p(\rho)
:=
(1-p)\rho+\frac{p}{3}\bigl(X\rho X+Y\rho Y+Z\rho Z\bigr).
\label{eq:single-qubit-depol}
\end{equation}
In the Pauli-transfer representation, \(\mathcal D_p\) fixes \(I\) and contracts the Bloch components by
\(\lambda_p=1-\tfrac{4p}{3}\).
Given an ideal gate channel \(\mathcal U^{(i)}\) acting on one or two qubits, the corresponding hardware-realised noisy
gate is
\[
\widetilde{\mathcal U}^{(i)}:=\mathcal N^{(i)}\circ \mathcal U^{(i)},
\]
where \(\mathcal N^{(i)}\) is the tensor product of \(\mathcal D_p\) on the qubits in the support of \(\mathcal U^{(i)}\)
(i.e.\ \(\mathcal N^{(i)}=\mathcal D_p\) for single-qubit gates and \(\mathcal N^{(i)}=\mathcal D_p\otimes \mathcal D_p\)
for two-qubit gates).

\paragraph{Inverse-channel QPD for depolarising noise.}
PEC inserts an inverse noise map so that, as linear superoperators, \(\mathcal D_p^{-1}\circ \mathcal D_p\) equals the
identity. Since \(\mathcal D_p\) is diagonal in the Pauli-transfer representation, the inverse linear map satisfies
\(I\mapsto I\) and \(P\mapsto \lambda_p^{-1}P\) for \(P\in\{X,Y,Z\}\).
While \(\mathcal D_p^{-1}\) is not CPTP for \(p>0\), it admits a 4-term QPD over Pauli conjugation channels
\(\mathcal P_P(\rho):=P\rho P\), namely
\begin{equation}
\mathcal D_p^{-1}
=
\sum_{P\in\{I,X,Y,Z\}}
\gamma(P)\,\mathcal P_P,
\label{eq:depol-inverse-qpd}
\end{equation}
with coefficients
\[
\gamma(I)=\frac{\lambda_p+3}{4\lambda_p},
\quad
\gamma(X)=\gamma(Y)=\gamma(Z)=\frac{\lambda_p-1}{4\lambda_p}.
\]
We define the local 1-norm and sampling probabilities
\[
\|\boldsymbol\gamma\|_1:=\sum_P|\gamma(P)|,
\quad
p(P):=\frac{|\gamma(P)|}{\|\boldsymbol\gamma\|_1},
\]
and signed CPTP primitives \(\mathcal C(P):=\mathrm{sign}(\gamma(P))\,\mathcal P_P\), so that
\eqref{eq:depol-inverse-qpd} matches the standard local-QPD form~\eqref{eq:local-QPD-main} with uniform width \(d=4\).
For a two-qubit gate with independent depolarising noise on each leg, we take the tensor-product inverse
\(\mathcal D_p^{-1}\otimes \mathcal D_p^{-1}\), i.e.\ each noisy \emph{leg} contributes one independent 4-term
single-qubit QPD over \(\{I,X,Y,Z\}\).

\paragraph{Circuit-level product QPD for PEC.}
Let \(\nu\) denote the total number of single-qubit noise locations (“noisy legs”) after expanding each two-qubit gate
into two independent legs. At each location \(i\in[\nu]\) we have the same local QPD
\[
\mathcal N^{(i)}{}^{-1}
=
\sum_{\ell_i\in\{I,X,Y,Z\}}
\gamma(\ell_i)\,\mathcal C(\ell_i),
\]
acting on the corresponding qubit. Expanding the product over \(\nu\) locations yields the product-form circuit-level QPD
\[
\mathcal U_{\mathrm{circ}}
=
\sum_{\underline\ell\in\{I,X,Y,Z\}^\nu}
g(\underline\ell)\,\mathcal U(\underline\ell),
\quad
g(\underline\ell)=\prod_{i=1}^{\nu}\gamma(\ell_i),
\]
where \(\underline\ell=(\ell_1,\dots,\ell_\nu)\) records which Pauli conjugations are applied by the inverse-noise gadgets
and \(\mathcal U(\underline\ell)\) is the resulting “dressed” circuit.

\paragraph{Counts-vector stratification for PEC.}
Under the induced product sampling distribution \(p(\underline\ell)\propto |g(\underline\ell)|\), a configuration is a
Pauli label string \(\underline\ell\in\{I,X,Y,Z\}^\nu\). The associated counts vector
\[
\mathbf M=(M_I,M_X,M_Y,M_Z),
\quad
\sum_{P\in\{I,X,Y,Z\}} M_P=\nu,
\]
records how often each Pauli channel is selected across noisy legs, independent of position. We stratify with
\(S(\underline\ell)\equiv \mathbf M\). Relative to naïve PEC sampling, the only operational change is how
\(\underline\ell\) is chosen: we allocate a deterministic number of draws to each \(\mathbf m\) (via the residual-aware
integer proportional allocation of Appendix~\ref{app:allocation details}), and then sample \(\underline\ell\) conditionally
within each stratum using Appendix~\ref{app:counts-dp}. All subsequent steps (running the dressed circuit and reweighting
outcomes by the standard PEC weight \(w(\underline\ell)=\|g\|_1\,\mathrm{sign}(g(\underline\ell))\)) are unchanged.

\end{document}